\newcommand{\s}{\mathsf{s}}
\newcommand{\PPP}{$\mathsf{PPP}$}
\newcommand{\MBSs}{\Phi_m}
\newcommand{\SBSs}{\Phi_s}
\newcommand{\F}{\mathrm{F}}
\newcommand{\power}[1]{\mathrm{P}_{#1}}
\newcommand{\res}{\mathrm{W}}
\newcommand{\BW}{\mathrm{W}}
\newcommand{\UEdensity}{\lambda_{u}}
\newcommand{\BSprocess}{\Phi_b}
\newcommand{\UEprocess}{\Phi_{u}}
\newcommand{\ULUEprocess}{\Phi_{ul}}
\newcommand{\DLUEprocess}{\Phi_{dl}}
\newcommand{\DLfrac}{\eta}
\newcommand{\SINR}{\mathsf{SINR}}
\newcommand{\dlos}{\mathrm{D}_\mathrm{LOS}}
\newcommand{\plos}{p_\mathrm{LOS}}
\newcommand{\bwidth}{\mathrm{W}}
\newtheorem{thm}{{\bf Theorem}}
\newtheorem{cor}{Corollary}
\newtheorem{rem}{Remark}
\newtheorem{lem}{Lemma}
\newtheorem{assumption}{Assumption}
\newtheorem{approximation}{Approximation}
\theoremstyle{definition}
\newtheorem{definition}{Definition}
\begin{document}

%----------------------------------------------------------------------
% Title Information, Abstract and Keywords
%----------------------------------------------------------------------
\title{Performance of Dynamic and Static TDD in Self-backhauled Millimeter Wave Cellular Networks}
\author{Mandar N. Kulkarni, Jeffrey G. Andrews and Amitava Ghosh\thanks{Email:
$\mathtt{\{mandar.kulkarni@,jandrews@ece.\}utexas.edu,amitava.}$ $\mathtt{ghosh@nokia-bell-labs.com}$. M. Kulkarni and J. Andrews are with the University of Texas at Austin, TX. A. Ghosh is with Nokia Bell Labs, IL.}}
\maketitle

\begin{abstract}
Initial deployments of millimeter wave (mmWave) cellular networks are likely to be enabled with self-backhauling. In this work, we propose a random spatial model to analyze uplink (UL) and downlink (DL) SINR distribution and mean rates corresponding to different access-backhaul and UL-DL resource allocation schemes in a self-backhauled mmWave cellular network with Poisson point process (PPP) deployment of users and base stations. In particular, we focus on heuristic implementations of static and dynamic time division duplexing (TDD) for access links with synchronized or unsynchronized access-backhaul (SAB or UAB) time splits. We propose PPP approximations to characterize the distribution of the new types of interference encountered with dynamic TDD and UAB. These schemes offer better resource utilization than static TDD and SAB, however potentially higher interference makes their choice non-trivial and the offered gains sensitive to different network parameters, including UL/DL traffic asymmetry, user load per BS or number of slave BSs per master BS. One can harness notable gains from UAB and/or dynamic TDD only if backhaul links are designed to have much larger throughput than the access links. 
\end{abstract}

\section{Introduction}
Self-backhauling offers a simple cost-saving strategy to enable dense millimeter wave cellular networks\cite{Taori15,Ran14,SinJSAC14}. A self-backhauled network has two types of base stations (BSs) -- master BSs (MBSs) and slave BSs (SBSs). SBSs wirelessly backhaul users' data to/from the fiber backhauled MBSs through either a direct wireless connection or over multiple SBS-SBS hops, sharing the spectrum with access links\cite{selfbackpatent06}. A fundamental problem for designing a self-backhauled network is to split the available time-frequency resources between uplink (UL) and downlink (DL) and for the access and backhaul links. In this work, we develop a generic random spatial model for studying the resource allocation problem in two hop self-backhauled mmWave cellular networks, with a focus on comparing static and dynamic time division duplexing (TDD) with synchronized or unsynchronized access-backhaul (SAB or UAB). 

\subsection{Dynamic TDD with unsynchronized access-backhaul:-- motivation and prior work}
Conventionally, a network-wide static split of resources is done between UL and DL, meaning that every BS follows a common UL-DL split of time-frequency resources. Such a static split can be very inefficient in dense networks wherein the load per base station is highly variable, as shown in Fig.~\ref{fig:DTDDmotiv}. Although the network has overall $50\%$ UL users, the fraction of UL users per BS varies from $16\%$ to $100\%$, and thus a network wide $50-50$ split between UL and DL resources is wasteful. Dynamic TDD is a class of scheduling schemes wherein every BS is free to choose its own UL-DL split\cite{Li00,Shen12}. Widespread use of this TDD scheme was challenging for sub-6GHz networks owing to cross-interference between UL transmissions in one cell and DL transmissions in neighboring cells\cite{Li00,Shen12}. Since DL transmissions generally have more power than UL, dynamic TDD generally hurts UL signal to interference plus noise ratio (SINR). At mmWave frequencies, however, dynamic TDD is expected to perform much better given the likely noise-limited behaviour due to directionality and large bandwidth \cite{Ran14,Rois15,GuptaKul2016}. Furthermore, the significance of enabling dynamic TDD in future cellular networks is predicted to be even higher for meeting ultra low latency and high throughput requirements of the future wireless technologies\cite{NokiaTDD, Eric16}. Stochastic geometry has been used to quantify the cross UL-DL interference effects through calculating the {\em SINR distribution} in sub-6GHz cellular \cite{Yu15}, device-to-device enhanced networks \cite{Sun15} and UL mmWave cellular networks\cite{GuptaKul2016} but there is no comprehensive UL-DL {\em rate} analysis with dynamic TDD. In this work, we characterize the gains with dynamic TDD in mmWave cellular networks for UL and DL through explicit mean rate formulas as a function of network parameters and a simple interference mitigation scheme.

Incorporating relays in cellular networks was an afterthought, primarily for coverage enhancement, in current deployments of cellular networks. Two-hop relaying was introduced in 3GPP release 10\cite[Ch. 18]{Dahlman14}. However, mmWave cellular networks are expected to have dense deployments right from the start to provide sufficient coverage overcoming the enhanced blockage effects and to meet the desired 5G data rates for enabling extreme mobile broadband applications\cite{Ghosh14,BaiHea14}. Thus, a simple cost saving strategy to enable flexible deployments is to have a fraction of BSs wirelessly backhauling data to the rest which have fiber backhaul connectivity, that motivates self-backhauled mmWave cellular networks. Traditionally, in-band implementation of relay networks is restricted to synchronized access-backhaul (SAB), wherein the access and backhaul links are active on non-overlapping time slots\cite[Ch. 18]{Dahlman14}. However, from resource allocation perspective, an MBS needs more backhaul slots than SBSs in self-backhauled networks. This is not possible with the conventional SAB implementation. An example is shown in Fig.~\ref{fig:UABmotiv} wherein there are 2 SBSs connected to an MBS. With SAB, the second SBS is silent in a backhaul slot when first SBS is scheduled by the MBS. In fact, the second SBS could have utilized the unscheduled backhaul slots for communicating with its UEs. This issue will be magnified if there are tens or hundreds of SBSs connected to an MBS. An SBS {\em poaching} the unscheduled backhaul slots for access is said to employ an unsynchronized access-backhaul (UAB) strategy, wherein access and backhaul links need not be scheduled on orthogonal resource blocks. Introducing the above mentioned implementation of UAB, however, comes at a cost of increasing interference on the backhaul links which makes it non-trivial to choose UAB over SAB. Again, the subdued interference effects at mmWave make UAB attractive for practical implementations. UAB has been implicitly incorporated in algorithmic solutions to the resource allocation problem in sub-6GHz relay networks\cite{Viswanathan05} and more recently in mmWave self-backhauled networks\cite{Rois15,Nazmul17}\footnote{The term ``integrated access-backhaul" coined in \cite{Nazmul17,GhoshATT16} by Qualcomm and AT$\&$T is same as ``UAB" in this work, although our heuristic implementation has a more specific form described in Section~\ref{sec:sysmodel}.} In this work, we capture the tradeoff between increasing interference and better resource allocation with UAB through our random spatial model, and the analysis can be used to compute optimal {\em poaching probabilities} (defined in Section~\ref{sec:scheduling}) to strike a balance. In \cite{SinJSAC14}, UAB was implicitly employed, although the focus was on noise-limited mmWave cellular networks. Previous stochastic geometry analysis of relay networks, like in \cite{Lu15,Tabassum16,Sharma16}, did not incorporate UAB and also was focused on sub-6GHz cellular networks. 
\begin{figure*}
  \centering
\subfloat[Dynamic TDD: Varying fraction of UL users per BS. Triangles are BSs and $(x,y)$ are number of UL and DL UEs.]
{\label{fig:DTDDmotiv}{\includegraphics[width=0.9\columnwidth]{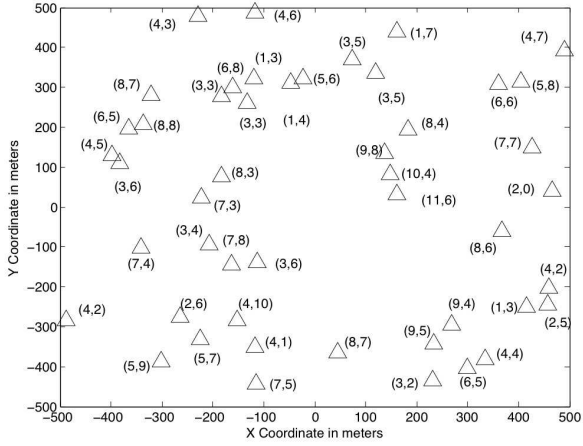}}}
\subfloat[UAB: Need more backhaul slots at MBS than SBS.]
{\label{fig:UABmotiv}{\includegraphics[width= 0.9\columnwidth]{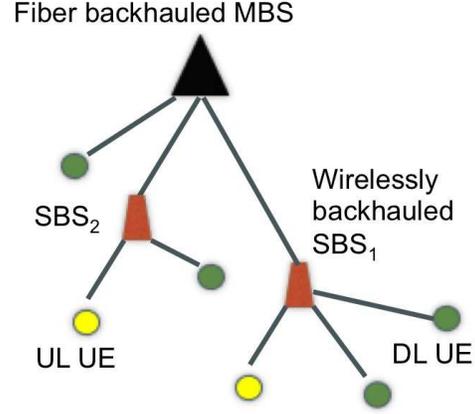}}}
\caption{Motivation for dynamic TDD and UAB.}
 \label{fig:motiv}
\end{figure*}
\subsection{Contributions}
\label{sec:contributions}
{\bf UL and DL analysis of dynamic TDD in mmWave cellular networks.}
This is the first work to our knowledge to analyze UL and DL SINR distribution and mean rates in dynamic TDD enabled mmWave cellular networks. We consider a time-slotted system and prioritize all initial slots in a typical frame for DL scheduling and later slots for UL scheduling. Such a prioritization is shown to have inherent UL interference mitigation and the variation of SINR across time slots can be as large as $10-15$ dB. This translates to some gain in mean rate as well, but is more crucial for decreasing UL SINR outage probabilities. PPP deployment for users and base stations was assumed for the analysis. 

{\bf UL and DL analysis of mmWave self-backhauled cellular networks with unsynchronized access-backhaul.}
We compare the achievable mean rates with SAB and UAB in self-backhauled mmWave cellular networks. The optimal number of slots to be exclusively allocated for access is shown to be non-increasing with UAB as compared to SAB. A PPP approximation is proposed and validated for characterizing the interference distribution with UAB, which we believe can have a variety of applications as mentioned in Section~\ref{sec:conclusion}. 

{\bf Engineering insights.}
The comparison of resource allocation schemes considered in this paper is fundamentally dependent on more than ten system parameters, and thus it is not possible to enumerate concrete regimes wherein one strategy will outperform another. Also, dynamic TDD may be the preferred choice over static TDD for DL users but not for UL users, and UAB with no exclusive access slots may be desirable for the users connected to SBSs but not for those connected to MBSs. The analytical formulae provided in this paper provides a transparent approach to compare the resource allocation schemes for different networks and propagation settings in terms of mean rates and SINR distributions of a typical UL and DL user in the network. Dynamic TDD and UAB {\em usually} outperform or at least provide similar performance to load aware static TDD and SAB in terms of mean rate of a typical user in millimeter wave cellular networks operating with large bandwidths (order of GHz). The gains of dynamic TDD over static TDD are larger for low load, and asymmetric traffic scenarios. Load aware static TDD can still be preferable over dynamic TDD in interference-limited highly loaded scenarios with symmetric UL and DL traffic requests on an average per BS. We further find that there is no need for asymmetric traffic or low UE load for gains with UAB over SAB and we just need sufficiently large number of SBS per MBS.  Self-backhauling is indeed a low cost coverage solution that can enable flexible deployments, but not particularly useful to enhance mean rates if the same antenna array is used by SBSs for both access and backhaul links. Employing higher spectral efficiency backhaul links is important to harvest the benefits of dynamic TDD and UAB. 

\section{System Model}
\label{sec:sysmodel}
\subsection{Spatial distribution of base stations and users}
Let $\MBSs$ and $\SBSs$ denote independent PPPs on $\mathbb{R}^2$ of MBSs and SBSs with density $\lambda_m$ and $\lambda_s$ BSs/km$^2$. Let $\BSprocess$ denote the superposition of the two BS PPPs and $\lambda_b = \lambda_m+\lambda_s$ denote its density. User equipments (UEs) are distributed as an independent homogeneous PPP $\;\UEprocess$ with density $\UEdensity$ UEs/km$^2$  on $\mathbb{R}^2$. A fraction $\DLfrac$ of UEs have DL requests and the rest of them UL. $\ULUEprocess$ and $\DLUEprocess$ denote the UL and DL UE point processes with densities  $ (1-\eta)\lambda_u$ and $\eta\lambda_u$, respectively. UEs always have data to transmit/receive. All devices are half duplex. 
\begin{table}
\caption{Notation summary and default numerical parameters}
\centering
\label{tab:notation}
\begin{tabulary}{\columnwidth}{|C | C| C|}
\hline
{\bf Notation} & {\bf Parameter(s)} & {\bf Value(s) if applicable} \\\hline
$\Phi_u$,$\BSprocess$, $\MBSs$,$\SBSs$ & UE, BS, MBS and SBS PPP on $\mathbb{R}^2$ & --\\\hline
$\lambda_u, \lambda_b,$ $\lambda_m, \lambda_s$ & Density of UE, BS, MBS and SBS PPP & 200, 100, 20, 80 (per km$^2$)  \\\hline
$N_{u},N_{d},$ $N_{s}$ & Number of UL UEs, DL UEs and SBSs. Add subscript $X$ for BS at $X$ & --\\\hline
$X^*$, $X^{**}$ & $X^*$ is BS serving UE at origin and $X^{**}$ is MBS serving $X^*\in\SBSs$&--\\\hline
$\power{m},\power{s},$ $\power{u}$ & Transmit powers & 30, 30, 20 dBm\cite{Ghosh14} \\\hline
$\Delta_m,\Delta_s,$ $\Delta_u$ & Half power beamwidth & $10^{o},10^{o},60^o$  \cite{Ghosh14,BaiHea14}
\\\hline
$G_{m},G_{s},$ $G_{u}$ & Main lobe gain & $24, 24, 6$ dB \cite{Akd14,Taori15,BaiHea14} \\\hline 
$g_{m},g_{s},$ $g_{u}$ & Side lobe gain & $-4, -4, -14$ dB \cite{BaiHea14}\\\hline
$B_\nu$, $\mathcal{A}_\nu$ & Association bias and probability towards BS of tier $\nu\in\{m,s\}$ & $B_s = B_m = 0$ dB \\\hline
$f_c$, $\res$  & Carrier frequency and bandwidth & 28 GHz, 200 MHz \\\hline
$\plos$, $\dlos$ & Blockage parameters & $0.3,\; 200$ m \cite{AndBai16} \\\hline
$\alpha_l,\alpha_n$ & LOS, NLOS path loss exponents & 2.1, 3.4\cite{Ghosh14}\\\hline
$\mathrm{C}_0$ & 1m reference distance omnidirectional path loss & $\left(3\times 10^8/4\pi f_c\right)^2$\\\hline
$\sigma^2$ & thermal noise (in dBm) & $-174 + 10\log_{10}(\res)  + 5$\\\hline
$\eta, \delta, \F$ & Fraction of DL UEs, fraction of access slots, frame size & 0.5, 0.5, 1\\\hline
$\ell,\mu$ & Access/backhaul or LOS/NLOS link & $\ell\in\{a,b\}$, $\mu\in\{l,n\}$\\\hline 
$t$ & Tier of BS PPP & $t\in\{m,s\}$\\\hline
$i$ & Slot index & $1\leq i\leq \F$\\\hline
$w_a, w_b$ & Resource allocation scheme in access and backhaul subframe &  $w_a\in\{S,D\}$, $w_b\in\{\mathrm{UAB},\mathrm{SAB}\}$ \\\hline
\end{tabulary}
\end{table}
\subsection{TDD frames and scheduling}
\label{sec:scheduling}
In the following discussion, UL denotes UE to BS links for access and SBS to MBS links for backhaul. Similarly, DL denotes the BS to UE links for access and MBS to SBS for backhaul.   

Fig.~\ref{fig:tddframe}(a) shows the TDD frame structure. Each frame consists of 4 subframes for DL access, UL access, DL backhaul, and UL backhaul. There are $\F_{ad}$, $\F_{au}$, $\F_{bd}$, $\F_{bu}$ slots, each of duration $\mathrm{T}$, in the 4 subframes. We denote by $\F_{a} = \F_{ad}+\F_{au}$, $\F_{b} = \F_{bd}+\F_{bu}$, and $\F  = \F_{a}+\F_{b}$.  We add a subscript $X$ to each of these to denote the sub-frame size for BS at $X\in\BSprocess$. The terminology $i^{\text{th}}$ slot would refer to the $i^{\text{th}}$ slot starting from the beginning of the TDD frame and $i$ varies from $1$ to $\F$. We neglect the slots allocated for control signals and subframe switching\cite{Ghosh16}, although this can be incorporated by scaling the rate estimates in this work by a constant factor. 

All BSs allocate $\delta$ fraction of $\F$ for access and rest for backhaul. If $\delta\F <1$ then in every time slot a coin is flipped with this probability to decide whether the slot is for access or backhaul, which is synchronously adopted by all BSs. Optimization over $\delta$ is  done numerically based on mean rate analysis in Section~\ref{sec:results}. Allowing different BSs to have a different $\delta$ is possible but for analytical tractability we do not consider such a scenario. Thus, $\F_a = \lceil\delta \F\rceil$ with probability $\delta \F-\lfloor\delta \F\rfloor$, and $\F_a = \lfloor\delta \F\rfloor$ otherwise.

Let $\gamma_{\ell,w,X}$ denote the fraction of slots allocated for DL transmissions in subframe of type $\ell\in\{a,b\}$ by BS at location $X$,  $w\in\{S,D\}$ denote static and dynamic TDD schemes when $\ell=a$, and $w\in\{\text{SAB},\text{UAB}\}$ denote synchronized and unsynchronized access-backhaul schemes when $\ell = b$. More on these schemes is discussed in the following text. The above notation implies that $\F_{ad,X} = \lceil \F_a \gamma_{a,w,X} \rceil$ with probability $\F_a \gamma_{a,w,X}-\lfloor\F_a \gamma_{a,w,X}\rfloor$, and $\F_{ad,X} = \lfloor\F_a \gamma_{a,w,X}\rfloor$ otherwise. Similarly for $\F_{bd,X}$ by replacing $\gamma_{a,w,X}$ with $\gamma_{b,w,X}$ and $\F_a$ with $\F - \F_a$.  
\subsubsection{Scheduling in access subframes}
We consider the following schemes for choosing $\gamma_{a,w,X}$. In each slot, a BS randomly schedules an UL/DL UE uniformly from the set of connected UEs. 
\begin{itemize}
\item {\bf Static TDD.} Here, $\gamma_{a,S,X} = \gamma_{a}$, which is a fixed constant independent of $X\in\BSprocess$. This can be a completely load unaware scheme if $\gamma_a$ is irrespective of $\eta$, and a load aware scheme if $\gamma_a$ is dependent on $\eta$. We focus on a load aware scheme wherein $\gamma_a = \eta$.
\item {\bf Dynamic TDD.} Now, we let $\gamma_{a,D,X}$ to be dependent on the BS location $X$ so that every BS can make their own choice of UL/DL time split in an access subframe. We focus on $\gamma_{a,D,X} =\mathds{1}(N_{d,X}>0) \frac{N_{d,X}}{N_{u,X}+N_{d,X}}$, where $N_{u,X}$ and $N_{d,X}$ are the number of UL and DL users connected to the BS at $X$. Several variations of this policy are possible, such as adding a different optimized exponent $n$ to $N_{u,X},N_{d,X}$ or incorporating other network parameters to capture the disparity of the UL/DL service rate. These variations are left to future work.
\end{itemize}
\subsubsection{Scheduling in backhaul subframes}
Like the access subframe, it is possible to have static and dynamic TDD schemes for deciding the fraction of DL slots in a backhaul subframe. However, for analytical simplicity we assume $\gamma_{b,w,X} = \eta$, which is fixed for all $X\in\BSprocess$. Hierarchical scheduling is assumed in the backhaul subframe. First the MBSs make a decision of scheduling available SBSs with at least one UL/DL UE in a UL/DL backhaul subframe with uniformly random SBS selection for each slot. A SBS has to adhere to the slots allocated by its serving MBS for backhauling. Let the set $\mathcal{F}$ represent sub-frame lengths that are fixed across all BSs irrespective of the scheduling strategies. $\F_a$ and $\F_{bd}$ are two permanent members of $\mathcal{F}$. Further, $\F_{ad}$ is also an element  of $\mathcal{F}$ under static TDD scheme. 
\begin{figure*}
\centering
\subfloat[A TDD Frame.]{\includegraphics[width = \columnwidth]{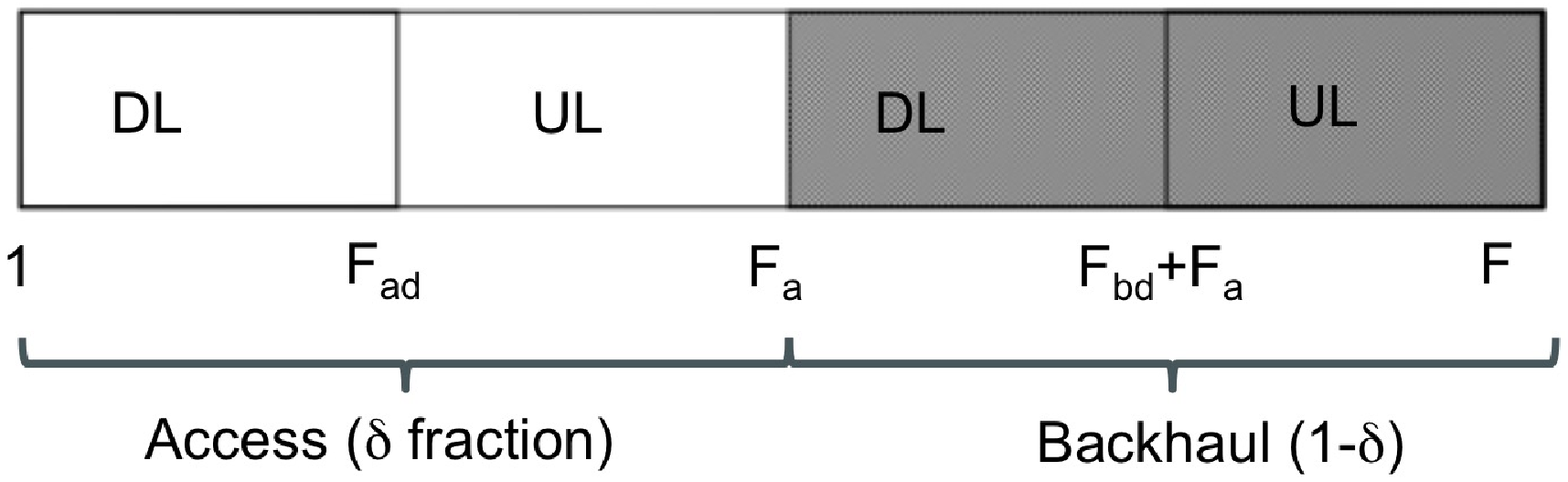}}\\
\subfloat[ Figure shows (i) Heirarchical scheduling in backhaul subframe with UAB or SAB. (ii) Dynamic TDD can lead to different DL subframe sizes in access subframe. ]{\includegraphics[width = 1.7\columnwidth]{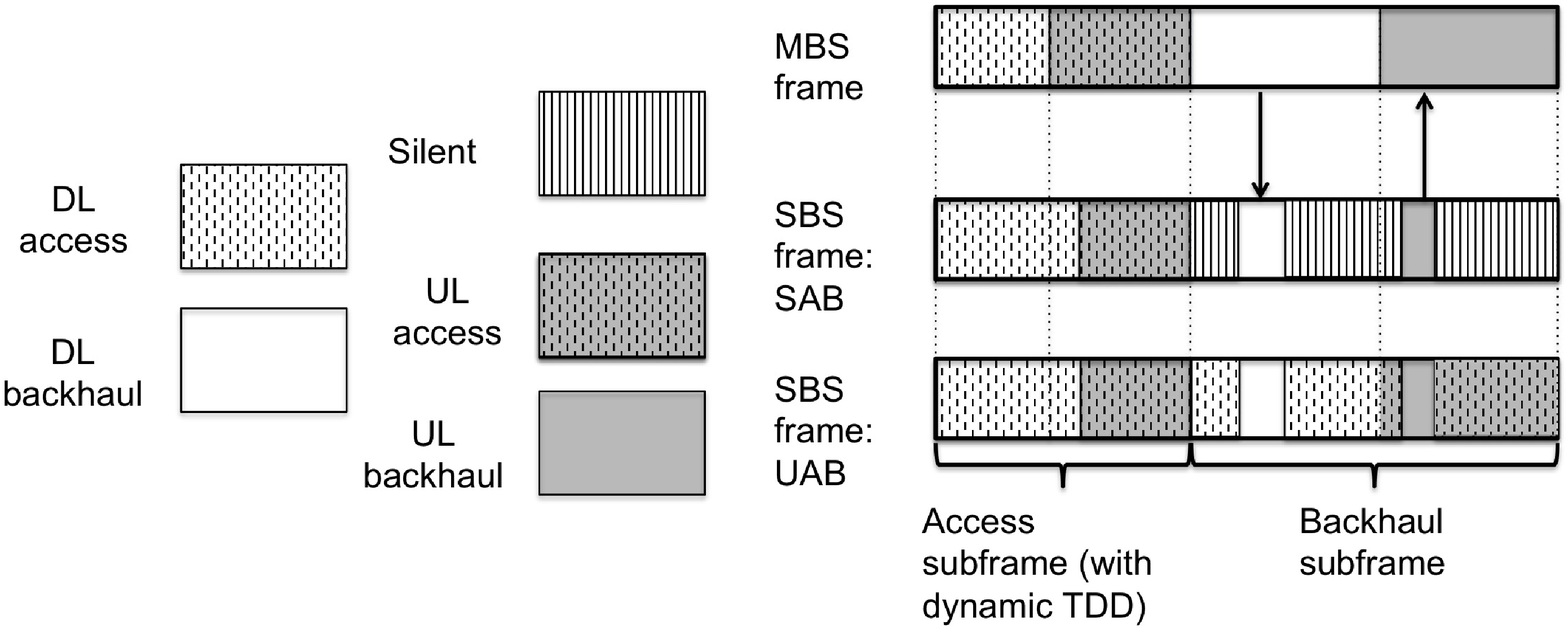}}
\caption{TDD frame structure.}
\label{fig:tddframe}
\end{figure*}
Although $\F_{bd}$ is fixed, a version of dynamic TDD is employed through UAB. 
\begin{itemize}
\item {\bf Synchronized access-backhaul (SAB).} SBS remains silent in unscheduled backhaul slots.
\item  {\bf Unsynchronized access-backhaul (UAB) or poaching.} SBS schedules an UL/DL access link in the unscheduled backhaul slots. We focus here on a simple policy wherein UL access {\em poaches}
only UL backhaul slots and similarly for DL. We assume that the SBS schedules an UL UE independently with probability $p_{ul}$ in an unscheduled backhaul UL slot and stays silent otherwise. $p_{dl}$ is the probability of scheduling a DL UE in a backhaul DL slot. 
\end{itemize}
\begin{rem}The analysis of in-band backhauling in this paper follows for out-of-band backhauling as well. In this case, a fraction $\delta$ of total bandwidth is allocated to access. 
\end{rem}
\subsection{Received signal power model}
The received signal at $X\in\BSprocess\cup \UEprocess$ from $Y\in\BSprocess\cup \UEprocess$ with $X\neq Y$ in the $i^\text{th}$ time slot of a typical TDD frame is given by
$P_r(X,Y) = \mathrm{C}_0 \mathrm{P}_Y h_{i,X,Y} G_{i,X,Y} L(X,Y)^{-1}$,
where $\mathrm{C}_0$ is the reference distance omnidirectional path loss at 1 meter, $\mathrm{P}_Y$ is the transmit power and is equal to either $\power{m}$, $\power{s}$ or $\power{u}$ depending on whether $Y\in\MBSs$, $Y\in\SBSs$ or $Y\in\UEprocess$. $h_{i,X,Y}$ is the small scale fading, $G_{i,X,Y}$ is the product of transmit and receive antenna gains and $L(X,Y) = ||X-Y||^{\alpha_{X,Y}}$ is path loss between $X$ and $Y$. Here, $\alpha_{X,Y}$ is $\alpha_l$ with probability $p_l(||X-Y||)$ and $\alpha_n$ otherwise. There are several models proposed for $p_l(d)$ to incorporate blockage effects\cite{BaiHea14,SinJSAC14,Renzo15,Akd14}. The generalized LOS ball model proposed in \cite{SinJSAC14} and validated in \cite{Kul14,AndBai16} is used in this work. As per this model, $p_l(d) = \plos$ if  $d\leq \dlos$ and $p_l(d) = 0$ otherwise. Let $p_n(d) = 1-p_l(d)$.

Here, $h_{i,X,Y}$ are independent and identically distributed (i.i.d.) to an exponential random variable with unit mean for all $X,Y\in\BSprocess\cup \UEprocess$. However, $h_{i,X,Y}$ can be arbitrarily correlated across time slots $i$. If the access link under consideration is a desired signal link, $G_{i,X,Y} = G_t G_u$, where $G_{(.)}$ denotes main lobe gain and $t\in\{m,s\}$. Similarly, $G_{i,X,Y} = G_m G_s$ for the backhaul desired signal link. An interfering link has antenna gain distribution as follows\cite{BaiHea14}, 
\begin{equation*}
G_{i,X,Y} \stackrel{d}{=} \begin{cases}
\Psi_{t_1,t_2} & \text{ if }X\in \Phi_{t_1}, Y\in\Phi_{t_2}\\&
 \text{ with }t_1, t_2\in\{m,s,u\} \text{ and }t_1\neq t_2, \\
\Psi_{t,t} & \text{ if }X, Y \in\Phi_t\text{ with }t\in\{m,s,u\} , \\
\end{cases}
\end{equation*}
where $\stackrel{d}{=}$ denotes equality in distribution. Further, $G_{i,X,Y}$ is independently distributed with $G_{i,X',Y'}$ if at least one of $X\neq X'$ or $Y\neq Y'$. Also these gains are independent of $h_{i,X,Y}$, $\forall X, Y\in \BSprocess\cup \UEprocess$. Here, the probability mass functions (PMF) of $\Psi_{t,t}$ and $\Psi_{t_1,t_2}$ are given in Table~\ref{tab:antenna}, $g_{(.)}$ and $\Delta_{(.)}$ represent the side-lobe gain and 3-dB beam width.
\begin{table}
\centering
\begin{tabular}{|c|c|c|}
\hline 
Parameter & Value & Probability\\ \hline
\multirow{4}{*}{$\Psi_{t_1,t_2}$} & $G_{t_1} G_{t_2} $ & $\frac{\Delta_{t_1} \Delta_{t_2}}{4\pi^2}$ \\ 
& $G_{t_1} g_{t_2}$ & $ \frac{\Delta_{t_1}(2\pi-\Delta_{t_2})}{4\pi^2}$\\ 
& $g_{t_1} G_{t_2}$ & $ \frac{(2\pi-\Delta_{t_1})\Delta_{t_2}}{4\pi^2}$ \\ 
& $g_{t_1} g_{t_2} $ & $\frac{(2\pi-\Delta_{t_1})(2\pi-\Delta_{t_2})}{4\pi^2}$ \\ 
\hline
\multirow{3}{*}{$\Psi_{t,t}$} & $G^2_t$ & $\frac{\Delta^2_t}{4\pi^2}$ \\ 
& $G_t g_t$ & $ \frac{2\Delta_t(2\pi-\Delta_t)}{4\pi^2}$\\ 
& $g^2_t$ & $ \frac{(2\pi-\Delta_t)^2}{4\pi^2}$ \\ \hline
\end{tabular}
\caption{Antenna gain distributions}
\label{tab:antenna}
\end{table}
\subsection{User and SBS association}
\label{sec:sysassoc}
Each user associates with either an MBS or SBS. Each SBS connects to an MBS. A typical user at $Z\in\UEprocess$ associates to BS at $X^*(Z)\in\BSprocess$ iff
$
X^*(Z) = \arg\max_{Y\in \Phi_t, t\in\{m,s\}} \power{t}L(Y,Z)^{-1} G_t B_t,
$
where $B_t$ denotes a bias value multiplied to the received signal power from a BS of tier $t\in\{m,s\}$. Since the association criterion maps every point in $\UEprocess$ to a unique point in $\BSprocess$ almost surely, the mean number of users connected to a typical MBS is $\lambda_u\mathcal{A}_m/\lambda_m$, and that to a typical SBS is $\lambda_u\mathcal{A}_s/\lambda_s$\cite{SinDhiAnd13,SinJSAC14}.  Here, $\mathcal{A}_m$ is the probability of associating with a MBS and $\mathcal{A}_s=  1-\mathcal{A}_m$. The derivation of $\mathcal{A}_m$ can be found in Appendix~\ref{sec:App_assocprob}. A SBS at $Z\in\SBSs$ connects to a MBS at $X^*(Z)\in\MBSs$ iff $X^*(Z) = \arg\min_{Y\in \Phi_m} L(Y,Z)$. 
Thus, the mean number of SBSs connected to a typical MBS is $\lambda_s/\lambda_m$. 

\subsection{Load distribution}
\label{sec:loaddist}
Characterizing the load distribution with PPP BSs and UEs even under the simplest setting of nearest BS association is a long-standing open problem\cite{Ferenc2007}. Several papers have assumed an independent load model for tractability \cite{SinDhiAnd13,ElSawy16,YuKim13,Renzo16,SinJSAC14}. Using a similar model, every $X\in\MBSs$ is associated with independent marks $N_{s,X},N_{u,X},N_{d,X}$ representing number of SBSs, UL UEs and DL UEs connected to the MBS. Similarly, every $X\in\SBSs$ is associated with independent marks $N_{u,X},N_{d,X}$. Their distributional assumptions are given as follows \cite{YuKim13,SinDhiAnd13}. 
\begin{assumption}
\label{assumption:1}
Let $\epsilon$ be the mean number of devices (users or SBSs) connected to a typical BS in $\Phi_t\in\{\MBSs,\SBSs\}$. The marginal probability mass function (PMF) of number of devices connected to a tagged and typical BS in $\Phi_b$ is given by $\kappa^*(n)$ and $\kappa(n)$ respectively. 
\begin{equation}
\label{eq:tagged}
\kappa^{*}(n) = \frac{3.5^{3.5}\Gamma(n+3.5)\epsilon^{n-1} \left(3.5+\epsilon\right)^{-n-3.5}}{(n-1)!\Gamma(3.5)}, \text{    for $n\geq 1$}
\end{equation}
\begin{equation}
\kappa(n) = \frac{3.5^{3.5}\Gamma(n+3.5)\epsilon^{n} \left(3.5+\epsilon\right)^{-n-3.5}}{n!\Gamma(3.5)}, \text{     for $n\geq 0$.}
\end{equation}
Thus, the marginal PMF of $N_{s,X},N_{u,X},N_{d,X}$ is denoted as $\kappa_{s,t},\kappa_{u,t},\kappa_{d,t}$ for typical BS $X\in\Phi_t$ and with a superscript $*$ for tagged BS $X$. $\epsilon$ for each of these is given by $\frac{\lambda_s}{\lambda_m}$, $\frac{(1-\eta)\lambda_u\mathcal{A}_t}{\lambda_t}$ and $\frac{\eta\lambda_u\mathcal{A}_t}{\lambda_t}$, respectively.
\end{assumption}

\begin{assumption}
\label{assumption:2}
Let $\epsilon = \lambda_u\mathcal{A}_t/\lambda_t$ be the mean number of users connected to a typical BS in $\Phi_t\in\{\MBSs,\SBSs\}$. The joint PMF of number of UL and DL users connected to a typical BS in $\Phi_t$ is given by $\Upsilon_t(n_1,n_2,3.5)$ for $n_1,n_2\geq 0$, where 
$$\Upsilon_t(n_1,n_2,k)=\frac{3.5^{3.5}}{\Gamma(3.5)}\frac{\eta^{n_2} (1-\eta)^{n_1}}{n_1! n_2!}\frac{\Gamma(n_1+n_2+k)}{\epsilon^{k} \left(1+\frac{3.5}{\epsilon}\right)^{n_1+n_2+k}}.$$
Consider a BS serving the user at origin, then the joint PMF of number of UL and DL users connected to the BS {\em apart from the user at origin} is given by $\Upsilon_t(n_1,n_2,4.5)$ for $n_1,n_2\geq 0$. 
\end{assumption}
A summary of key notation is given in Table~\ref{tab:notation} and Fig.~\ref{fig:tddframe}.

\section{Uplink SINR and rate} 
\label{sec:ULanalysis}
As shown by Fig.~\ref{fig:tddframe}, the SINR distribution will be dependent on the time slot $1\leq i\leq \F$ and the scheduling strategies. Our goal is to compute the mean end-to-end rate of a typical user (UL or DL) at the origin under the various scheduling strategies described before. We analyze the marginal SINR distribution for access and backhaul links as two separate cases. Before going into the details, we first characterize the PMF of the number of DL access slots as follows.
\begin{lem}
\label{lem:Fad}
The PMF of $\F_{ad,w,X}$ $\stackrel{d}{=}\F_{ad,w}$, for  a typical $X\in\Phi_t$ given $\mathcal{F}$ is computed as follows. 
\begin{enumerate}
\item For static TDD, that is $w=S$,
\begin{multline}
\label{eq:FadS}
\mathbb{P}\left(\F_{ad,S,X} = n\big| \F_a\right) = \tilde{\F}_{ad} \mathds{1}\left(\lceil \gamma_{a} \F_a\rceil = n\right)\\+ (1-\tilde{\F}_{ad}) \mathds{1}\left(\lfloor \gamma_{a} \F_a\rfloor=n\right),
\end{multline} 
where $\tilde{\F}_{ad} = \gamma_{a} \F_a- \lfloor \gamma_{a} \F_a\rfloor$.
\item For dynamic TDD, that is $w=D$,
\begin{multline}
\label{eq:FadD}
\mathbb{P}\left(\F_{ad,D,X} = n\big| \F_a\right) \\= \int_{0}^{1} (p_1(n+r-1)-p_2(n+1-r)) \mathrm{d}r,
\end{multline}
where
\begin{multline*}
p_1(r) = \mathds{1}(r> 0)\sum_{n_2=1}^{\infty}\sum_{n_1=0}^{\lceil\frac{n_2 (\F_a-r)}{r}\rceil-1} \Upsilon_t(n_1,n_2,3.5)+\\\mathds{1}(r\leq 0) - \mathds{1}(r = 0)\left(1+\frac{\mathcal{A}_t \lambda_u\eta }{3.5\lambda_t}\right)^{-3.5},
\end{multline*}
\begin{multline*}
p_2(r) = 
\mathds{1}(r> 0) \sum_{n_2=1}^{\infty}\sum_{n_1=0}^{\lfloor\frac{n_2 (\F_a-r)}{r}\rfloor} \Upsilon_t(n_1,n_2,3.5)\\+
\mathds{1}(r\leq 0).
\end{multline*}
\end{enumerate}
\end{lem}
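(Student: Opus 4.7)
The static case is essentially definitional: since $\gamma_{a,S,X} = \gamma_a$ is a deterministic constant independent of $X$, the claim follows by directly applying the randomized-rounding rule from the system model with $\gamma_{a,w,X} = \gamma_a$. I would dispose of this in one line and concentrate the work on the dynamic case.

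For dynamic TDD, the plan is to condition on the load marks $(N_{u,X}, N_{d,X}) = (n_1, n_2)$ of the typical BS in $\Phi_t$ and introduce a uniform-randomization trick. Set $g := \F_a \gamma_{a,D,X}$, so that $g = n_2\F_a/(n_1+n_2)$ when $n_2 \geq 1$ and $g = 0$ when $n_2 = 0$. A short case check (integer $g$, non-integer $g$ with each of $\lfloor g\rfloor = n$ and $\lceil g\rceil = n$, and the degenerate $n_2=0$ case) shows that, with an auxiliary $U \sim \mathrm{Uniform}(0,1)$ independent of everything,
\begin{equation*}
\F_{ad,D,X} \stackrel{d}{=} \lfloor g + U \rfloor.
\end{equation*}
Applying Fubini and rewriting $\{\lfloor g+U\rfloor = n\}$ as the half-open event $\{n-u \leq g < n+1-u\}$ then yields
\begin{equation*}
\mathbb{P}(\F_{ad,D,X}=n \mid \F_a) = \int_0^1 \bigl[\mathbb{P}(g \geq n-u) - \mathbb{P}(g \geq n+1-u)\bigr]\mathrm{d}u.
\end{equation*}

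To finish, I would compute the two survival functions $p_2(r) := \mathbb{P}(g \geq r)$ and $p_1(r) := \mathbb{P}(g > r)$ by unconditioning via the joint PMF $\Upsilon_t(\cdot,\cdot,3.5)$ from Assumption~\ref{assumption:2}. Since $g \geq 0$, both equal $1$ for $r<0$. For $r>0$ the constraint $g \geq r$ (with $n_2 \geq 1$) rearranges to $n_1 \leq n_2(\F_a-r)/r$, giving the summation range $\{0,\dots,\lfloor n_2(\F_a-r)/r\rfloor\}$ that defines $p_2$; the strict version $g > r$ gives $\{0,\dots,\lceil n_2(\F_a-r)/r\rceil - 1\}$, which defines $p_1$. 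At the isolated boundary $r = 0$, Assumption~\ref{assumption:1} gives $\mathbb{P}(N_{d,X}=0) = (1+\eta\lambda_u\mathcal{A}_t/(3.5\lambda_t))^{-3.5}$, which is precisely the subtracted correction term in the definition of $p_1$. Substituting $r \leftarrow 1-u$ in the first summand of the integrand, and using that $p_1$ and $p_2$ agree off the countable support of $g$ (so substituting one for the other does not change the integral), casts the expression into the paper's stated form $\int_0^1[p_1(n+r-1) - p_2(n+1-r)]\mathrm{d}r$.

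The main obstacle is not any single computation but careful bookkeeping at the values of $r$ where $n_2(\F_a-r)/r$ happens to be an integer: these are exactly the points where $p_1 \neq p_2$ and where the distinction between $\lfloor\cdot\rfloor$ and $\lceil\cdot\rceil - 1$ matters. They form a countable, measure-zero set in the $u$-integral, so they do not affect the final value, but they must be tracked carefully to obtain the correct closed-form sums for $p_1$ and $p_2$; the boundary evaluation at $r = 0$ is the same phenomenon at the edge of the support of $g$.
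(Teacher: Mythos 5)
Your proposal is correct and follows essentially the same route as the paper: your uniform-randomization identity $\F_{ad,D,X} \stackrel{d}{=} \lfloor \F_a\gamma_{a,D,X}+U\rfloor$ together with Fubini is exactly the paper's layer-cake step $\mathbb{E}[\Xi\mid\F_a]=\int_0^1\mathbb{P}(\Xi>r\mid\F_a)\,\mathrm{d}r$ in different clothing, and your derivation of $p_1,p_2$ as the strict/non-strict survival functions of $\mathds{1}(N_{d,X}>0)\frac{N_{d,X}\F_a}{N_{u,X}+N_{d,X}}$ via Assumption~\ref{assumption:2} (with the $r=0$ atom from $\mathbb{P}(N_{d,X}=0)$) matches the paper's computation. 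The measure-zero bookkeeping you flag at the atoms of $g$ is handled implicitly in the paper the same way.
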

\begin{proof}
See Appendix~\ref{appendix:subframe}.
\end{proof}
Small tail probabilities of the PMFs in Assumptions~\ref{assumption:1} and \ref{assumption:2} for load values larger than the $\sim6\times$ the mean allows us to compute the infinite sums as finite sums with first $\lfloor\frac{6\mathcal{A}_t\lambda_u}{\lambda_t}\rfloor$ terms. 

\subsection{SINR model for access links}
Access links can be active in both access and backhaul subframes if the BSs operate in UAB. The SINR of a receiving BS at $X^*\in\Phi_t$, where $t\in\{m,s\}$, serving the UL user at origin is given as
$$
\SINR^{ul}_{i,a,w} = \frac{\mathrm{C}_0 \power{u} h_{i,X^*,0} G_u G_t L(X^*,0)^{-1}}{I_{i,m,w}(X^*) + I_{i,s,w}(X^*) + I_{i,u,w}(X^*) + \sigma^2},
$$
where $w\in\{S,D\}$ denotes static and dynamic TDD if $i\leq \mathrm{F}_a$ and $w\in\{\text{SAB}, \text{UAB}\}$ if $i>\mathrm{F}_a$. $I_{i,\nu,w}(Z)$ is the interference power at location $Z\in\BSprocess\cup\UEprocess$ from all active devices of type $\nu\in\{m,s,u\}$ in the $i^\text{th}$ slot and $\sigma^2$ is the noise power. Here, for $\nu \in\{m,s\}$ and $i\leq \F_a$
\begin{multline}
\label{eq:Inu_access}
I_{i,\nu,w}(Z) = \sum_{Y\in\Phi_\nu\backslash\{X^*\}} \mathds{1}(i\leq \F_{ad,w,Y}) \mathds{1}(N_{d,Y}>0) \mathrm{C}_0 \power{\nu} \\\times h_{i,Z,Y} G_{i,Z,Y} L(Z,Y)^{-1}.
\end{multline}
Note that $\Phi_\nu \backslash\{X^*\} = \Phi_\nu$ if $X^*\notin \Phi_\nu$. Similarly, for $i\leq \F_a$
\begin{multline}
\label{eq:Iu_access}
I_{i,u,w}(Z) = \sum_{Y\in \BSprocess\backslash\{X^*\}} \mathds{1}(\F_{ad,w,Y}<i\leq \F_{a}) \mathds{1}(N_{u,Y}>0) \\\times \mathrm{C}_0 \power{u}h_{i,Z,Y'} G_{i,Z,Y'} L(Z,Y')^{-1},
\end{multline}
where $Y'$ is the UL UE scheduled by BS at $Y$. If $i>\F_a$, then 
\begin{multline}
\label{eq:Im_backhaul}
I_{i,m,w}(Z) =\sum_{Y\in\MBSs\backslash\{X^{**}\}} \mathds{1}(\F_a <i\leq \F_{a}+\F_{bd})   \\\times \mathds{1}\left(N_{s,d,Y}>0\right)\mathrm{C}_0\power{m}h_{i,Z,Y} G_{i,Z,Y} L(Z,Y)^{-1},
\end{multline}
where $X^{**}$ is the location of MBS serving $X^*\in\SBSs$ and $N_{s,d,Y}$ is the number of SBS with atleast one DL UE. Similarly, if $N_{s,u,Y}$ is the number of SBS connected to $Y\in\MBSs$ with at least one UL UE,
\begin{multline}
\label{eq:Is_backhaul}
I_{i,s,w}(Z) = \sum_{Y\in\MBSs} \mathds{1}(\F_{a}+\F_{bd}<i\leq \F) \mathds{1}\left(N_{s,u,Y}>0\right)  \\\times \mathrm{C}_0\power{s} h_{i,Z,Y'} G_{i,Z,Y'} L(Z,Y')^{-1} +\mathds{1}(w=\text{UAB})
 \\\times\sum_{Y\in\SBSs\backslash\{X^*\}}\mathds{1}(\F_a <i\leq \F_{a}+\F_{bd}) \mathds{1}\left(N_{d,Y}>0\right) \\\times \xi_{Y} \zeta_{Y} \mathrm{C}_0 \power{s} h_{i,Z,Y} G_{i,Z,Y} L(Z,Y)^{-1},
\end{multline}
where $Y'$ is the SBS scheduled by MBS at $Y\in\Phi_m$. Here, $\zeta_{Y}$ is a Bernoulli random variable (independent across all $Y$) with success probability $p_{dl}\mathds{1}\left(\F_a <i\leq \F_{a}+\F_{bd}\right)+ p_{ul}\mathds{1}\left(\F_a+\F_{bd}< i\leq \F\right)$ and $\xi_{Y}$ is also an indicator random variable denoting whether the SBS is not scheduled by its serving MBS for backhauling in slot $i$ of the typical frame under consideration. Also, 
\begin{multline}
\label{eq:Iu_backhaul}
I_{i,u,w}(Z) = \mathds{1}(w=\text{UAB})\sum_{Y\in \SBSs \backslash\{X^*\}} \mathds{1}(\F_a+\F_{bd}< i\leq \F)\\ \mathds{1}(N_{u,Y}>0) \xi_{Y} \zeta_{Y} \mathrm{C}_0 \power{u} h_{i,Z,Y'} G_{i,Z,Y'} L(Z,Y')^{-1},
\end{multline}
where $Y'\in\ULUEprocess$ is the UL user scheduled by the BS at $Y$.

\Cref{eq:Inu_access,eq:Iu_access,eq:Im_backhaul,eq:Is_backhaul,eq:Iu_backhaul} are applicable for evaluating the UL backhaul, DL access, and DL backhaul SINR distribution as well, although the receiving location $Z$ will be different under each case and is summarized in Table~\ref{tab:links}. Note that an UL access link will be active in a backhaul subframe only in $\F_a+ \F_{bd}\leq i\leq \F$ and $w=$UAB scenario. Thus, to compute UL access SINR, \eqref{eq:Is_backhaul} would have only the first summation term, and \eqref{eq:Im_backhaul} would be zero.  

\begin{rem}[A note on the interfering point processes in \eqref{eq:Inu_access} to \eqref{eq:Iu_backhaul}]
\label{rem:int}
Computing the Laplace transform of interference is a key step in evaluating SINR distribution. Exact expressions are available in literature for interferers generated from a PPP, Poisson cluster process, some special repulsive point processes \cite{andganbac11,Suryaprakash15,Li15}. Note that \eqref{eq:Inu_access} and \eqref{eq:Im_backhaul} have PPP interferers, and thus computing exact Laplace transform is possible. However, \eqref{eq:Iu_access}, \eqref{eq:Is_backhaul} and \eqref{eq:Iu_backhaul} have non-Poisson interfering processes, for which it is highly non-trivial to characterize the Laplace transform. Several approximate PPP models have been proposed in literature for computing Laplace functional of the interfering point processes in \eqref{eq:Iu_access} and first term in \eqref{eq:Is_backhaul}, for example \cite{SinZhaAnd15,ElSawy14,Lee14,Hae16}. We follow a theme of PPP approximations for the same inspired from these works. To compute an approximate Laplace transform of \eqref{eq:Iu_backhaul} and second term in \eqref{eq:Is_backhaul} we propose novel PPP approximations on the same lines as \cite{SinZhaAnd15} and validate these approximations with Monte-Carlo simulations.
\end{rem}

\begin{table}
\centering
\begin{tabular}{|c|c|c|}
\hline
\bf Link & \bf Receiver & \bf Transmitter\\\hline
UL access & $X^*$ & 0\\\hline
UL backhaul & $X^{**}$ & $X^*$\\\hline
DL access & 0 & $X^*$\\\hline
DL backhaul & $X^*$ & $X^{**}$\\
\hline
\end{tabular}
\caption{Transmitter-receiver pairs for computing end-to-end rate of a typical user at origin.}
\label{tab:links}
\end{table}

\subsection{SINR distribution for access links}
\begin{definition}
Conditioned on $\mathcal{F}$, the SINR coverage of a typical UL access link is defined as $\mathsf{S}^{ul}_{i,a,w}(\tau) = \mathbb{P}\left(\SINR^{ul}_{i,a,w}>\tau\big| \mathcal{F}\right)$, if slot $i\leq \F_a$ and $w\in\{S,D\}$. If $i>\F_a$, typical UL UE is scheduled only if  $w=\mathrm{UAB}$ and it connects to a SBS. Thus, the SINR coverage for $i>\F_a$ is given by $\mathsf{S}^{ul,t}_{i,a,w}(\tau) = \mathbb{P}\left(\SINR^{ul}_{i,a,w}>\tau \big| X^*\in\Phi_{t},\mathcal{F}\right)$ for $t=s$.
\end{definition} 
\begin{definition}
The Laplace transform of the interference at a typical UL access receiver at $X^*$ conditioned on the event that the receiving BS is at a distance $R$ and belongs to $\Phi_{t,\mu}$, which is the point process of LOS/NLOS BSs in $\Phi_t$ looking from origin, is given as follows for $\mu\in\{l,n\}$.
$$L^{ul,a,t,\mu}_{i,w}(\mathsf{s},R) =\mathbb{E}\left[\exp\left(-\mathsf{s} I \right)\big| X^*\in\Phi_{t,\mu},||X^*||=R,\mathcal{F}\right],$$ where $I = I_{i,m,w}(X^*)+I_{i,s,w}(X^*)+I_{i,u,w}(X^*)$. 
\end{definition}
\begin{lem}
\label{lem:LaplaceULaccess}
For $i\leq \F_a$, the Laplace transform $L^{ul,a,t,\mu}_{i,w}(\mathsf{s},R)\approx L_m L_s L_u$, where 
\begin{itemize}
\item For $\nu\in\{m,s\}$, $L_\nu  = 1$ if $w = S$ and is given as follows if $w = D$, $$L_\nu \geq \exp\left(-\int_{0}^{\infty}\mathbb{E}\left[\frac{1}{1+\frac{r}{\mathsf{s}\mathrm{C}_0\power{\nu}\Psi_{t,\nu}}}\right]p_{i,D,\nu}\Lambda_\nu(\mathrm{d}r)\right).$$ Exact expression for $L_\nu$ is given in \eqref{eq:longeqLnu}.
\begin{figure*}[!t]
\begin{align}
\label{eq:longeqLnu}
L_\nu  &= \prod_{\mu_1,\mu_2\in\{l,n\}} \exp\left(- \int^{\infty}_{\left(\frac{R^{\alpha_\mu}\power{\nu} G_{\nu} B_{\nu}}{\power{t} G_t B_t}\right)^{1/\alpha_{\mu_1}}}\int^{2\pi}_{0}\mathbb{E}\left[\frac{p_{i,D,\nu} \hat{\lambda}_{\nu,\mu_1,\mu_2}(r,\theta) r}{1+\frac{\left(r^2+R^2-2rR\cos(\theta)\right)^{\alpha_{\mu_2}/2}}{\s \mathrm{C}_0\power{\nu} \Psi_{t, \nu}}}\right]  \mathrm{d}r\mathrm{d}\theta\right).
\end{align}
\hrulefill
\end{figure*}
where  $\hat{\lambda}_{\nu,\mu_1,\mu_2}(r,\theta)$ is equal to $$\lambda_\nu p_{\mu_1}(r)p_{\mu_2}\left(\sqrt{r^2+R^2-2rR\cos\theta}\right),$$
$p_{i,D,\nu} = \sum\limits_{n=i}^{\F_a}\mathbb{P}\left(\F_{ad,D}=n\big| \mathcal{F}\right)$, and $\Lambda_\nu(\mathrm{d}\tau)$ is given in \eqref{eq:LambdadT}. The expectation is with respect to the antenna gains $\Psi_{(.)}$ given in Table~\ref{tab:antenna}.
\item For $w\in\{S,D\}$,
$$
L_u  = \exp\left(-\int_{0}^{\infty}\mathbb{E}\left[\frac{1}{1+ \frac{r}{\s\mathrm{C}_0\power{u}\Psi_{t,u}}}\right]\Lambda(t,\mathrm{d}r)\right),
$$
where the expectation is with respect to the antenna gains $\Psi_{(.)}$ given in Table~\ref{tab:antenna}, $\Lambda(t,\mathrm{d}r) = \sum_{k\in\{m,s\}}p_{i,w,k}\times$
\begin{equation*}
 \left(1-\exp\left(-\Lambda_k\left(r\frac{\power{k}B_k G_k}{\power{t}B_t G_t}\right)\right)\right) \Lambda_k(\mathrm{d}r),
\end{equation*}
with $\Lambda_k(r)$ given in \eqref{eq:fcap_s}. 
\begin{multline*}
p_{i,S,k} = \left(1-\left(1+\frac{\lambda_u\mathcal{A}_k (1-\eta)}{3.5\lambda_k}\right)^{-3.5} \right)\\\times\mathds{1}\left(\F_{ad}<i\leq \F_a \right),
\end{multline*} 
and 
\begin{multline*}
p_{i,D,k} = \mathbb{P}\left(\F_{ad,D}<i\leq \F_a\big| \mathcal{F}\right) \\- \left(1+\frac{\lambda_u\mathcal{A}_k (1-\eta)}{3.5\lambda_k}\right)^{-3.5},
\end{multline*}
which is computed using distribution of $\F_{ad,D}$ given in Lemma~\ref{lem:Fad}. 
\end{itemize}
\end{lem}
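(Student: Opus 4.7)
The plan is to decompose the total interference at the typical UL access receiver $X^*$ into three contributions from independent transmitter processes and reduce each to a Laplace functional computation, done exactly where the interferer process is Poisson and via the PPP approximation flagged in Remark~\ref{rem:int} where it is not. First, conditional on $X^*\in\Phi_{t,\mu}$ with $\|X^*\|=R$ and on $\mathcal{F}$, I would observe that $I_{i,m,w}(X^*)$, $I_{i,s,w}(X^*)$ and $I_{i,u,w}(X^*)$ are conditionally independent: the underlying PPPs $\MBSs$, $\SBSs$, $\UEprocess$ are independent, the fading coefficients $h_{i,X,Y}$ and antenna-gain marks $G_{i,X,Y}$ are independent across distinct pairs, and under Assumptions~\ref{assumption:1}--\ref{assumption:2} the per-BS load variables are i.i.d.\ marks. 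Hence $L^{ul,a,t,\mu}_{i,w}(\s,R)=L_m L_s L_u$.

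Next, for $\nu\in\{m,s\}$: under static TDD every BS uses the common $\F_{ad,S,Y}=\F_{ad}\in\mathcal{F}$, so an UL slot satisfies $i>\F_{ad}$ and no BS transmits DL in slot $i$, giving $L_\nu=1$. Under dynamic TDD, each $Y\in\Phi_\nu$ transmits DL in slot $i$ with probability $p_{i,D,\nu}=\sum_{n=i}^{\F_a}\mathbb{P}(\F_{ad,D}=n\mid\mathcal{F})$, and since $\gamma_{a,D,Y}=0$ whenever $N_{d,Y}=0$, the load activity indicator is already absorbed into $p_{i,D,\nu}$ via Lemma~\ref{lem:Fad}. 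Independent thinning of $\Phi_\nu$ by $p_{i,D,\nu}$ yields a PPP of active interferers, which splits further by the generalized LOS ball into LOS/NLOS sub-PPPs. Writing $Y$ in polar coordinates $(r,\theta)$ centered at $X^*$, applying the PGFL of this thinned PPP, and marginalizing the exponential fading together with the antenna gain $\Psi_{t,\nu}$ yields the exact formula \eqref{eq:longeqLnu}; the lower integration limit excludes the region where a competing BS of tier $\nu$ would have outranked the actual server $X^*$ in biased received power. The stated bound $L_\nu\geq\exp(-\int\cdots)$ follows from dropping this exclusion region and replacing $\|X^*-Y\|$ by the radial distance only (i.e., dropping the angular correction), both of which only inflate the interference and hence lower-bound the Laplace transform.

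The term $L_u$ is the main obstacle because the positions of scheduled UL UEs are not a PPP: each active BS selects one UL UE uniformly from its own load cluster, introducing repulsion among transmitters and dependence on the biased Voronoi-like tessellation. Following Remark~\ref{rem:int} and extending the approximation of \cite{SinZhaAnd15} to multi-tier biased association with LOS/NLOS path loss, I would approximate the scheduled-UE process by an inhomogeneous PPP whose intensity measure $\Lambda(t,\mathrm{d}r)$ collocates each potential interferer with its serving BS and inserts the factor $1-\exp(-\Lambda_k(r\power{k}B_kG_k/(\power{t}B_tG_t)))$ to represent the probability that a point at distance $r$ from the receiver is actually associated with tier $k$. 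The per-tier activity probability $p_{i,w,k}$ reads off directly from Assumptions~\ref{assumption:1}--\ref{assumption:2}: under static TDD it is $\mathds{1}(\F_{ad}<i\leq\F_a)$ times $\mathbb{P}(N_u>0)=1-(1+\lambda_u\mathcal{A}_k(1-\eta)/(3.5\lambda_k))^{-3.5}$, and under dynamic TDD it is $\mathbb{P}(\F_{ad,D}<i\leq\F_a\mid\mathcal{F})$ from Lemma~\ref{lem:Fad} minus the same $\mathbb{P}(N_u=0)$ correction. A single PGFL computation on the approximating PPP, with the fading and $\Psi_{t,u}$ integrated out, produces the stated expression for $L_u$, and multiplying the three Laplace transforms closes the proof.

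The hard part is this last step: justifying the substitution of the true scheduled-UE process by the stated inhomogeneous PPP is a heuristic step rather than an exact reduction, so the argument can only claim an approximation, to be validated against Monte Carlo as mentioned in Remark~\ref{rem:int}. Keeping careful track of which PPPs are exact (the $\Phi_\nu$ thinning) and which are approximate (the scheduled UL UE process), and aligning the resulting intensity measures with the $p_{i,w,k}$ and $\Lambda_k(\cdot)$ bookkeeping above, is where I expect the bulk of the technical care to lie.
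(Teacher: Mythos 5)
Your overall route matches the paper's: factorize the Laplace transform across the three interferer types, compute the BS-tier terms via the PGFL of the LOS/NLOS-split, $p_{i,D,\nu}$-thinned PPP with an association exclusion region, and handle the scheduled-UE process with the PPP approximation of \cite{SinZhaAnd15}. However, three of your justifications are wrong even though the formulas you are aiming at are the right ones. First, the factorization $L^{ul,a,t,\mu}_{i,w}\approx L_mL_sL_u$ is \emph{not} exact conditional independence: for a given interfering BS $Y$ the events $\{i\leq \F_{ad,w,Y}\}$ (DL-active, contributing to $I_{i,\nu,w}$) and $\{\F_{ad,w,Y}<i\leq\F_a\}$ (UL-active, contributing to $I_{i,u,w}$ through its scheduled UE) are complementary, and the interfering UE locations are tied to the same BS processes through association; the paper therefore states this factorization as its Approximation~1 rather than as a consequence of independence of $\MBSs$, $\SBSs$, $\UEprocess$. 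Second, the polar coordinates in \eqref{eq:longeqLnu} are centered at the \emph{origin} (the UE), not at $X^*$: the exclusion $||Y||^{\alpha_{\mu_1}}>R^{\alpha_\mu}\power{\nu}G_\nu B_\nu/(\power{t}G_tB_t)$ comes from the association rule of the UE at the origin, which is why the lower integration limit is $\theta$-independent while the law-of-cosines term $(r^2+R^2-2rR\cos\theta)^{\alpha_{\mu_2}/2}$ sits in the path loss to the receiver $X^*$; centering at $X^*$ swaps these roles and does not reproduce \eqref{eq:longeqLnu}.

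Third, your justification of the lower bound fails: replacing $\|X^*-Y\|$ by $\|Y\|$ does not monotonically inflate the interference, since $\|X^*-Y\|^2-\|Y\|^2=R\left(R-2\|Y\|\cos\theta\right)$ changes sign with $\theta$. The paper's lower bound comes \emph{only} from dropping the exclusion indicator (which adds interferers and hence can only decrease the Laplace functional); once that is done, summing over $\mu_1$ (using $p_l(r)+p_n(r)=1$) and shifting the origin to $X^*$ by stationarity of the PPP collapses the double integral exactly into the propagation-process form $\int_0^\infty\mathbb{E}\left[\cdot\right]p_{i,D,\nu}\Lambda_\nu(\mathrm{d}r)$ — no second bounding step is needed, and the one you propose is not valid.
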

\begin{lem}
\label{lem:LaplaceULaccess2}
For $i>\F_a$ and $w = \mathrm{UAB}$, the Laplace transform $L^{ul,a,t,\mu}_{i,w}(\mathsf{s},R)\approx L_s L_u$, where 
\begin{multline*}
L_s =  \exp\left(-\int_{0}^{\infty}\mathbb{E}\left[\frac{1}{1+ \frac{r}{\s\mathrm{C}_0\power{s}\Psi_{t,s}}}\right]\times\right. \\\Bigg.\left( p_{void}\left(1-\exp\left(-\Lambda_m(r)\right)\right)+\exp\left(-\Lambda_m(r)\right)\right)\Lambda_m(\mathrm{d}r)\Bigg),
\end{multline*}
\begin{multline*}
L_u  = \exp\left(-\int_{0}^{\infty}\mathbb{E}\left[\frac{1}{1+ \frac{r}{\s\mathrm{C}_0\power{u}\Psi_{t,u}}}\right]\frac{\hat{\lambda}}{\lambda_s}\times\right. \\\Bigg.\left(1-\exp\left(-\Lambda_s(r)\right)\right)\Lambda_s(\mathrm{d}r)\Bigg).
\end{multline*}
Here, the expectation is with respect to the antenna gains $\Psi_{(.)}$ given in Table~\ref{tab:antenna}, $$p_{void} = 1-\left(1+\frac{\lambda_{s,u}}{3.5\lambda_m}\right)^{-3.5},$$ 
with $$\lambda_{s,u} = \lambda_s\left(1-\left(1+\frac{\mathcal{A}_s \lambda_u (1-\eta)}{3.5\lambda_s}\right)^{-3.5} \right),$$
\begin{multline*}
\hat{\lambda} = p_{ul}\left(\lambda_s-\left(1-\left(1+\frac{\lambda_s }{3.5\lambda_m}\right)^{-3.5} \right)\lambda_m\right)^+ \times\\
\mathds{1}\left(\F_a+\F_{bd}<i\leq \F\right)\left(1-\left(1+\frac{\lambda_u (1-\eta)\mathcal{A}_s
}{3.5\lambda_s}\right)^{-3.5} \right).
\end{multline*}
\end{lem}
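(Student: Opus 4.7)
The plan is to condition on the tagged receiver $X^*\in\Phi_{t,\mu}$ at distance $R$ together with $\mathcal{F}$, split the interference into three contributions that are conditionally independent under the model, and evaluate the Laplace transform term by term. Because $i>\F_a$ is a backhaul slot and, under UAB, a typical UL access user is scheduled only in a UL backhaul slot ($\F_a+\F_{bd}<i\leq\F$), the slot indicators kill the MBS-interference sum \eqref{eq:Im_backhaul} and the second (``DL-poaching'') summand of \eqref{eq:Is_backhaul}; only the first summand of \eqref{eq:Is_backhaul} and the sum in \eqref{eq:Iu_backhaul} contribute, giving the announced factorization $L^{ul,a,t,\mu}_{i,w}\approx L_sL_u$.

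For $L_s$ each interferer is the (random) SBS chosen by an MBS that has at least one UL-active child, so the interfering point pattern is a Poisson cluster, intractable exactly. Following the theme of Remark~\ref{rem:int} and the construction in \cite{SinZhaAnd15}, I will approximate the scheduled SBS by a point co-located with its parent MBS, then independently thin by the probability that the MBS has an eligible child. The latter is obtained in two applications of Assumption~\ref{assumption:1}: first at the SBS (with mean UL-UE load $\mathcal{A}_s\lambda_u(1-\eta)/\lambda_s$) to obtain the density of UL-active SBSs $\lambda_{s,u}=\lambda_s(1-(1+\mathcal{A}_s\lambda_u(1-\eta)/(3.5\lambda_s))^{-3.5})$; then at the MBS (with mean $\lambda_{s,u}/\lambda_m$) to obtain $p_{void}=1-(1+\lambda_{s,u}/(3.5\lambda_m))^{-3.5}$. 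Plugging the thinned PPP on $\MBSs$ into the PGFL and averaging the exponential fading together with the antenna-gain mark $\Psi_{t,s}$ from Table~\ref{tab:antenna} yields the integral against $\Lambda_m(\mathrm{d}r)$; the bracketed factor $p_{void}(1-\exp(-\Lambda_m(r)))+\exp(-\Lambda_m(r))$ then encodes the joint effect of the cluster-level nonvoid probability and the tagged-parent exclusion of $X^{**}$.

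For $L_u$ the interferers are UL UEs scheduled by poaching SBSs. An SBS poaches iff (i) its parent MBS does not schedule it that slot, (ii) it has at least one UL UE, and (iii) the Bernoulli $\zeta$ with parameter $p_{ul}$ fires. The density of MBSs that do schedule some child for UL backhaul is $(1-(1+\lambda_s/(3.5\lambda_m))^{-3.5})\lambda_m$ by Assumption~\ref{assumption:1}, so the residual density of free-to-poach SBSs is the positive part $(\lambda_s-(1-(1+\lambda_s/(3.5\lambda_m))^{-3.5})\lambda_m)^+$; multiplying by $p_{ul}$ and by the SBS-level non-empty probability $1-(1+\mathcal{A}_s\lambda_u(1-\eta)/(3.5\lambda_s))^{-3.5}$ gives $\hat\lambda$. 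I then co-locate each scheduled UL UE with its serving SBS, again in the spirit of \cite{SinZhaAnd15}, and evaluate the PGFL of the thinned $\SBSs$ with density ratio $\hat\lambda/\lambda_s$ against $\Lambda_s(\mathrm{d}r)$; the $(1-\exp(-\Lambda_s(r)))$ factor in the integrand is the keep probability that a point at range $r$ from the receiver falls in some SBS's association cell under the PPP surrogate.

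The hard part will be justifying the two PPP surrogates in the presence of the correlations produced by hierarchical scheduling and by the double-tagging of $X^*$ and its parent $X^{**}$: the interferer density in $L_s$ couples the MBS process with the void statistics of its SBS children, and the keep/retention factors in both $L_s$ and $L_u$ are heuristic corrections in the style of \cite{SinZhaAnd15,ElSawy14,Lee14,Hae16}. I will not argue their exactness in closed form; instead I will rely on Monte-Carlo validation of the composite Laplace transform, as announced in Remark~\ref{rem:int}.
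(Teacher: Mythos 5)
Your overall route is the paper's: you identify that for $i>\F_a$ under UAB only the first summand of \eqref{eq:Is_backhaul} and the poached-UE term \eqref{eq:Iu_backhaul} survive, replace each non-Poisson interferer process by a PPP surrogate carrying the intensity of its parent process (the paper's Approximation 2), thin by void probabilities computed from Assumption~\ref{assumption:1}, and evaluate the PGFL. Your two-stage computation of $\lambda_{s,u}$ and $p_{void}$, your positive-part density of free-to-poach SBSs in $\hat{\lambda}$, and your $L_u$ (including the $\bigl(1-\exp(-\Lambda_s(r))\bigr)$ association correction) match Appendix~\ref{sec:App_UL_Laplace} essentially verbatim, as does your decision to justify the surrogates by Monte-Carlo validation rather than exactness.

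The gap is in $L_s$. You propose a \emph{uniform} thinning of the $\MBSs$-located surrogate interferers by $p_{void}$ and then assert that the bracket $p_{void}\left(1-e^{-\Lambda_m(r)}\right)+e^{-\Lambda_m(r)}$ encodes a ``tagged-parent exclusion of $X^{**}$.'' That mechanism yields $p_{void}\left(1-e^{-\Lambda_m(r)}\right)$, not the stated bracket, and the interpretation is backwards: the tagged parent is not excluded but forced to be active. The first sum in \eqref{eq:Is_backhaul} runs over all of $\MBSs$, including $X^{**}$, and on the event that $X^*$ is poaching slot $i$ its serving MBS $X^{**}$ has necessarily scheduled some \emph{other} SBS for UL backhaul, so $X^{**}$ contributes an interfering SBS with probability one while every other MBS does so only with probability $p_{void}$. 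The paper folds this into an inhomogeneous retention probability by observing that a point of the surrogate propagation process at level $r$ is the nearest MBS to $X^*$ (hence is $X^{**}$) with probability $e^{-\Lambda_m(r)}$, which gives the weight $1\cdot e^{-\Lambda_m(r)}+p_{void}\left(1-e^{-\Lambda_m(r)}\right)$. Without this observation your construction systematically underestimates the SBS backhaul interference close to the receiver, which is exactly the dominant part; you need to add the special treatment of the nearest surrogate point to recover the lemma.
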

{\em Proof.} See Appendix~\ref{sec:App_UL_Laplace} for proofs of Lemma~\ref{lem:LaplaceULaccess} and Lemma~\ref{lem:LaplaceULaccess2}.
\begin{thm}
\label{thm:ULSINRaccess}
For $i\leq \F_a$, the SINR coverage of a typical UL user is given by $\mathbb{E}\left[\mathsf{S}^{ul}_{i,a,w}(\tau)\right]$ where the expectation is over $\mathcal{F}$. For $i>\F_a$ and $w=\mathrm{UAB}$, the SINR coverage is given by $\mathbb{E}\left[\mathsf{S}^{ul,s}_{i,a,\mathrm{UAB}}(\tau)\right]$.  Here,  $\mathsf{S}^{ul}_{i,a,w}(\tau) =\mathcal{A}_s \mathsf{S}^{ul,s}_{i,a,w}(\tau)+\mathcal{A}_s \mathsf{S}^{ul,s}_{i,a,w}(\tau)$, where $\mathsf{S}^{ul,t}_{i,a,w}(\tau)=$
\begin{multline}
\label{eq:ULSINRdist_access}
\sum_{\mu\in\{l,n\}}\int\limits_{0}^{\infty}\exp \left( \frac{-\tau R^{\alpha_\mu} \sigma^2}{\mathrm{C}_0 \power{u} G_u G_t}\right) L^{ul,a,t,\mu}_{i,w}\left(\frac{\tau R^{\alpha_\mu}}{\mathrm{C}_0 \power{u} G_u G_t }, R\right) \\
\times\prod_{\substack{t'\in\{m,s\},\\\mu'\in\{l,n\},\\ t'\neq t \text{ or }
  \mu'\neq \mu}} F_{t',\mu'}\left(\left(\frac{P_{t'} G_{t'} B_{t'} R^{\alpha_{\mu}}}{P_t B_t G_t}\right)^{\frac{1}{\alpha_{\mu'}}}\right) \frac{f_{t,\mu}(R)}{\mathcal{A}_t}  \mathrm{d}R, 
\end{multline}
where $L^{ul,a,t,\mu}_{i,w}(.)$ is given in Lemma~\ref{lem:LaplaceULaccess} and \ref{lem:LaplaceULaccess2},
\begin{align*}
F_{t,n}(R) &= \mathrm{exp}\left({-\pi\lambda_t \left(R^2 - p_{\mathrm{LOS}}\min(R,\dlos)^2\right)}\right),\\
F_{t,l}(R) &= \mathrm{exp}\left({-\pi\lambda_t p_{\mathrm{LOS}}\min(R,\dlos)^2}\right),\\
f_{t,l}(R) &= 
2\pi\lambda_t R p_{\mathrm{LOS}}\mathds{1}(R\leq\dlos)\\&\hspace{0.5cm}\times\exp\left(-\pi\lambda_t p_\mathrm{LOS} \min(R,\dlos)^2\right), \\
f_{t,n}(R) &= 2\pi\lambda_t R \left(1-p_{\mathrm{LOS}}\mathds{1}(R\leq\dlos)\right)\\&\hspace{0.5cm}\times\exp\left(-\pi\lambda_t \left(R^2-p_\mathrm{LOS} \min\left(R, \dlos\right)^2\right)\right).
\end{align*}
\end{thm}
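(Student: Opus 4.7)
The plan is to combine the standard total-probability decomposition over the serving tier and LOS/NLOS state with the Rayleigh-fading identity that converts a coverage probability into a Laplace transform of interference. First I would split the coverage by serving tier: since every user associates with either an MBS or SBS, $\mathcal{S}^{ul}_{i,a,w}(\tau) = \mathcal{A}_m \mathcal{S}^{ul,m}_{i,a,w}(\tau) + \mathcal{A}_s \mathcal{S}^{ul,s}_{i,a,w}(\tau)$ (the theorem statement appears to contain a typo where the $m$ subscript is repeated as $s$). For $i > \mathrm{F}_a$ with $w = \mathrm{UAB}$, only typical users served by an SBS are actually scheduled for UL access, so only the $t=s$ term survives.

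Next I would further condition on the LOS/NLOS state $\mu\in\{l,n\}$ of the serving BS and on its distance $R$. The unconditional probability that the nearest BS of the subprocess $\Phi_{t,\mu}$ lies at distance $R$ has density $f_{t,\mu}(R)$, obtained as $-\mathrm{d}F_{t,\mu}(R)/\mathrm{d}R$ with $F_{t,\mu}$ the void probability listed in the theorem. Given such a candidate, the association rule requires that for every other pair $(t',\mu')\neq(t,\mu)$, no BS in $\Phi_{t',\mu'}$ produces higher biased received power; converting this into distance using $\power{t} G_t B_t r^{-\alpha_\mu}$ versus $\power{t'} G_{t'} B_{t'} r'^{-\alpha_{\mu'}}$ forces each competing subprocess to be empty inside a ball of radius $(R^{\alpha_\mu} \power{t'} G_{t'} B_{t'}/(\power{t} G_t B_t))^{1/\alpha_{\mu'}}$. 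By independence across the four subprocesses, this contributes exactly the product $\prod_{(t',\mu')\neq(t,\mu)} F_{t',\mu'}(\cdot)$ appearing in \eqref{eq:ULSINRdist_access}; dividing by $\mathcal{A}_t$ normalizes the joint density into the conditional density of the serving distance given tier-$t$ association.

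The third step is the Rayleigh-fading trick. Given the serving BS at $(t,\mu,R)$, because $h_{i,X^*,0}\sim\mathrm{Exp}(1)$ is independent of everything else, $\mathbb{P}(\mathrm{SINR}^{ul}_{i,a,w}>\tau\mid X^*,\mathcal{F}) = \mathbb{E}\left[\exp\left(-s(I+\sigma^2)\right)\mid X^*,\mathcal{F}\right]$ with $s = \tau R^{\alpha_\mu}/(\mathrm{C}_0\power{u} G_u G_t)$. The deterministic noise piece pulls out as $\exp(-s\sigma^2)$, while the interference expectation is precisely $L^{ul,a,t,\mu}_{i,w}(s,R)$ from Lemma~\ref{lem:LaplaceULaccess} when $i\leq\mathrm{F}_a$ and Lemma~\ref{lem:LaplaceULaccess2} when $i>\mathrm{F}_a$. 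Multiplying the exponential noise term, the Laplace transform, the void-probability product, and the normalized serving-distance density $f_{t,\mu}(R)/\mathcal{A}_t$, then summing over $\mu$ and integrating over $R$, yields \eqref{eq:ULSINRdist_access}. Finally, the outer expectation over $\mathcal{F}$ accounts for the randomness of the subframe lengths that enter through the indicator-based scheduling probabilities inside the Laplace transform.

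The main delicate point will be justifying the substitution of the true conditional interference distribution at $X^*$ by the expressions in Lemmas~\ref{lem:LaplaceULaccess}--\ref{lem:LaplaceULaccess2}, which are only \emph{approximations} for the non-Poisson interferer fields arising from scheduled UEs and from UAB-poaching SBSs (cf.\ Remark~\ref{rem:int}); the approximation symbol $\approx$ in those lemmas is therefore inherited by \eqref{eq:ULSINRdist_access}. A secondary bookkeeping concern is independence between the ``$X^*$ is the serving BS at $R$'' event and the interfering configuration used inside the Laplace transform. Under Assumption~\ref{assumption:1}--\ref{assumption:2} for loads and the Slivnyak-type conditioning implicit in Lemmas~\ref{lem:LaplaceULaccess}--\ref{lem:LaplaceULaccess2}, the reduced Palm law of the interfering PPPs (after removing $X^*$) is still Poisson, so the factorization into $f_{t,\mu}(R)/\mathcal{A}_t$, the product of $F_{t',\mu'}$'s, and the Laplace transform is consistent and the theorem follows.
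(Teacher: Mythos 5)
Your proposal is correct and follows essentially the same route as the paper's own proof: a total-probability decomposition over serving tier and LOS/NLOS state, conditioning on the serving distance with density $f_{t,\mu}(R)$, the void-probability product $\prod F_{t',\mu'}(\cdot)$ arising from the biased-power association rule, and the Rayleigh-fading identity converting the coverage probability into the noise exponential times the interference Laplace transform. You also correctly identify the typo in the stated decomposition (it should read $\mathcal{A}_m \mathsf{S}^{ul,m}_{i,a,w}(\tau)+\mathcal{A}_s \mathsf{S}^{ul,s}_{i,a,w}(\tau)$) and correctly note that the $\approx$ in Lemmas~\ref{lem:LaplaceULaccess} and \ref{lem:LaplaceULaccess2} is inherited by the theorem.
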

\begin{proof}
The $\SINR$ coverage of a typical UL user scheduled in the $i^\text{th}$ slot ($i\leq \F_a$), is given by \footnote{Note that conditioning on $\mathcal{F}$ is not explicitly written in the following equations for convenience. }
\begin{align*}
&\mathsf{S}^{ul}_{i,a,w}(\tau) = \mathbb{P}\left(\SINR^{ul}_{i,a,w}>\tau\right)\\&=\sum_{t\in\{s,m\},\, \mu\in\{l,n\}}\mathbb{P}\left(\SINR^{ul}_{i,a,w}>\tau , X^*\in\Phi_{t,\mu}\right)\\ 
&=\sum_{t\in\{s,m\},\, \mu\in\{l,n\}}\int_{0}^{\infty}\mathbb{P}\left(\SINR^{ul}_{i,a,w}>\tau, \right.\\& \left.X^*\in\Phi_{t,\mu}\Big| ||X^*_{t,\mu}||=R\right)f_{t,\mu}(R)\mathrm{d}R ,
\end{align*}
where $f_{t,\mu}(R)$ is the probability that there exists $X^*_{t,\mu}$, which is the BS nearest to origin of tier $t$ and link type $\mu\in\{l,n\}$, and its distance from origin is $R$. It is given as 
\begin{align*}
f_{t,\mu}(R)&= -\frac{\mathrm{d}}{\mathrm{d}R}\mathbb{P}\left(\Phi_{t,\mu}\left(\mathcal{B}(0,R)=0\right),  \Phi_{t,\mu}\left(\mathbb{R}^2>0\right)\right) \\
&= 2\pi\lambda_t R p_{\mu}(R)\exp\left(-2\pi\lambda_t\int_{0}^{R}p_{\mu}(r)r\mathrm{d}r\right).
\end{align*}
The SINR coverage expression is simplified further as shown in \eqref{eq:SINRproof}. 
\begin{figure*}[!t]
\begin{align}
\label{eq:SINRproof}
\mathsf{S}^{ul}_{i,a,w}(\tau)&=\sum_{t\in\{s,m\},\, \mu\in\{l,n\}}\int_{0}^{\infty}\mathbb{P}\left(\SINR^{ul}_{i,a,w}>\tau \Big| X^*\in\Phi_{t,\mu}, ||X^*||=R\right)
\mathbb{P}\left(X^*\in\Phi_{t,\mu}\big| ||X^*||=R\right)  f_{t,\mu}(R)\mathrm{d}R\nonumber
\\
&= \sum_{\substack{t\in\{s,m\},\\ \mu\in\{l,n\}}}\int_{0}^{\infty}\mathbb{E}\left[\exp \left( \frac{-\tau R^{\alpha_\mu} (I_{i,m,w}(X^*) + I_{i,s,w}(X^*) + I_{i,u,w}(X^*) + \sigma^2)}{\mathrm{C}_0 \power{u} G_u G_t }\right)\Big| X^*\in\Phi_{t,\mu}, ||X^*||=R\right]\nonumber\\&\times \prod_{\substack{t'\in\{m,s\},\mu'\in\{l,n\},t'\neq t \text{ or }
  \mu'\neq \mu}} F_{t',\mu'}\left(\left(\frac{P_{t'} G_{t'} B_{t'} R^{\alpha_{\mu}}}{P_t B_t G_t}\right)^{1/\alpha_{\mu'}}\right) f_{t,\mu}(R)
   \mathrm{d}R.
\end{align}
\hrulefill
\end{figure*}
where $F_{t,\mu}(R) = \mathbb{P}\left(\Phi_{t,\mu}\left(\mathcal{B}(0,R)=0\right)\right)$ and
$\mathcal{B}(0,R)$ is the ball of radius R centered at the origin.   
\end{proof}
\subsection{SINR distribution for backhaul links}
SINR model for backhaul links is given by 
$$
\SINR^{ul}_{i,b,w} = \frac{\mathrm{C}_0 \power{s} h_{i,X^{*},X^{**}} G_m G_s L(X^{*},X^{**})^{-1}}{I_{i,s,w}(X^{**}) + I_{i,u,w}(X^{**}) + \sigma^2},
$$
where $w\in\{\text{UAB},\text{SAB}\}$, $\F_a+\F_{bd}<i\leq \F$, $I_{i,u,\text{SAB}}=0$.  $I_{i,s,w}(.)$ and  $I_{i,u,\text{UAB}}(.)$ are same as \eqref{eq:Is_backhaul} and \eqref{eq:Iu_backhaul}, respectively, except that here the receiver is $X^{**}$, which is the MBS serving the tagged SBS. 

For the backhaul links, we are interested to find $\mathbb{P}\left(\SINR^{ul}_{i,b,w}>\tau\big| X^*\in\SBSs\right)$ where the probability is under the Palm of the user process. The reason is that for computing the end-to-end rate of a typical user at origin, we are interested in the distribution of backhaul SINR distribution only in scenarios when the user at origin connects to a SBS. However, to compute even serving distance distribution of backhaul link under the Palm of user process is highly non-trivial. In \cite{Sharma16}, such distribution was computed in the case when there were no blockage effects. Although in principle, such computations can be done with blockage effects there will be a total 12 cases that will arise -- condition LOS/NLOS links for typical UE at origin to $X^*$ and the backhaul links between $X^*$ and $X^{**}$, and 3 sub-cases for each of these that account for different exclusion regions as shown in \cite{Sharma16}. Computing the SINR CCDF under the Palm of the SBS process is much easier as follows and we will approximate the SINR coverage of a typical backhaul link to be equal to that of the tagged link for rate computations, as also done previously in \cite{SinJSAC14,Lu15,Tabassum16}. Validation of this is done in Figure~\ref{fig:plot6}. Similar to UL access, the following can be derived. 
\begin{cor}
\label{cor:ULSINRdist_backhaul}
CCDF of a typical backhaul UL SINR link for $i>\F_a$ is given as
\begin{multline*}
\mathsf{S}^{ul}_{i,b,w}(\tau) = 
\sum_{\mu\in\{l,n\}}\int_{0}^{\infty}\exp \left( \frac{-\tau R^{\alpha_\mu} \sigma^2_n}{\mathrm{C}_0 \power{s} G_s G_m}\right)\times\\ L^{ul,b}_{i,w}\left(\frac{\tau R^{\alpha_\mu}}{\mathrm{C}_0 \power{s} G_s G_m }\right) F_{m,\mu'}\left(R^{\alpha_{\mu}/\alpha_{\mu'}}\right)  f_{m,\mu}(R) \mathrm{d}R
\end{multline*}
where $L^{ul,b}_{i,w}(\mathsf{s}) = \mathbb{E}\left[\exp\left(-\mathsf{s}(I_{i,s,w}(X^{**})+I_{i,u,w}(X^{**}))\right)\right]\approx L_s L_u $ with
\begin{multline*}
L_s =  \exp\left(-\int_{0}^{\infty}\mathbb{E}\left[\frac{1}{1+ \frac{r}{\s\mathrm{C}_0\power{s}\Psi_{m,s}}}\right]\times\right.\\\left. \left(1-\left(1+\frac{\lambda_{s,u}}{3.5\lambda_m}\right)^{-3.5}\right)\left(1-\exp\left(-\Lambda_m(r)\right)\right)\Lambda_m(\mathrm{d}r)\right),
\end{multline*}
where $\lambda_{s,u} = \lambda_s\left(1-\left(1+\frac{\lambda_u (1-\eta)\mathcal{A}_s
}{3.5\lambda_s}\right)^{-3.5} \right)$.
\begin{multline*}
L_u =  \mathds{1}(w = \mathrm{SAB})+
\mathds{1}(w = \mathrm{UAB})\times\\ \exp\left(-\int_{0}^{\infty}\mathbb{E}\left[\frac{\left(1-\exp\left(-\Lambda_s(r)\right)\right)\frac{\bar{\lambda}_u}{\lambda_s}\Lambda_s(\mathrm{d}r)}{1+ \frac{r}{\s\mathrm{C}_0\power{u}\Psi_{m,u}}}\right] \right),
\end{multline*}
with \begin{multline*}
\bar{\lambda}_u = p_{ul}\left(1-\left(1+\frac{\lambda_u (1-\eta)\mathcal{A}_s
}{3.5\lambda_s}\right)^{-3.5} \right)\times\\ \left(\lambda_s-\left(1-\left(1+\frac{\lambda_s }{3.5\lambda_m}\right)^{-3.5} \right)\lambda_m\right)^+ .
\end{multline*}
The expectation in the expressions for $L_u$ and $L_s$ is with respect to the antenna gains $\Psi_{(.)}$ given in Table~\ref{tab:antenna}.
\end{cor}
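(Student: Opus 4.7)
The plan is to mirror the derivation of Theorem~\ref{thm:ULSINRaccess} with two modifications: (i) the serving link is now SBS-to-MBS rather than UE-to-BS, and (ii) the interference consists only of the $I_{i,s,w}(X^{**})$ and $I_{i,u,w}(X^{**})$ contributions that are active in a UL backhaul slot $\F_a+\F_{bd}<i\leq \F$. First, I condition on the event that $X^*$'s serving MBS $X^{**}$ lies at distance $R$ with link type $\mu\in\{l,n\}$. Because an SBS attaches to the \emph{geographically} nearest MBS (Section~\ref{sec:sysassoc}), the density $f_{m,\mu}(R)$ from Theorem~\ref{thm:ULSINRaccess} carries over directly, and the only void-probability factor I need is for MBSs of the other link type $\mu'$, giving $F_{m,\mu'}(R^{\alpha_\mu/\alpha_{\mu'}})$. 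No UE association bias or tier-comparison product appears because the association rule is single-tier.

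Next, conditional on $R$ and $\mu$, the exponential small-scale fading $h_{i,X^*,X^{**}}$ decouples the SINR CCDF into a noise exponential $\exp(-\tau R^{\alpha_\mu}\sigma^2/(\mathrm{C}_0\power{s} G_m G_s))$ and a Laplace transform $L^{ul,b}_{i,w}(\mathsf{s})$ of the aggregate interference at $\mathsf{s}=\tau R^{\alpha_\mu}/(\mathrm{C}_0\power{s}G_m G_s)$. Since $I_{i,s,w}(X^{**})$ and $I_{i,u,w}(X^{**})$ are (approximated as) independent conditioned on the geometry, $L^{ul,b}_{i,w}(\mathsf{s})\approx L_s L_u$.

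For $L_s$, the interfering SBSs are those scheduled by some other MBS for UL backhaul in slot $i$, which requires the scheduling MBS to have at least one SBS with at least one UL UE. Following the PPP-approximation theme of Remark~\ref{rem:int}, I represent the active interfering SBSs as a thinned PPP: the thinning probability at the parent-MBS level is $1-(1+\lambda_{s,u}/(3.5\lambda_m))^{-3.5}$ (the non-void load probability with $\lambda_{s,u}$ as the density of SBSs having at least one UL UE, computed from Assumption~\ref{assumption:1}), and the exclusion region inside the tagged backhaul distance is captured by the $(1-\exp(-\Lambda_m(r)))\Lambda_m(\mathrm{d}r)$ factor together with the directional antenna expectation over $\Psi_{m,s}$, exactly as in Lemma~\ref{lem:LaplaceULaccess}. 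For $L_u$, under SAB no UE transmits in the UL backhaul slot and $L_u=1$; under UAB, the interfering UEs are those served by poaching SBSs, i.e., SBSs that are (a) not scheduled by their MBS this slot, (b) poach with probability $p_{ul}$, and (c) have at least one UL UE. Treating these events as independent thinnings yields an equivalent PPP of poaching SBSs with intensity $\bar{\lambda}_u$, where the $(\cdot)^+$ subtracts the expected density of SBSs \emph{already} scheduled by some MBS (derived as $(1-(1+\lambda_s/(3.5\lambda_m))^{-3.5})\lambda_m$), and the factor $(1-(1+\lambda_u(1-\eta)\mathcal{A}_s/(3.5\lambda_s))^{-3.5})$ is the probability that a poaching SBS has an UL UE to schedule. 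The resulting Laplace integrand then has the same structure as the $L_u$ term of Lemma~\ref{lem:LaplaceULaccess2} with $\lambda_s\mapsto\bar{\lambda}_u/\lambda_s \cdot \lambda_s$ weighting.

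The main obstacle is not any single computation but the justification of the PPP approximations, particularly for the poaching UL interferers. The true process of poaching SBSs is a dependently thinned subset of $\SBSs$ whose retention depends on the load marks of neighboring MBSs and on the scheduling decisions of those MBSs; the associated UL UEs form a cluster process around this subset. I plan to adopt the independent-thinning-plus-effective-density heuristic already used in Lemma~\ref{lem:LaplaceULaccess2} (inspired by \cite{SinZhaAnd15}) and defer quantitative validation to the Monte-Carlo comparison promised in Fig.~\ref{fig:plot6}. A secondary subtlety is that the SINR CCDF is computed under the Palm distribution of $\SBSs$ rather than of the user process, which I justify by the same tagged-versus-typical argument used in \cite{SinJSAC14,Lu15,Tabassum16}; under Slivnyak this Palm distribution coincides with adding a point at the origin to $\SBSs$, so all the above expressions, derived for a typical SBS at $X^*$, apply unchanged.
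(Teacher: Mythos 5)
Your proposal is correct and follows essentially the same route as the paper, which itself only states this corollary as being derived ``similar to UL access'': the same conditioning on the serving-MBS path-loss distance and link type, the same fading-induced decoupling into a noise exponential times a product-form Laplace transform, the same effective-density PPP approximations (with the non-void probabilities from Assumption~\ref{assumption:1} and the $\left(1-\exp\left(-\Lambda_m(r)\right)\right)$ association correction) for the scheduled-SBS and poaching-UE interferers, and the same tagged-versus-typical Palm approximation deferred to Monte-Carlo validation. One minor slip: the SBS associates to the MBS of minimum path loss $L(Y,Z)$, not the geographically nearest one, but the factor $F_{m,\mu'}\left(R^{\alpha_{\mu}/\alpha_{\mu'}}\right)$ you write down is exactly the one required by that rule, so the mathematics is unaffected.
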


\subsection{Mean rate analysis}
Let $\mathcal{E}_m$ and $\mathcal{E}_s$ denote the events when the typical UE connects to a MBS and SBS, respectively. 

\noindent{\bf Typical UE connected to MBS.} Data transmitted by a typical UL user in a frame is given by 
$\mathrm{D}_{ul,m,w_a}=\BW \mathrm{T}\times$ $$  \sum_{i=1+\F_{ad}}^{\F_{a}}\mathds{1}\left(\text{UE scheduled in $i^\text{th}$ slot}\right)\log_2\left(1+\mathsf{SINR}^{ul}_{i,a,w_a}\right).$$
Here, $w_a\in\{S,D\}$ representing static and dynamic TDD. As time progresses, in every frame the data transmitted by the UL UE is distributed according to the above equation. Thus, the data rate of the user averaged over time is given by $\mathbb{E}\left[\mathrm{D}_{ul,m,w_a}\Big| \BSprocess, \UEprocess, \mathcal{E}_m\right]/\mathrm{T}\F$, where expectation is over temporally varying random variables (all the randomness except that from $\BSprocess$ and $\UEprocess$).  Spatial averaging over the user and BS point processes gives data rate of the typical user at origin as 
$
\mathrm{R}_{ul,m,w_a} =\frac{\mathbb{E}\left[\mathrm{D}_{ul,m,w_a}\big| \mathcal{E}_m\right]}{\mathrm{T}\F}.
$

\noindent{\bf Typical UE connected to SBS.} Data transmitted by a typical UL user in access and backhaul slots of a typical frame is given by $\mathrm{D}_{ul,s,a,w_a}$ and $\mathrm{D}_{ul,s,b,w_b}$, respectively, in \eqref{eq:longeqRate1}. 
\begin{figure*}[!t]
\begin{align}
\label{eq:longeqRate1}
&\mathrm{D}_{ul,s,a,w_a} = \BW\mathrm{T} \sum_{i=1+\F_{ad}}^{\F}\mathds{1}\left(\text{UE scheduled in $i^\text{th}$ slot}\right)\log_2\left(1+\mathsf{SINR}^{ul}_{i,a,w_a}\right),\nonumber\\
&\mathrm{D}_{ul,s,b,w_b} = \BW\mathrm{T} \sum_{i=1+\F_{a}+\F_{bd}}^{\F}\log_2\left(1+\mathsf{SINR}^{ul}_{i,b,w_b}\right)\mathds{1}\left(\text{tagged SBS scheduled in $i^\text{th}$ slot and tx the UE's data}\right) .
\end{align}
\hrule
\end{figure*}
Here, $w_a\in\{S,D\}$ for access links and $w_b\in\{\mathrm{UAB},\mathrm{SAB}\}$ for backhaul links. The data rate of the UE averaging over temporally varying random variables is given by $\tilde{R} = $ $$\frac{\min\left(\mathbb{E}\left[\mathrm{D}_{ul,s,a,w_a}\Big| \BSprocess, \UEprocess, \mathcal{E}_s\right], \mathbb{E}\left[\mathrm{D}_{ul,s,b,w_b}\Big| \BSprocess, \UEprocess, \mathcal{E}_s\right]\right)}{\mathrm{T}\F}.$$ The data rate after spatial averaging is given by expectation of the aforementioned rate over $\BSprocess$ and $\UEprocess$ and is given by $\mathrm{R}_{ul,s,w_a,w_b} = \mathbb{E}\left[\tilde{R}\big| \mathcal{E}_s\right]\leq \min\left(\mathbb{E}\left[\mathrm{D}_{ul,s,a,w_a}\Big| \mathcal{E}_s\right], \mathbb{E}\left[\mathrm{D}_{ul,s,b,w_b}\Big| \mathcal{E}_s\right]\right)/\mathrm{T}\F.$ We will use this upper bound as an approximation to our mean rate estimates. We observe that the upper bound is very close to actual mean rate for $\delta$ larger or smaller than the optimal $\delta\in\{0.1,0.2,\ldots,0.9\}$, which is intuitive since the network is either highly access or backhaul limited in these scenarios and thus the minimum of expectation is roughly equal to expectation of minimum. For $\delta$ close to the optimal, there is some gap and in future it will be desirable to close this gap with a better approximation. However, the estimates for optimal $\delta$ were observed to be roughly same with the upper bound and the actual ergodic mean rate\cite{KulCode17}. 
\begin{thm}
\label{thm:ULrate}
Approximate mean rate of a typical UL user in the network is given by $\mathrm{R}_{ul,w_a,w_b} = \mathcal{A}_m \mathrm{R}_{ul,m,w_a} + \mathcal{A}_s \mathrm{R}_{ul,s,w_a,w_b}$, where 
\begin{multline*}
\mathrm{R}_{ul,m,w_a} = \mathbb{E}_{\mathcal{F}}\frac{\BW}{\F}  \sum_{n_1=0}^{\infty}\sum_{n_2=0}^{\infty}\frac{\Upsilon_m(n_1,n_2,4.5)}{n_1+1}\times\\
\int_{0}^{\infty}\frac{\sum_{i=1+\F_{ad,w_a,X^*}}^{\F_{a}}\mathsf{S}^{ul,m}_{i,a,w_a}(\tau)}{1+\tau}\mathrm{d}\tau.
\end{multline*}
$\mathbb{E}_\mathcal{F}$ denotes expectation is over $\mathcal{F}$. Also note that given the UL and DL loads $n_1$ and $n_2$, $\F_{ad,w_a,X^*}$ is computed as per Section~\ref{sec:scheduling}. 
\begin{equation*}
\mathrm{R}_{ul,s,w_a,w_b}= \mathbb{E}_{\mathcal{F}}\frac{\min\left(\mathrm{R}_{a,ul,s,w_a,w_b}, \mathrm{R}_{b,ul,s,w_a,w_b}\right)}{\F},
\end{equation*}
where $\mathrm{R}_{a,ul,s,w_a,w_b}$ is given in \eqref{eq:longeqRate2} and 
\begin{figure*}
\begin{multline}
\label{eq:longeqRate2}
\mathrm{R}_{a,ul,s,w_a,w_b} = \bwidth   \sum_{n_1=0}^{\infty}
\sum_{n_2=0}^{\infty}\frac{\Upsilon_s(n_1,n_2,4.5)}{n_1+1}\int_{0}^{\infty}\frac{\sum_{i=1+\F_{ad,w_a,X^*}}^{\F_{a}}\mathsf{S}^{ul,s}_{i,a,w_a}(\tau)}{1+\tau}\mathrm{d}\tau\\
+\mathds{1}(w_b=\mathrm{UAB}) \bwidth  (1-\mathbb{E}\left[1/N_{s,u}\right])p_{ul}\sum_{n=1}^{\infty}\frac{\kappa^*_{u,s}(n)}{n}   \int_{0}^{\infty}\frac{\sum_{i=1+\F_{a}+\F_{bd}}^{\F}\mathsf{S}^{ul,s}_{i,a,w_a}(\tau)}{1+\tau}\mathrm{d}\tau ,
\end{multline}
\hrulefill
\end{figure*}
\begin{multline*}
\mathrm{R}_{b,ul,s,w_a,w_b} =  \bwidth \mathrm{T} \mathbb{E}\left[1/N_{s,u}\right]\times\\ \sum_{n=1}^{\infty}\frac{\kappa^*_{u,s}(n)}{n}   \int_{0}^{\infty}\frac{\sum_{i=1+\F_{a}+\F_{bd}}^{\F}\mathsf{S}^{ul}_{i,b,w_b}(\tau)}{1+\tau}\mathrm{d}\tau,
\end{multline*}
where $w_a\in\{S,D\}$, $w_b\in\{\mathrm{SAB},\mathrm{UAB}\}$, $N_{s,u}$ has distribution in \eqref{eq:tagged} with $\epsilon = \lambda_s \left(1-\left(1+\frac{\mathcal{A}_s \lambda_u (1-\eta)}{3.5\lambda_s}\right)^{-3.5} \right)/\lambda_m$. Also, $\kappa^{*}_{u,s}$, $\Upsilon_m(.)$ and $\Upsilon_s(.)$ are given in Section~\ref{sec:loaddist}. Further, the notation $\sum\limits_{x}^{y}$ implicitly assumes that the sum is zero if $y<x$.  
\end{thm}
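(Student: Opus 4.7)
The plan is to unpack the two rate definitions given just above the theorem by conditioning on the serving tier and on the local load at the tagged BS, and then to apply the standard tail-integral identity $\mathbb{E}[\log_2(1+X)] = \int_0^\infty \mathbb{P}(X>\tau)/(1+\tau)\mathrm{d}\tau$ (up to a $\ln 2$ absorbed into the units convention) to convert each expected log-rate into an integral of the SINR coverages already characterized in Theorem~\ref{thm:ULSINRaccess} and Corollary~\ref{cor:ULSINRdist_backhaul}. The outer decomposition $\mathrm{R}_{ul,w_a,w_b} = \mathcal{A}_m \mathrm{R}_{ul,m,w_a} + \mathcal{A}_s \mathrm{R}_{ul,s,w_a,w_b}$ then follows immediately from total probability over the association events $\mathcal{E}_m$ and $\mathcal{E}_s$ together with the association probabilities derived in Section~\ref{sec:sysassoc}.

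For the MBS-connected case, the typical UE is served only in UL access slots $i\in\{\F_{ad,w_a,X^*}+1,\ldots,\F_a\}$, since MBSs do not poach backhaul slots in the proposed UAB scheme. I would condition on the UL/DL load $(n_1,n_2)$ at the tagged MBS \emph{excluding} the typical UE, which by Assumption~\ref{assumption:2} has joint PMF $\Upsilon_m(n_1,n_2,4.5)$. Given $(n_1,n_2)$ and uniform per-slot scheduling, the typical UE is picked in any given UL slot with probability $1/(n_1+1)$; applying the tail-integral identity slot by slot and dividing by $\F$ yields the stated formula. The inner sum over $i$ must sit inside the load sum because under dynamic TDD $\F_{ad,w_a,X^*}$ itself depends on $(n_1,n_2)$ via the rule of Section~\ref{sec:scheduling}. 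The outer $\mathbb{E}_\mathcal{F}$ averages over the subframe lengths common to all BSs.

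For the SBS-connected case, the true ergodic rate $\mathbb{E}[\min(\mathrm{D}_{ul,s,a,w_a},\mathrm{D}_{ul,s,b,w_b})]/(\mathrm{T}\F)$ is replaced by $\min(\mathbb{E}[\mathrm{D}_{ul,s,a,w_a}],\mathbb{E}[\mathrm{D}_{ul,s,b,w_b}])/(\mathrm{T}\F)$, as explicitly justified in the paragraph above the theorem; this swap is the source of the \emph{approximate} qualifier. The access expectation then splits into a genuine access piece (slots up to $\F_a$, handled exactly as in the MBS case but with $\Upsilon_s$ in place of $\Upsilon_m$) and a UAB poaching piece (slots $\F_a+\F_{bd}+1,\ldots,\F$). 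For the latter the typical UE is served only if the tagged SBS (i) is not picked by its serving MBS for backhaul, with probability $1-\mathbb{E}[1/N_{s,u}]$ under uniform SBS scheduling among the $N_{s,u}$ eligible SBSs, (ii) independently decides to poach, with probability $p_{ul}$, and (iii) picks the typical UE out of its $n$ UL UEs with probability $1/n$, producing the factor $\sum_n \kappa^*_{u,s}(n)/n$. The backhaul expectation is analogous with the complementary scheduling probability $\mathbb{E}[1/N_{s,u}]$, giving $\mathrm{R}_{b,ul,s,w_a,w_b}$.

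The main obstacle is justifying the clean decoupling of the three events in (i)--(iii) above: the event that the tagged SBS is left unscheduled by its MBS must be treated independently of both the UE load $n$ at the tagged SBS and the independent poaching coin. These independences are not exact consequences of the underlying PPP; they rely on the independent-load postulate of Assumption~\ref{assumption:1} together with the independence baked into the UAB scheme by construction, and they are what allow $\mathbb{E}[1/N_{s,u}]$ and $\sum_n \kappa^*_{u,s}(n)/n$ to appear multiplicatively rather than coupled inside a single joint expectation. A minor but necessary bookkeeping point is the convention $\sum_x^y(\cdot)=0$ for $y<x$ stated at the end of the theorem, which automatically zeros the poaching term under $w_b=\mathrm{SAB}$ and the backhaul term when $\F_a+\F_{bd}=\F$.
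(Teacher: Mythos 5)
Your proposal is correct and takes essentially the same route as the paper's own proof in Appendix~\ref{App_ULrate}: conditioning on the association tier and on the load $(n_1,n_2)$ at the tagged BS (excluding the typical UE, hence the $1/(n_1+1)$ scheduling probability), applying the tail-integral identity $\mathbb{E}[\log_2(1+X)]=\int_0^\infty \mathbb{P}(X>\tau)/(1+\tau)\,\mathrm{d}\tau$ slot by slot, swapping the minimum and the expectation for the two-hop case, and factoring the poaching term into the unscheduled-by-MBS, poaching-coin, and UE-selection probabilities under the independent-load assumptions. You also correctly identify the independence of $N_{s,u}$ from the tagged SBS's UE load as the point where the multiplicative decoupling relies on Assumption~\ref{assumption:1} rather than on an exact property of the PPP, which is precisely what the paper implicitly invokes.
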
 
\begin{proof}
See Appendix~\ref{App_ULrate}. 
\end{proof}
\begin{rem}
The infinite summations in Theorem~\ref{thm:ULrate} correspond to averaging some load distribution, as inferred from Appendix~\ref{App_ULrate}. These can be computed accurately as finite sums with roughly $6x$ terms if the mean load for the particular summation is $x$. 
\end{rem}

\section{Downlink SINR and rate}
\label{sec:DLanalysis}
Analyzing DL SNR distribution is very similar to UL, and the key difference lies in the interference distribution which results due to the receiver position now being at the origin instead of $X^*$ or $X^{**}$ as in the UL case. This leads to different exclusion regions that need to be considered while computing shot noise of the interfering points as will be clear in Appendix~\ref{sec:App_DL_Laplace}. For rate computations, another major difference arises due to different probability of being scheduled in $i^\text{th}$ slot for DL and UL UEs, that depends on the DL subframe length distribution in access and backhaul subframes as a function of $\eta$. 

\noindent{\bf SINR distribution for access links.}
DL SINR of a typical UE at the origin being served by a BS at $X^*\in\Phi_t$, where $t\in\{m,s\}$, is given as follows.
$$
\SINR^{dl}_{i,a,w} = \frac{\mathrm{C}_0 \power{t} h_{i,0,X^*} G_u G_t L(0,X^*)^{-1}}{I_{i,m,w}(0) + I_{i,s,w}(0) + I_{i,u,w}(0) + \sigma^2},
$$
where $w\in\{S,D\}$ if $i\leq \mathrm{F}_a$ and $w\in\{\text{SAB}, \text{UAB}\}$ if $i>\mathrm{F}_a$. $I_{i,\nu,w}(0)$ is the interference power at origin from all active devices of type $\nu\in\{m,s,u\}$ in the $i^\text{th}$ slot as given in \eqref{eq:Inu_access}-\eqref{eq:Iu_backhaul}. Note that here the DL access link will be active only when $\F_a<i\leq \F_a+\F_{bd}$ in the backhaul subframe and thus, the second sum in \eqref{eq:Is_backhaul} would be non-zero but the first summation would be zero. 

The SINR distribution is given similar to \eqref{eq:ULSINRdist_access} and is given as follows, $\mathsf{S}^{dl,t}_{i,a,w}(\tau) =$
\begin{multline}
\label{eq:DLSINRdist_access}  
\hspace{-0.25cm}\sum_{\substack{t\in\{s,m\},\\\mu\in\{l,n\}}}\int_{0}^{\infty}\exp \left( \frac{-\tau R^{\alpha_\mu} \sigma^2}{\mathrm{C}_0 \power{t} G_u G_t}\right) L^{dl,a,t,\mu}_{i,w}\left(\frac{\tau R^{\alpha_\mu}}{\mathrm{C}_0 \power{t} G_u G_t }, R\right)\\
\times\prod_{\substack{t'\in\{m,s\},\\\mu'\in\{l,n\},\\ t'\neq t \text{ or }
  \mu'\neq \mu}} F_{t',\mu'}\left(\left(\frac{P_{t'} G_{t'} B_{t'} R^{\alpha_{\mu}}}{P_t B_t G_t}\right)^{\frac{1}{\alpha_{\mu'}}}\right) \frac{f_{t,\mu}(R)}{\mathcal{A}_t} \mathrm{d}R.
\end{multline}
Note the different transmit power here and also that $L^{dl,a,t,\mu}_{i,w}(\s,R)$, derived in Appendix~\ref{sec:App_DL_Laplace}, is different from the UL Laplace transform of interference given in Lemmas~\ref{lem:LaplaceULaccess} and \ref{lem:LaplaceULaccess2}. 

\noindent{\bf SINR distribution for backhaul links.}
For DL backhaul link, considering a typical SBS at origin being served by a MBS at $X^{**}$,
$$
\SINR^{dl}_{i,b,w} = \frac{\mathrm{C}_0 \power{m} h_{i,X^{**},0} G_m G_s L(X^{**},0)^{-1}}{I_{i,s,w}(0) + I_{i,m,w}(0) + \sigma^2},
$$
where $w\in\{\text{UAB},\text{SAB}\}$, $\F_a <i\leq \F_a+\F_{bd}$, $I_{i,s,\text{SAB}}=0$.  $I_{i,m,w}(0)$ and  $I_{i,s,\text{UAB}}(0)$ can be obtained from \eqref{eq:Im_backhaul} and \eqref{eq:Is_backhaul}, respectively. $\mathsf{S}^{dl}_{i,b,w}$ is same as Corollary~\ref{cor:ULSINRdist_backhaul} with  $L^{ul,b}_{i,w}$ replaced by 
$L^{dl,b}_{i,w}(\mathrm{s},\rho) \approx L_m L_s,$
where 
\begin{equation*}
L_s = \exp\left(-\int_{0}^{\infty}\mathbb{E}\left[\frac{\bar{\lambda}_d\Lambda_s(\mathrm{d}r)/\lambda_s}{1+ \frac{r}{\s\mathrm{C}_0\power{u}\Psi_{m,u}}}\right] \right),
\end{equation*}
if $w=\mathrm{UAB}$ and $L_s = 1$ if $w = \mathrm{SAB}$. Here, 
\begin{multline*}
\bar{\lambda}_d =  \left(\lambda_s-\left(1-\left(1+\frac{\lambda_s }{3.5\lambda_m}\right)^{-3.5} \right)\lambda_m\right)^+\times\\ \left(1-\left(1+\frac{\lambda_u \eta\mathcal{A}_s
}{3.5\lambda_s}\right)^{-3.5} \right)p_{dl}.
\end{multline*}
and $L_m =  \exp\left(-\theta\right)$, where
\begin{equation*}
\theta = \int_{\rho^{\alpha_l}}^{\infty} \mathbb{E}\left[\frac{\left(1-\left(1+\frac{\lambda_{s,d}}{3.5\lambda_m}\right)^{-3.5}\right)\Lambda_m(\mathrm{d}r)}{1+ \frac{r}{\s\mathrm{C}_0\power{m}\Psi_{m,s}}}\right],
\end{equation*}
and $\lambda_{s,d} = \lambda_s\left(1-\left(1+\frac{\lambda_u \eta\mathcal{A}_s
}{3.5\lambda_s}\right)^{-3.5} \right).$ The expectation in the expression for $L_s$ and $\theta$ is with respect to the antenna gains $\Psi_{(.)}$ given in Table~\ref{tab:antenna}. 
\begin{thm}
\label{thm:DLrate}
The mean rate of a typical DL user in the network is given by $\mathrm{R}_{dl,w_a,w_b} = \mathcal{A}_m \mathrm{R}_{dl,m,w_a} + \mathcal{A}_s \mathrm{R}_{dl,s,w_a,w_b}$, where 
\begin{multline*}
\mathrm{R}_{dl,m,w_a} = \mathbb{E}_{\mathcal{F}}\frac{\BW}{\F}  \sum_{n_1=0}^{\infty}\sum_{n_2=0}^{\infty}\frac{\Upsilon_m(n_1,n_2,4.5)}{n_2+1}\\\int_{0}^{\infty}\frac{\sum_{i=1}^{\F_{ad,w_a,X^*}}\mathsf{S}^{dl,m}_{i,a,w_a}(\tau)}{1+\tau}\mathrm{d}\tau,
\end{multline*}
\begin{equation*}
\mathrm{R}_{dl,s,w_a,w_b}= \mathbb{E}_{\mathcal{F}}\frac{\min\left(\mathrm{R}_{a,dl,s,w_a,w_b}, \mathrm{R}_{b,dl,s,w_a,w_b}\right)}{\mathrm{T}\F},
\end{equation*}
\begin{multline*}
\mathrm{R}_{a,dl,s,w_a,w_b} = \bwidth \mathrm{T}  \sum_{n_1=0}^{\infty}\sum_{n_2=0}^{\infty}\frac{\Upsilon_s(n_1,n_2,4.5)}{n_2+1}\times\\
\int_{0}^{\infty}\frac{\sum_{i=1}^{\F_{ad,w_a,X^*}}\mathsf{S}^{dl,s}_{i,a,w_a}(\tau)}{1+\tau}\mathrm{d}\tau
+\mathds{1}(w_b=\mathrm{UAB}) \bwidth \mathrm{T}\times \\(1-\mathbb{E}\left[1/N_{s,d}\right])p_{dl}\sum_{n=1}^{\infty}\frac{\kappa^*_{d,s}(n)}{n}   \int_{0}^{\infty}\frac{\sum_{i=1+\F_{a}}^{\F_{bd}}\mathsf{S}^{dl,s}_{i,a,w_a}(\tau)}{1+\tau}\mathrm{d}\tau ,
\end{multline*}
\begin{multline*}
\mathrm{R}_{b,dl,s,w_a,w_b} =  \bwidth \mathrm{T} \mathbb{E}\left[1/N_{s,d}\right]\times\\ 
\sum_{n=1}^{\infty}\frac{\kappa^*_{d,s}(n)}{n}   \int_{0}^{\infty}\frac{\sum_{i=1+\F_{a}}^{\F_{bd}}\mathsf{S}^{dl}_{i,b,w_b}(\tau)}{1+\tau}\mathrm{d}\tau,
\end{multline*}
where $w_a\in\{S,D\}$ and $w_b\in\{\mathrm{SAB},\mathrm{UAB}\}$. Here, $N_{s,d}$ has distribution  as in \eqref{eq:tagged} with $\epsilon = \frac{\lambda_s \left(1-\left(1+\frac{\mathcal{A}_s \lambda_u \eta}{3.5\lambda_s}\right)^{-3.5} \right)}{\lambda_m}$. Also, $\kappa^{*}_{d,s}$, $\Upsilon_m(.)$ and $\Upsilon_s(.)$ are given in Section~\ref{sec:loaddist}.
\end{thm}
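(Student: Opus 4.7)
The plan is to mirror the proof of Theorem~\ref{thm:ULrate} (Appendix~\ref{App_ULrate}), modifying scheduling probabilities and SINR CCDFs to their downlink counterparts. First I decompose the mean rate of the typical DL user at the origin via total expectation over the association events $\mathcal{E}_m$ and $\mathcal{E}_s$ introduced in Section~\ref{sec:ULanalysis}, giving $\mathrm{R}_{dl,w_a,w_b} = \mathcal{A}_m \mathrm{R}_{dl,m,w_a} + \mathcal{A}_s \mathrm{R}_{dl,s,w_a,w_b}$.

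For an MBS-connected DL UE, the transmitted data per frame equals $\BW\mathrm{T}\sum_{i=1}^{\F_{ad,w_a,X^*}}\mathds{1}(\text{UE scheduled in slot }i)\log_2(1+\SINR^{dl}_{i,a,w_a})$, since DL access occupies the first $\F_{ad,w_a,X^*}$ slots of the frame. Conditioning on $\mathcal{F}$ and on the joint load $(n_1,n_2)$ at the tagged MBS, Assumption~\ref{assumption:2} gives $\Upsilon_m(n_1,n_2,4.5)$ for the other-user load, and the uniform scheduler picks the typical UE with probability $1/(n_2+1)$ in each DL slot. Applying the identity $\mathbb{E}[\log_2(1+X)]=\frac{1}{\ln 2}\int_0^\infty \frac{\mathbb{P}(X>\tau)}{1+\tau}\mathrm{d}\tau$ together with the DL SINR coverage $\mathsf{S}^{dl,m}_{i,a,w_a}$ (obtained from \eqref{eq:DLSINRdist_access} with the Laplace transform derived in Appendix~\ref{sec:App_DL_Laplace}), then dividing by $\mathrm{T}\F$ and taking the outer expectation over $\mathcal{F}$, yields the stated expression for $\mathrm{R}_{dl,m,w_a}$.

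For an SBS-connected DL UE, I separately compute mean access and backhaul data per frame, then invoke $\mathbb{E}[\min(X,Y)]\leq \min(\mathbb{E}X,\mathbb{E}Y)$ (adopted as a working approximation, justified by the discussion preceding Theorem~\ref{thm:ULrate}) to obtain $\mathrm{R}_{dl,s,w_a,w_b}$. The access contribution splits into the native DL access slots $1,\ldots,\F_{ad,w_a,X^*}$, handled exactly as the MBS case but with $\Upsilon_s$ and $\mathsf{S}^{dl,s}_{i,a,w_a}$, and under UAB the poached DL backhaul slots $\F_a+1,\ldots,\F_a+\F_{bd}$. In a poached slot, three independent events must align for the tagged UE to be served: the MBS does not schedule the tagged SBS for backhaul (probability $1-\mathbb{E}[1/N_{s,d}]$ with $N_{s,d}\sim \kappa^*_{d,s}$), the SBS elects to serve a DL access UE (probability $p_{dl}$), and the SBS's uniform scheduler selects the typical UE (probability $1/n$ given $n$ DL UEs at the tagged SBS). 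Combining and applying the log-CCDF identity produces the two summands of \eqref{eq:longeqRate2}'s DL analogue appearing in $\mathrm{R}_{a,dl,s,w_a,w_b}$. For the backhaul contribution, the MBS schedules the tagged SBS with probability $1/N_{s,d}$ in each of the $\F_{bd}$ DL backhaul slots, and the identity applied to $\mathsf{S}^{dl}_{i,b,w_b}$ gives $\mathrm{R}_{b,dl,s,w_a,w_b}$.

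The main obstacle is bookkeeping the two-level scheduling (MBS choosing SBS, SBS choosing UE) cleanly in the UAB poached slots, and verifying that the factors $(1-\mathbb{E}[1/N_{s,d}])$, $p_{dl}$, and $\kappa^*_{d,s}(n)/n$ factorize as claimed; this is enabled by the independent-mark load assumptions (Assumptions~\ref{assumption:1}--\ref{assumption:2}) together with the fact that the MBS's scheduling draw, the poaching coin flip, and the SBS's uniform UE selection are independent by the construction in Section~\ref{sec:scheduling}. The remainder of the argument is a routine replay of the UL mean-rate derivation, with DL receiver placed at the origin throughout.
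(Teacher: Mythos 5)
Your proposal is correct and follows essentially the same route as the paper: the paper's own proof of Theorem~\ref{thm:DLrate} is simply a reference to the UL derivation in Appendix~\ref{App_ULrate}, adjusted exactly as you describe --- DL UEs occupy the first $\F_{ad,w_a,X^*}$ access slots and poach the DL backhaul slots $\F_a+1,\ldots,\F_a+\F_{bd}$, the uniform-scheduling factor becomes $1/(n_2+1)$, and the load variable becomes $N_{s,d}$ with the $\eta$-dependent mean. Your identification of the three independent factors in a poached slot and the $\min$-of-expectations approximation matches the paper's argument.
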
 
\begin{proof}
Follows Appendix~\ref{App_ULrate}. Note the subtle differences in the limits of summations inside the integrals here compared to Theorem~\ref{thm:ULrate}. This is due to the different subframes in which an UL or DL UE or SBS is scheduled. 
\end{proof}
\begin{figure*}
  \centering
\subfloat[SINR validation. Dotted lines are with Monte Carlo simulations. ]
{\label{fig:plot1}{\includegraphics[width= \columnwidth]{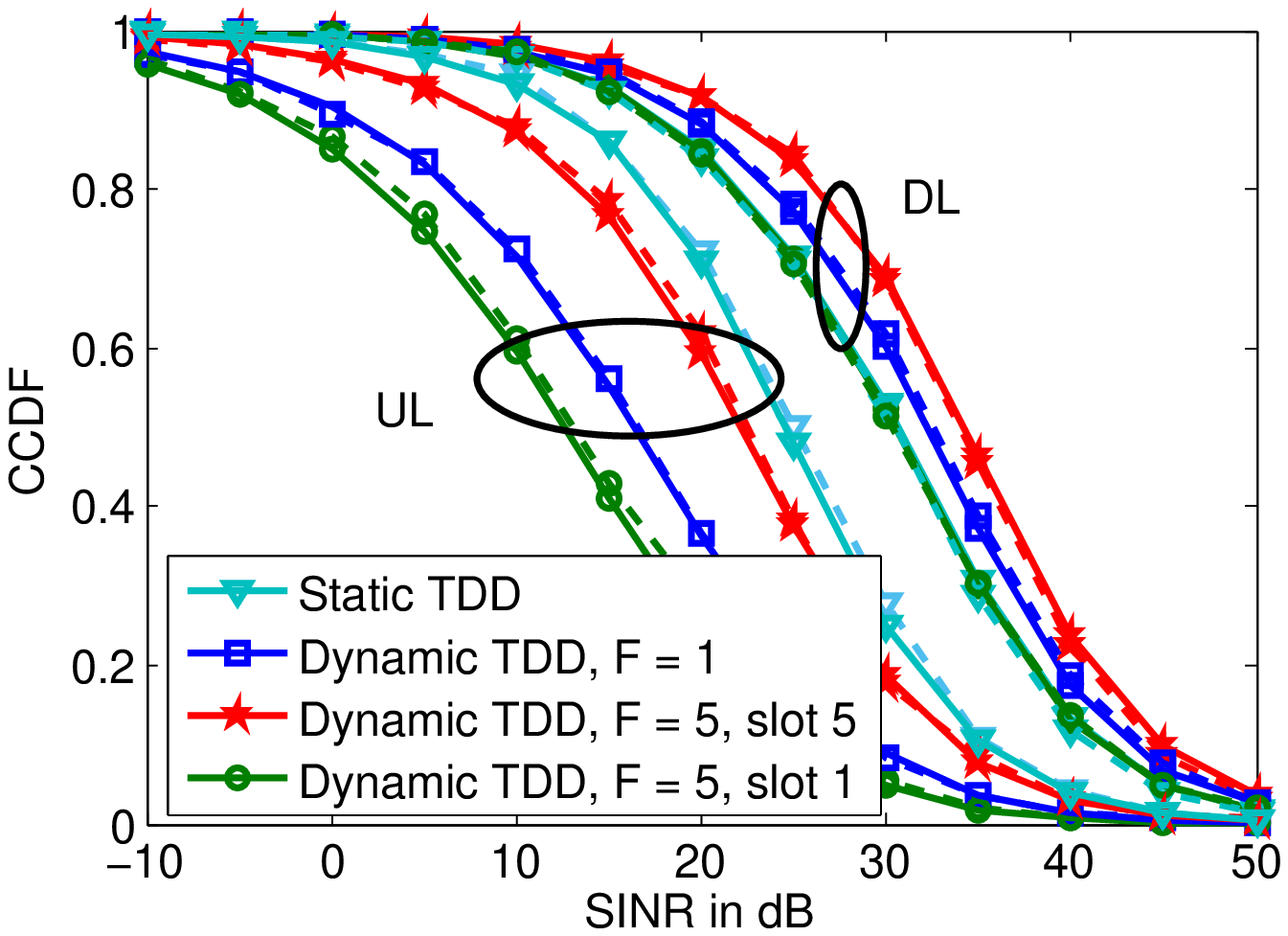}}}
\subfloat[UL and DL Mean Rates]
{\label{fig:plot2}{\includegraphics[width=\columnwidth]{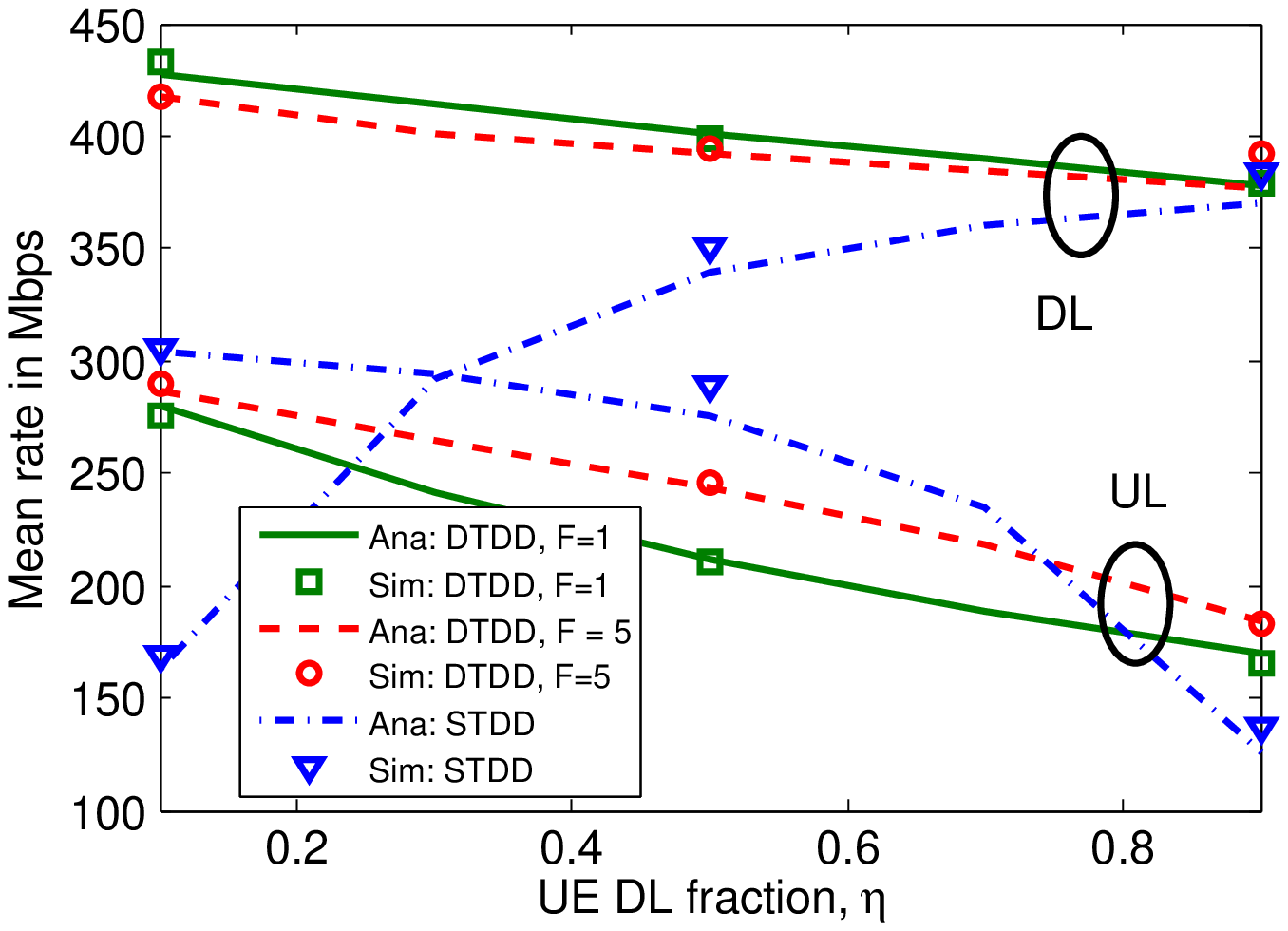}}}
\caption{Impact of frame size and analysis validation.}
 \label{fig:plot12}
 \end{figure*}
\begin{cor}
The mean rate of a typical user is given by 
$
\mathrm{R}_{w_a,w_b} = \eta \mathrm{R}_{dl,w_a,w_b} + (1-\eta) \mathrm{R}_{ul,w_a,w_b}.
$
\end{cor}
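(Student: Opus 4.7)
The plan is to obtain this identity as a direct consequence of the law of total expectation combined with the independent marking of the user PPP. By assumption, each UE in $\UEprocess$ is independently marked as DL with probability $\eta$ and UL with probability $1-\eta$, and this marking is independent of its spatial location and of the BS process $\BSprocess$. Consequently, by the marking theorem, the DL and UL user point processes $\DLUEprocess$ and $\ULUEprocess$ are independent homogeneous PPPs with densities $\eta\lambda_u$ and $(1-\eta)\lambda_u$, respectively, whose superposition is $\UEprocess$.

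First I would invoke Slivnyak's theorem for the typical user: conditioning on a user at the origin amounts to adding a point to the origin of $\UEprocess$, and the type of this added point is independently DL with probability $\eta$ and UL with probability $1-\eta$. Let $\mathcal{D}$ be the event that the typical user is a DL user, so $\mathbb{P}(\mathcal{D})=\eta$, and let $R$ denote the (temporally and spatially averaged) rate of the typical user. By the definitions of $\mathrm{R}_{dl,w_a,w_b}$ and $\mathrm{R}_{ul,w_a,w_b}$ in Theorems~\ref{thm:DLrate} and \ref{thm:ULrate} — namely, the expected rate conditioned on the typical user being DL and UL, respectively — we have $\mathbb{E}[R\mid\mathcal{D}]=\mathrm{R}_{dl,w_a,w_b}$ and $\mathbb{E}[R\mid\mathcal{D}^c]=\mathrm{R}_{ul,w_a,w_b}$.

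Then applying the tower property,
\begin{equation*}
\mathrm{R}_{w_a,w_b} = \mathbb{E}[R] = \mathbb{P}(\mathcal{D})\,\mathbb{E}[R\mid\mathcal{D}] + \mathbb{P}(\mathcal{D}^c)\,\mathbb{E}[R\mid\mathcal{D}^c],
\end{equation*}
which yields the claimed decomposition $\mathrm{R}_{w_a,w_b} = \eta \mathrm{R}_{dl,w_a,w_b} + (1-\eta) \mathrm{R}_{ul,w_a,w_b}$. The only real subtlety to verify is that the conditioning inherent in the Palm distribution of $\UEprocess$ does not disturb the independence of the DL/UL mark of the typical user from the rest of the network; this is immediate from the independent marking construction, so there is no substantive obstacle. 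The bulk of the analytical work has already been done in Theorems~\ref{thm:ULrate} and \ref{thm:DLrate}, and this corollary simply assembles the two halves.
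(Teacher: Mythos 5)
Your proposal is correct and matches the paper's own (one-line) argument: the typical user at the origin is DL with probability $\eta$ and UL with probability $1-\eta$, and the result follows by total expectation. Your additional remarks on independent marking and Slivnyak's theorem simply make explicit what the paper leaves implicit.
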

\begin{proof}
The typical point at origin is DL with probability $\eta$ and UL with probability $1-\eta$. %Thus, the mean rate expression follows. 
\end{proof}
\begin{rem}
We recommend to first evaluate the SINR coverage for different thresholds, and then use the stored values to compute the numerical integrals involved in the mean rate formulae. Our codes can be accessed at \cite{KulCode17}.  
\end{rem}

\section{Numerical results}
\label{sec:results}
First we study static vs dynamic TDD when all BSs are MBSs. Then we introduce wirelessly backhauled SBSs into the network and study the comparison of TDD schemes. 
 \begin{figure*}
  \centering
\subfloat[$f_c = 28$ GHz, $\res = 200$ MHz]
{\label{fig:plot3}{\includegraphics[width= \columnwidth]{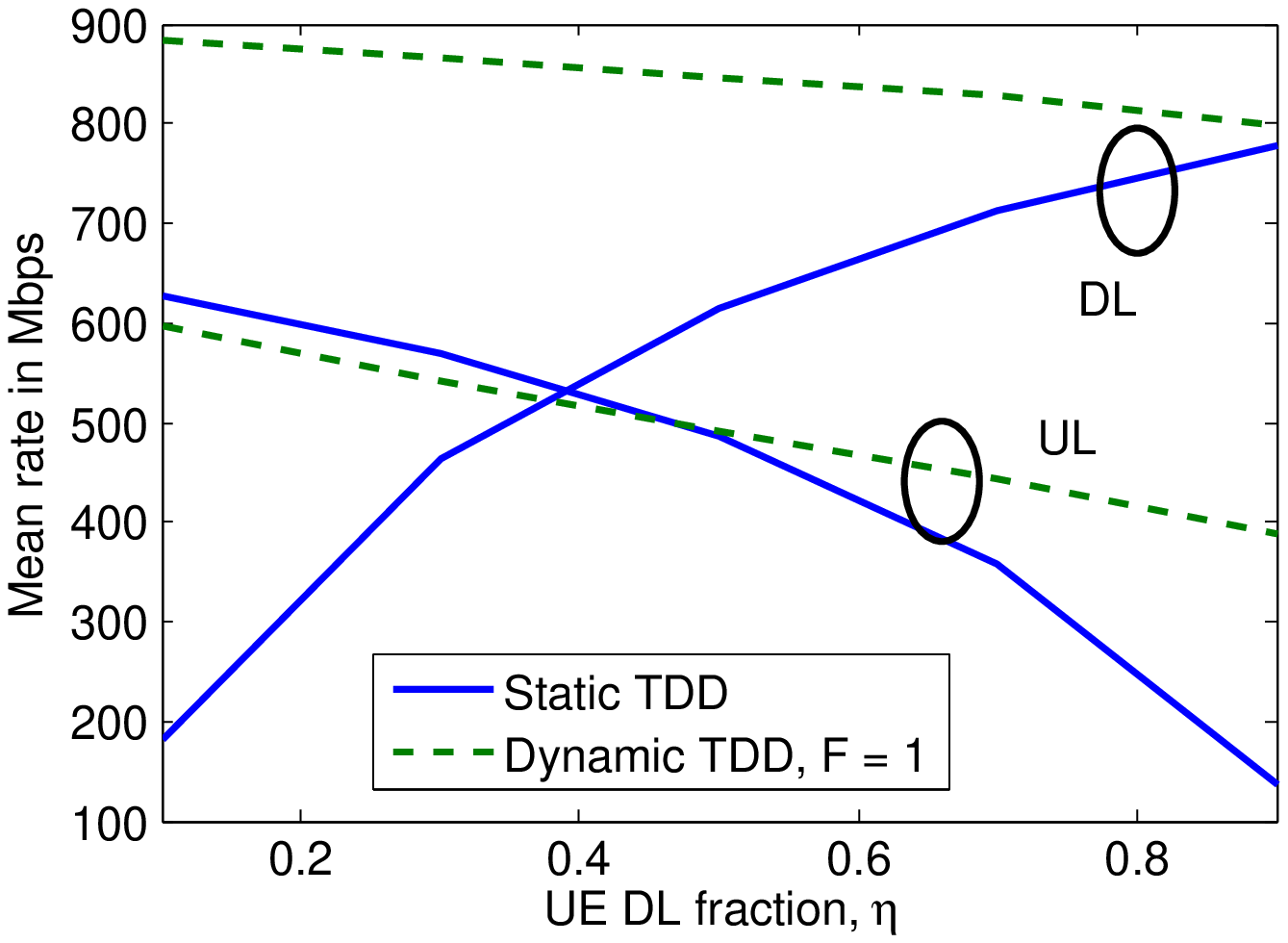}}}
\subfloat[$f_c = 73$ GHz, $\res = 2$ GHz]
{\label{fig:plot4}{\includegraphics[width=\columnwidth]{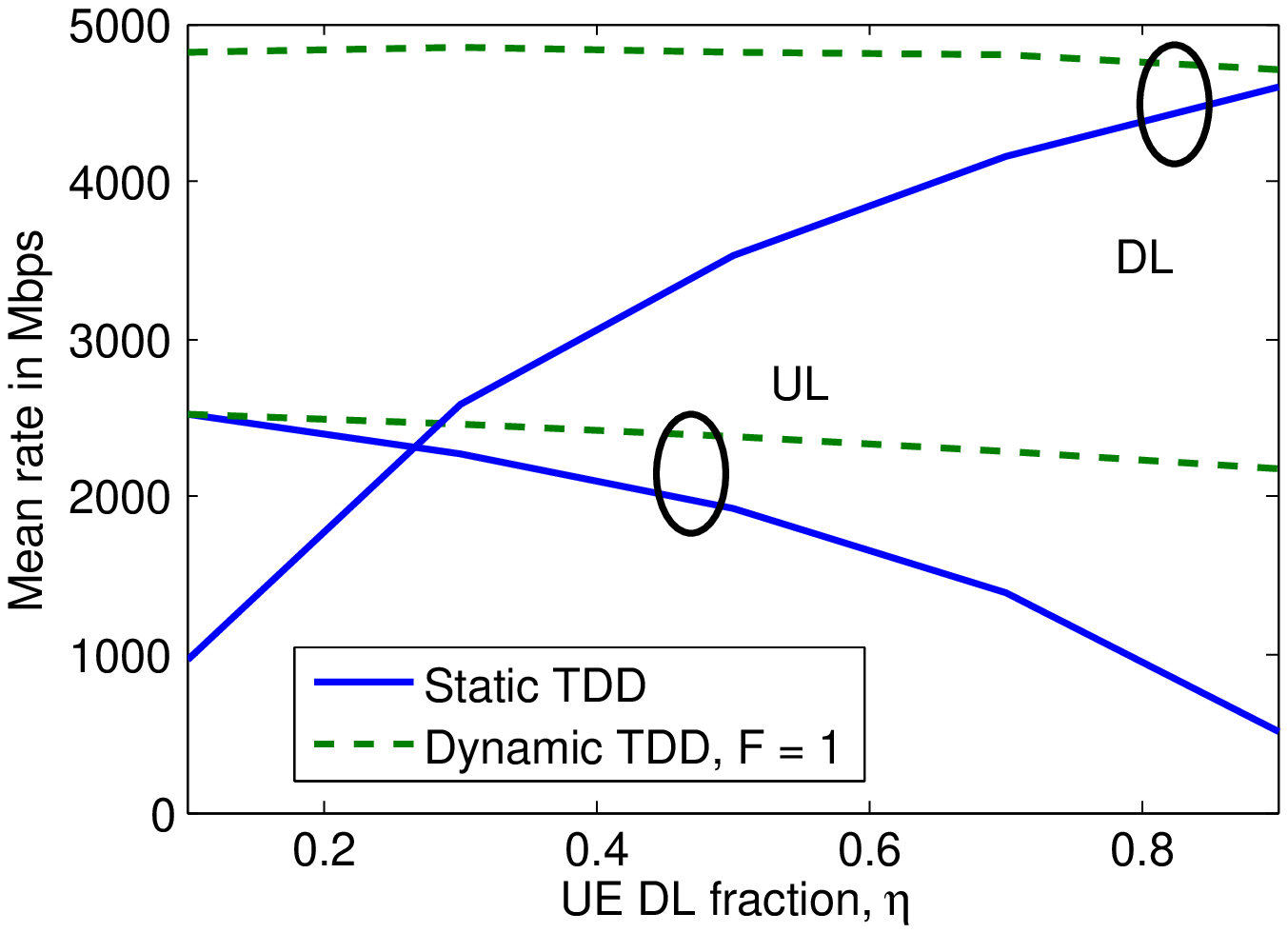}}}
\caption{Dynamic vs Static TDD, $\lambda_u = 200/$km$^2$. Dynamic TDD helps the ``rare" UEs in the network perform much better.}
 \label{fig:plot34}
\end{figure*}
\begin{figure*}
  \centering
\subfloat[Uplink access. Dotted lines with Monte Carlo simulations. ]
{\label{fig:plot5}{\includegraphics[width=\columnwidth]{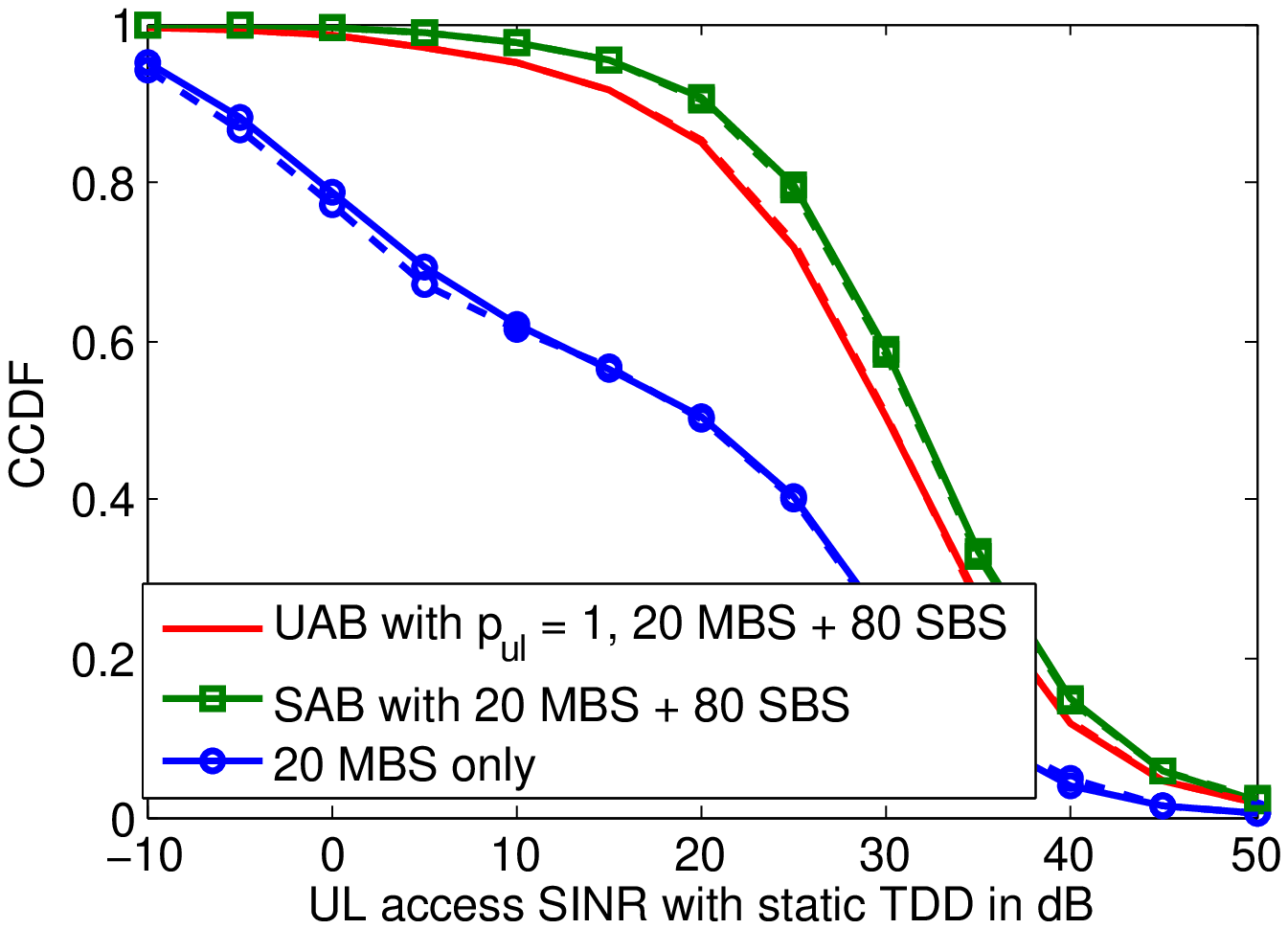}}}
\subfloat[Backhaul]
{\label{fig:plot6}{\includegraphics[width= \columnwidth]{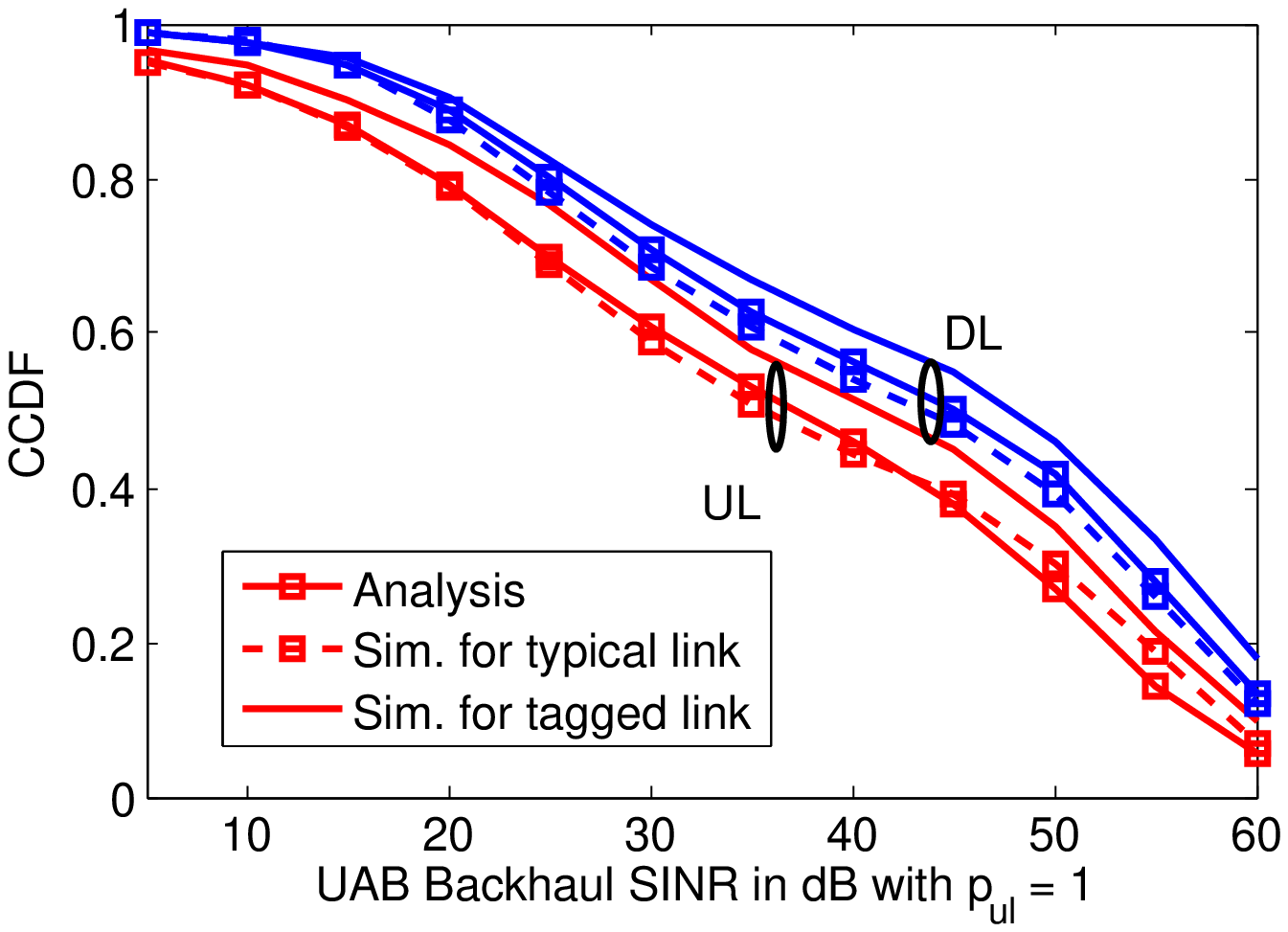}}}
\caption{SINR validation with self-backhauling for $\eta = 0.5$. Also shows self-backhauling is a good coverage solution.}
 \label{fig:plot56}
\end{figure*}
\subsection{Dynamic vs static TDD when all BSs are MBSs}
\label{sec:resultsMBSonly}
{\bf Validation of analysis and impact of frame size.}
Fig.~\ref{fig:plot1} validates UL and DL SINR distribution with static and dynamic TDD for frame-size $\F=1$ and $\F=5$ with $\eta = 0.5$ and $\lambda_u = 500/$km$^2$. The Monte Carlo simulations match the analytical results very well. Fig.~\ref{fig:plot1} also shows that UL SINR coverage with static TDD is better than with dynamic TDD but DL SINR coverage with dynamic TDD is better than with static TDD. This is primarily because of the transmit power disparity between UL and DL. For a moderately loaded system, as considered in this setup, the average number of interferers seen by a typical UL user is roughly the same with static and dynamic TDD. However, with dynamic TDD some of these interferers now have 10 dB more transmit power, which increases the interference and thus lowers SINR coverage. Note that the location of the interferers with static and dynamic TDD are different and thus a theoretical result like stochastic dominance of UL SINR with static TDD over dynamic TDD cannot be stated. 

Fig.~\ref{fig:plot1} further shows that the UL SINR coverage with dynamic TDD for slot 5 with $\F=5$ is better by about $10$ dB than $\F=1$ and by $15$ dB for slot 1 with $\F = 5$, which is significant. This can be explained as follows. For $\F=1$, the probability that an interferer is DL is 0.5, whereas for $\F=5$ the probability rises to 0.95 (computed using the formula in Lemma \ref{lem:Fad}) for slot 1 and decreases to 0.04 for slot 5. Since DL transmit power is much higher than UL, the UL SINR coverage for $\F=1$ falls between the two curves for $\F=5$. {\em Thus, there is an inherent UL interference mitigation with larger frame size} since UL UE has more chances on being scheduled towards end of the frame than at the beginning, as can be seen in Fig.~\ref{fig:plot2}. Similar observations can be made for DL but are less pronounced since DL to DL interference is less significant than DL to UL due to low UL transmit power. 

{\bf Dynamic TDD not desirable in high load interference-limited scenarios but desirable in low load and asymmetric traffic scenarios.} Fig.~\ref{fig:plot2} plots the UL and DL mean rates with static and dynamic TDD for different values of $\eta$. First, note that the analytical formula gives a close match with the Monte Carlo simulations. Dynamic TDD essentially helps boost the rates of the ``rare" UEs in the network. For example, the DL rates double when $\eta = 0.1$ with dynamic TDD. In this scenario, there is about $5.6\%$ loss in UL rate with dynamic TDD. Similarly, note the $1.5\times$ gain for UL when $\eta = 0.9$. This indicates that dynamic TDD can be beneficial in asymmetric traffic scenarios but the gains are not very significant for $\eta$ close to 0.5, in fact there is $15\%$ gain for DL but $11\%$ loss for UL. {\em Thus, in high load interference-limited scenarios it is beneficial to switch to load aware static TDD}. The comparison is more persuasive for dynamic TDD in a low load scenario as shown in Fig.~\ref{fig:plot3} and even more for noise-limited 73 GHz network with 2 GHz bandwidth as shown in Fig.~\ref{fig:plot3}. For example, Fig.~\ref{fig:plot3} shows that the mean rates with DL (UL) are $5\times$ with dynamic TDD for $\eta = 0.1(0.9)$. Even for $\eta = 0.5$, there is a gain of $23\%$ for UL and $37\%$ for DL. To summarize the observations for MBS only scenario: low load, asymmetric traffic, and noise-limitedness benefit dynamic TDD. 
\subsection{Impact of self-backhauling}
{\bf Validation of analysis.}
\begin{figure}
\centering
\includegraphics[width = \columnwidth]{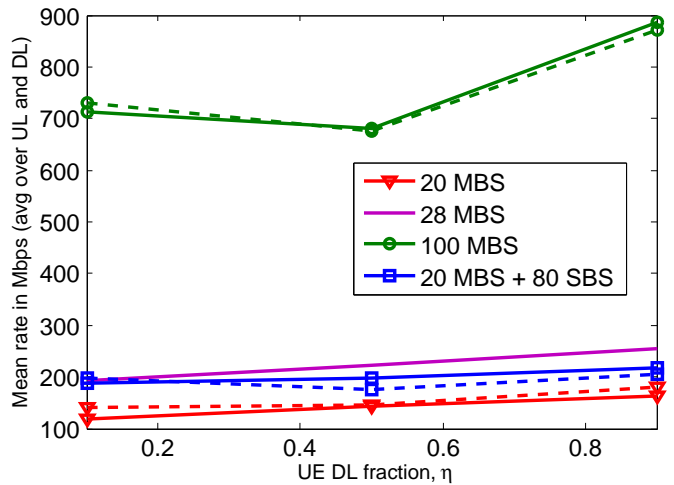}
\caption{Self-backhauling is a poor substitute for wired backhauling. Dotted lines with Monte Carlo simulations. }
\label{fig:plot7}
\end{figure}
Fig.~\ref{fig:plot56} validates the SINR coverage for access and backhaul links for the 28GHz network under consideration, and a very close match is seen between analysis and Monte Carlo simulations. In Fig.~\ref{fig:plot6} it can be seen that assuming typical SBS SINR instead of tagged SBS SINR can give an error of about 2-3dB, which is reasonable for analyzing mean rates as seen in Fig.~\ref{fig:plot7}. 
\begin{figure*}
  \centering
\subfloat[28 GHz, 200 MHz]
{\label{fig:plot13}{\includegraphics[width=\columnwidth]{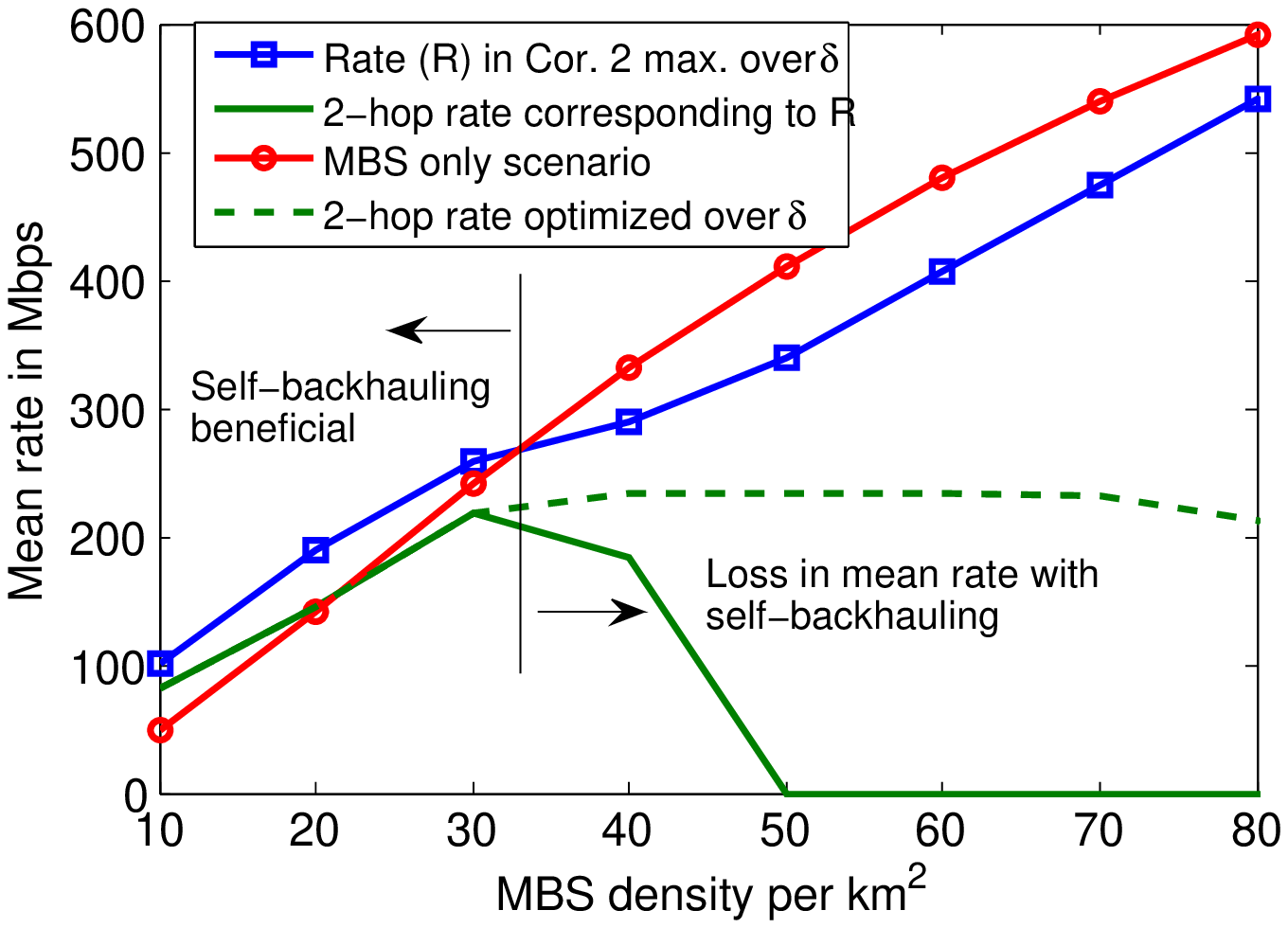}}}
\subfloat[73 GHz, 2 GHz]
{\label{fig:plot14}{\includegraphics[width= \columnwidth]{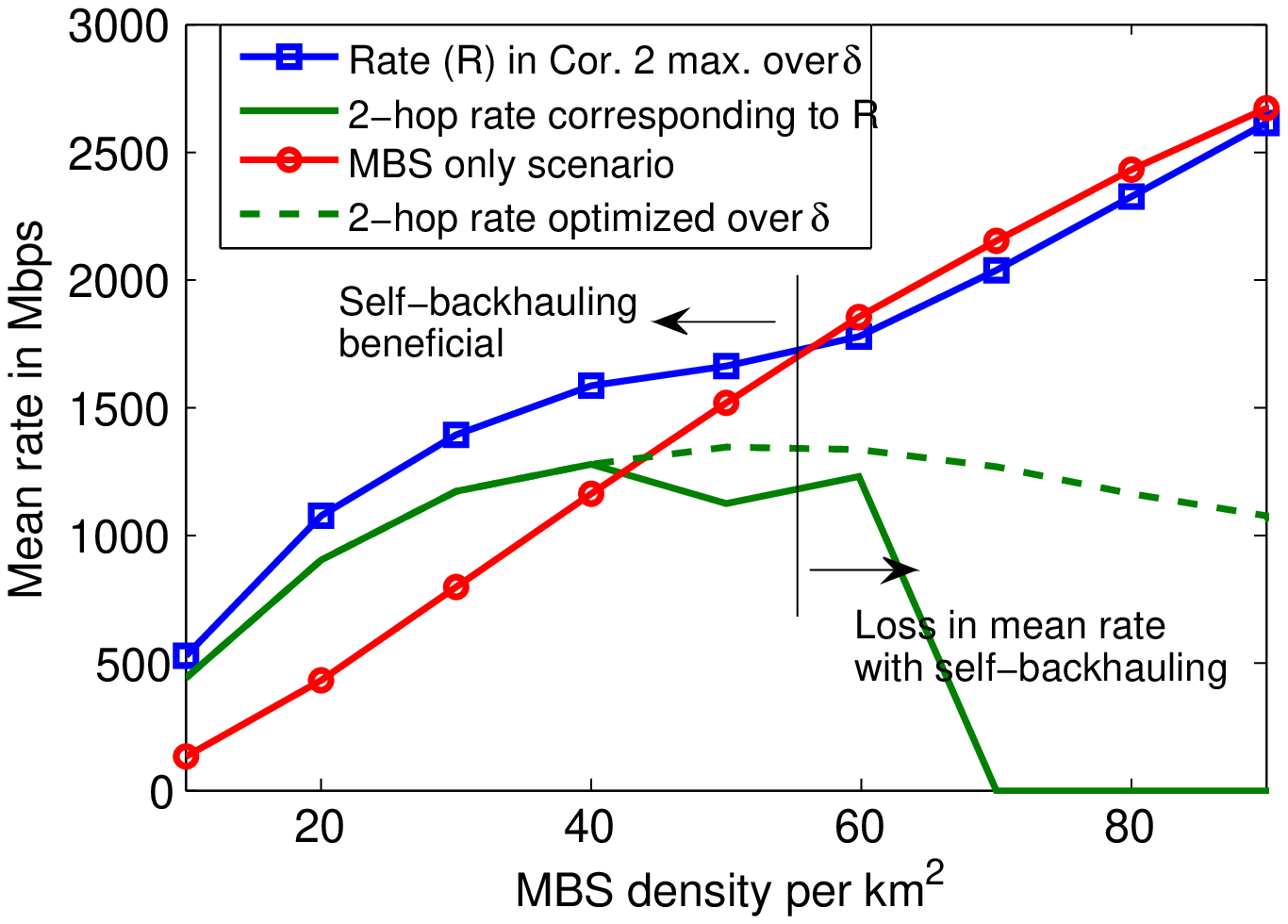}}}
\caption{Fix $\lambda_b = 100/$km$^2$ and vary $\lambda_m$. Optimization over $\delta$ is done by choosing the best from $\{0.1,0.2,\ldots,1\}$.}
 \label{fig:plot1314}
\end{figure*}

{\bf Low cost coverage solution but not for boosting mean rate.}
Fig.~\ref{fig:plot6} also shows that the $95^\text{th}$ percentile SINR increases by almost 20 dB when 80 additional SBSs are introduced to a baseline MBS only network. This clearly shows the coverage improvement with self-backhauling that translates into significant gain in cell edge rates. For example, here the cell edge rates go from $4.7\times10^6$ to $2.5\times 10^7$ for $\eta=0.5$. However, as can be seen from Fig.~\ref{fig:plot7} the mean rates increase by only $33\%-57\%$ across different $\eta$ after addition of 80 SBSs. 
This is equivalent to adding only 8 MBSs in terms of mean rate, although the 20 dB coverage improvement will not be seen in that case. Note that the mean rate values for the self-backhauling case in  Fig.~\ref{fig:plot7} are for static TDD with SAB and $\delta$ is chosen to be the maximizer of mean rates. If 80 MBSs were added instead of 80 SBSs, the rates increase by more than 7$\times$ compared to baseline scenario. Thus, self-backhauling is a low cost coverage solution and not for increasing data rates. 

{\bf Trends with network densification.}
Fig.~\ref{fig:plot1314} compares the mean rate of self-backhauled networks with $\lambda_b$ fixed at $100/$km$^2$ and varying $\lambda_m/\lambda_s$ and MBS only networks with $\lambda_m=100/$km$^2$. {\em One would expect that adding SBSs on top of MBSs would always increase the rate. However, counter-intuitively this does not occur.} When MBS density is low, as expected adding SBSs such that total density is $100$/km$^2$ increases data rates. The rates shown in the Figure correspond to the access backhaul split that maximizes rate. When MBS density$\geq 50/$km$^2$ in Fig.~\ref{fig:plot13} and $\geq 70/$km$^2$ in Fig.~\ref{fig:plot14}, the 2 hop rates corresponding to optimal $\delta$ go to zero implying $\delta = 1$. This occurs because the 2 hop rates are much lower than the single hop rates (the dotted line in the figure shows this wherein $\delta$ was chosen to maximize the 2 hop rate) and maximizing over mean rate kills the 2 hop rates to zero, giving as many resources to direct links. This indicates that {\em when there are enough MBSs, adding just a few SBSs may not be beneficial as the slight benefit in coverage is overshadowed by the loss due to 2 hops}. The losses can be converted to no-loss by biasing UEs towards MBS. Fig.~\ref{fig:plot14} corresponds to a noise-limited scenario and also in this case the DL access transmit power is reduced to 20dBm keeping backhaul transmit power as 30dBm as an example of a network which is less backhaul-limited. In this case the ``beneficial" regime with self-backhauling is pushed further towards $\lambda_b$. 
\begin{figure*}
  \centering
\subfloat[SBS saturation density increases for higher MBS density.]
{\label{fig:plot15}{\includegraphics[width= \columnwidth]{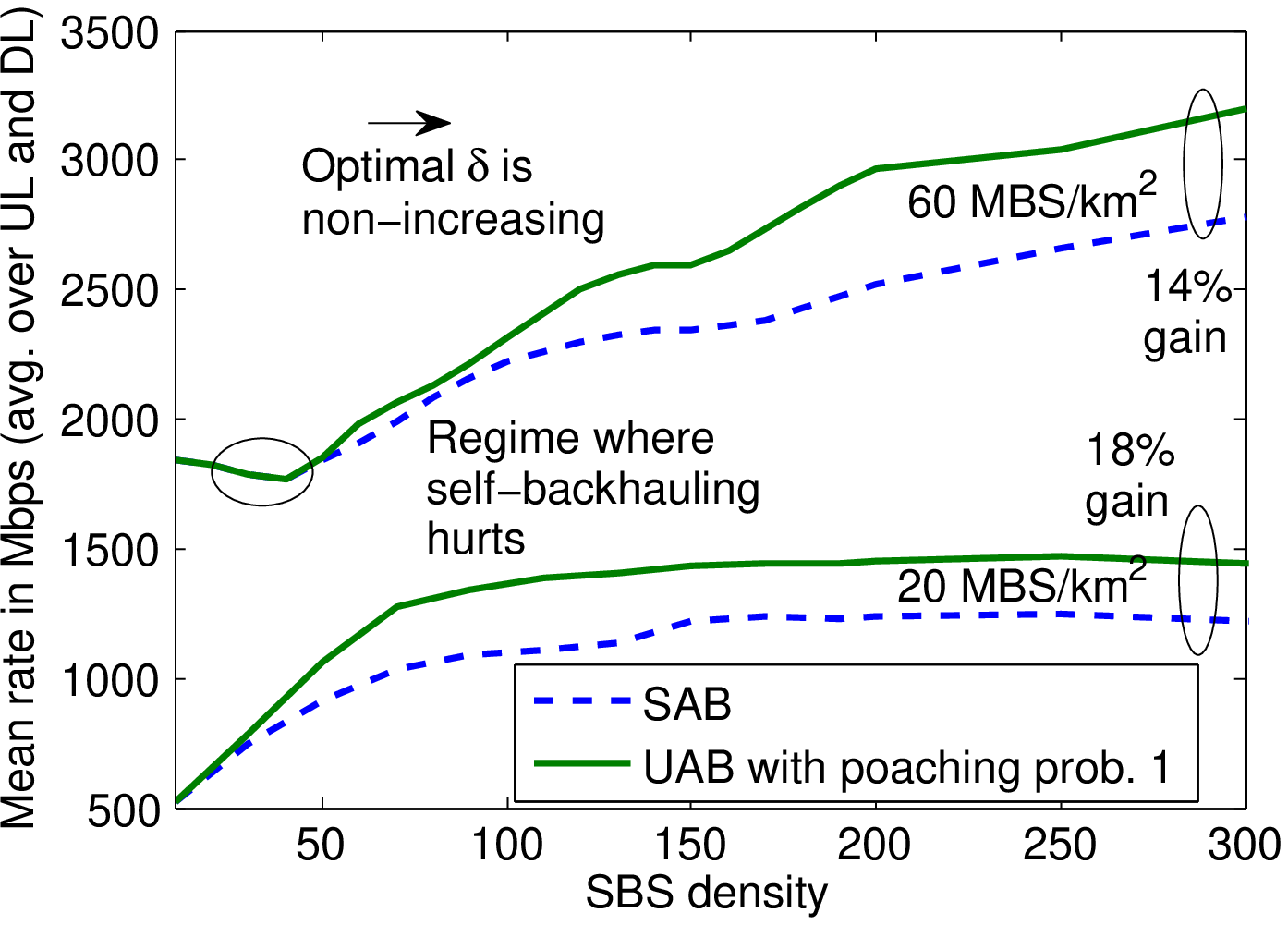}}}
\subfloat[Network becomes backhaul limited with increasing SBS density.]
{\label{fig:plot16}{\includegraphics[width=\columnwidth]{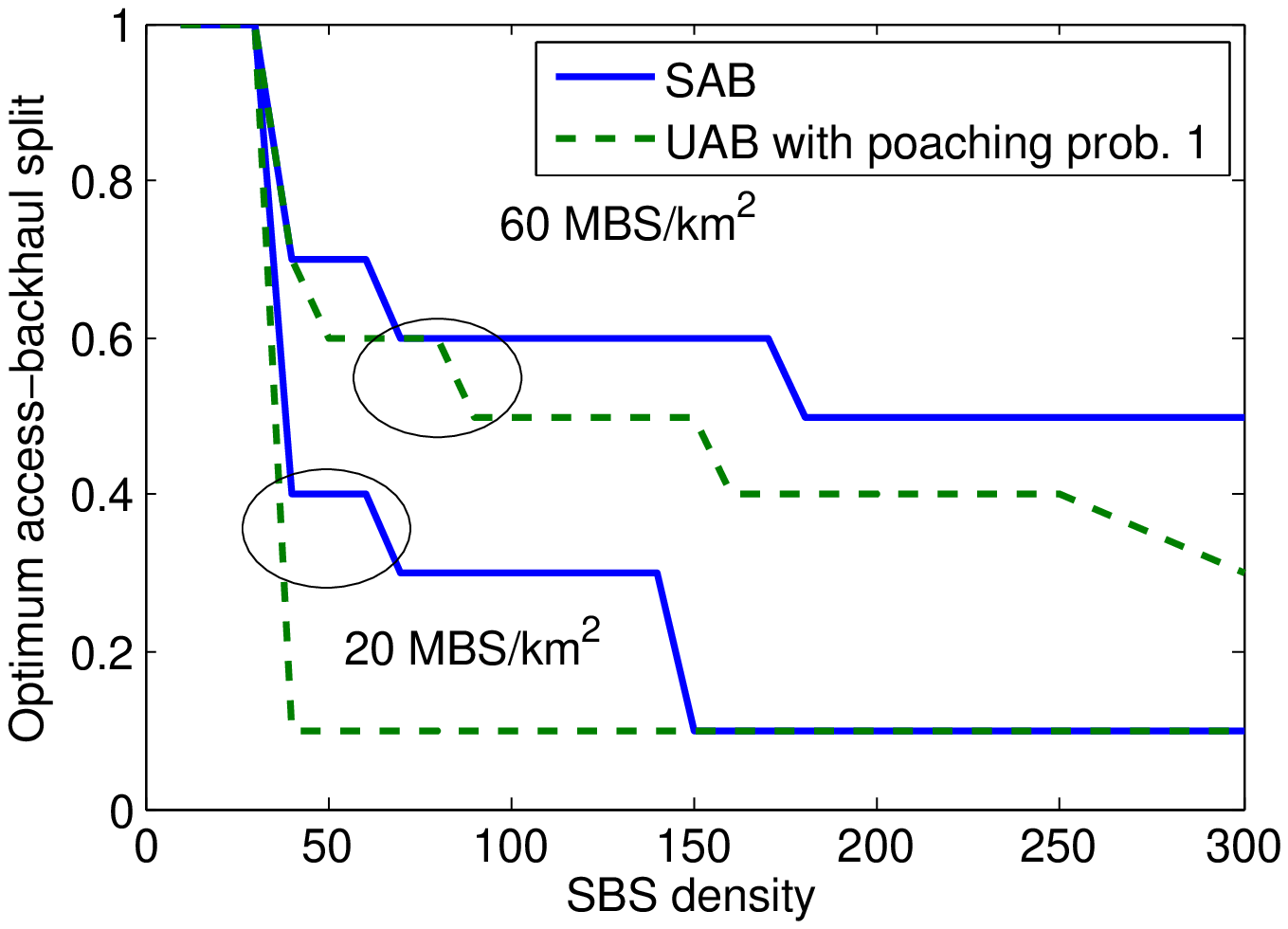}}}
\caption{Fix $\lambda_m$ and vary $\lambda_s$. Here, $\eta=0.5$.}
 \label{fig:plot1516}
 \end{figure*}

In Fig.~\ref{fig:plot15}, for a fixed $\lambda_m$, the value of $\lambda_s$ is increased. For each self-backhauling configuration an optimum $\delta$ is chosen from the set $\{0.1,0.2,\ldots,1\}$ and is shown in Fig.~\ref{fig:plot16}. {\em The optimum $\delta$ is non-increasing with SBS density and UAB as is expected}. Since more UEs connect with SBSs, we need more backhaul slots in a frame. There are another couple of observations to be made in Fig.~\ref{fig:plot15}. Firstly, note that UAB gives about $10-20\%$ gain over SAB. The gain is negligible or none at lower SBS densities wherein there are not many backhaul slots to be poached. Also note that {\em the rates saturate sooner in the 20 MBS case than the 60 MBS case}. As SBS density becomes large, the network becomes {\em backhaul limited} as indicated by the decreasing optimum $\delta$ in Fig.~\ref{fig:plot16}. Similar observations can be noted for the 28 GHz network, although the gains with UAB are negligible in that case due to increasing interference. 
  \begin{figure*}
    \centering
  \subfloat[Downlink]
  {\label{fig:plot8}{\includegraphics[width= \columnwidth]{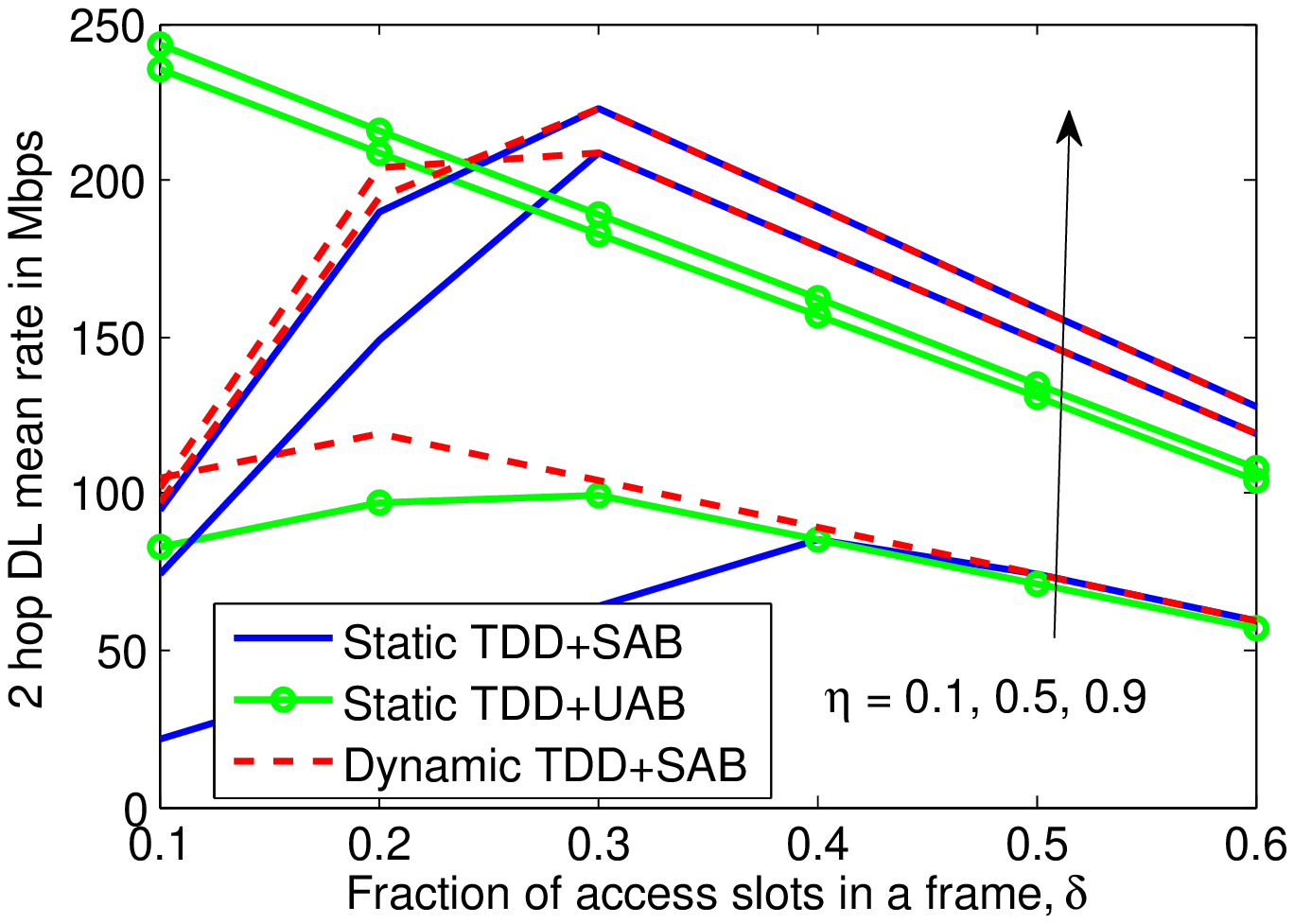}}}
  \subfloat[Uplink]
  {\label{fig:plot9}{\includegraphics[width=\columnwidth]{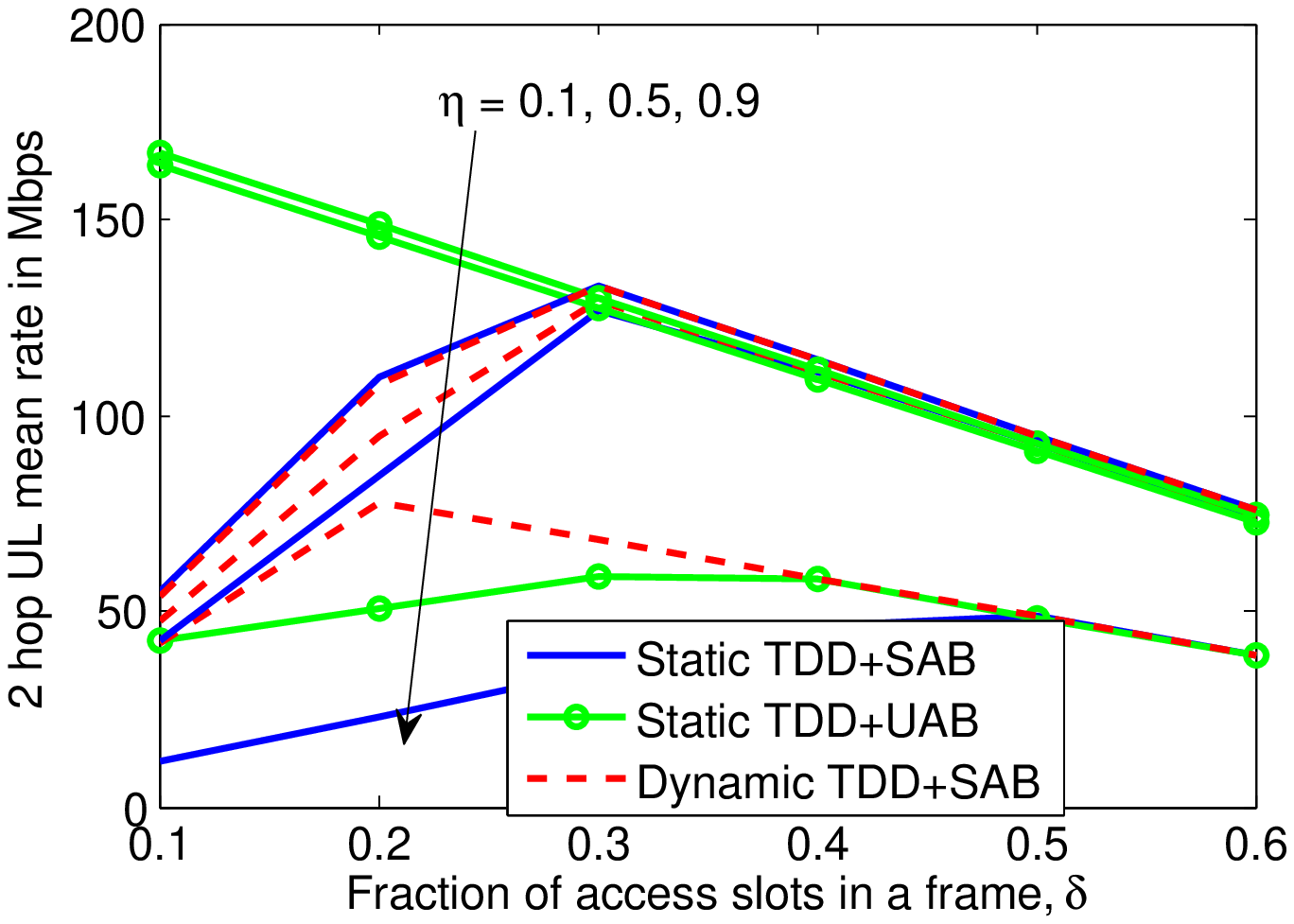}}}
  \caption{Comparison of TDD schemes across different $\delta$ and $\eta$, and impact on optimal $\delta$.}
   \label{fig:plot89}
  \end{figure*}
\subsection{Comparison of TDD schemes}
{\bf Gains from Dynamic TDD and UAB held back by weak backhaul links.} Fig.~\ref{fig:plot89} shows the comparison of 2 hop rates with different TDD schemes. As expected from our observations in Section~\ref{sec:resultsMBSonly}, for $\eta=0.1,0.9$ dynamic TDD provides about $1.5\times$ gains for DL/UL compared to static TDD for an optimal $\delta$ chosen for each scheme. For $\eta = 0.5$, the gains with dynamic TDD are completely overshadowed by weak backhaul links for the optimum choice of $\delta$ but  $20-30\%$ gains are visible for non-optimal $\delta$ lower than the optimal. Note that choosing a $\delta$ higher than optimum gives same rate as static TDD since the network is backhaul limited and this is the backhaul rate on the 2 hop link. This is clearer looking at the access and backhaul rates separately for DL UEs operating on 2 hops, as shown in Fig.~\ref{fig:plot17}. Another observation from Fig.~\ref{fig:plot89} is that {\em the optimal $\delta$ with dynamic TDD and UAB is lower or the same as compared to static TDD with SAB}. The reason is that both dynamic TDD and UAB boost access rates for a fixed $\delta$ (see Fig.~\ref{fig:plot17}) and thus can allow providing more backhaul slots in a frame still being able to achieve higher 2 hop rates. Fig.~\ref{fig:plot17} also shows a potential of up to $2-5\times$ gains in DL rates with UAB for $\eta = 0.5$ and different $\delta$, but the gains are held back by weak backhaul links. 

{\bf UAB gains are not limited to asymmetric traffic.} Fig.~\ref{fig:plot89} shows that with UAB, unlike dynamic TDD, about $30\%$ gains are still observed in UL 2 hop rates for $\eta=0.5$. The gains with DL are only $10\%$ since due to increasing interference, $p_{dl}=1$ is not optimal as seen from Fig.~\ref{fig:plot10}. Also, since DL access rates are closer to backhaul rates due to higher transmit power compared to UL, the network is even more backhaul-limited from DL UE perspective.

{\bf Consistent $30\%$ gains in mean rates across all traffic scenarios with dynamic TDD + UAB in a noise-limited scenario.}
Finally, shifting our focus back to the 73 GHz network mentioned before, which had stronger backhaul links, we can see in Fig.~\ref{fig:plot_1112} that employing dynamic TDD with UAB can offer a uniform $30\%$ gain in  UL/DL mean rate over static TDD with SAB for all traffic scenarios captured by $\eta$. With no UE antenna gain, these gains are expected to be even higher as the access links become much weaker than backhaul. {\em In conclusion, one can harness the gains from dynamic TDD and UAB only if backhaul links are strong enough.} In the future, it would be desirable to develop analytical models that allow different antenna gains and path loss for backhaul links which would likely make dynamic TDD and UAB appear more favourable. 
\begin{figure*}
  \centering
\subfloat[Optimal $p_{dl}$ is lower for higher $\eta$.]
{\label{fig:plot10}{\includegraphics[width=\columnwidth]{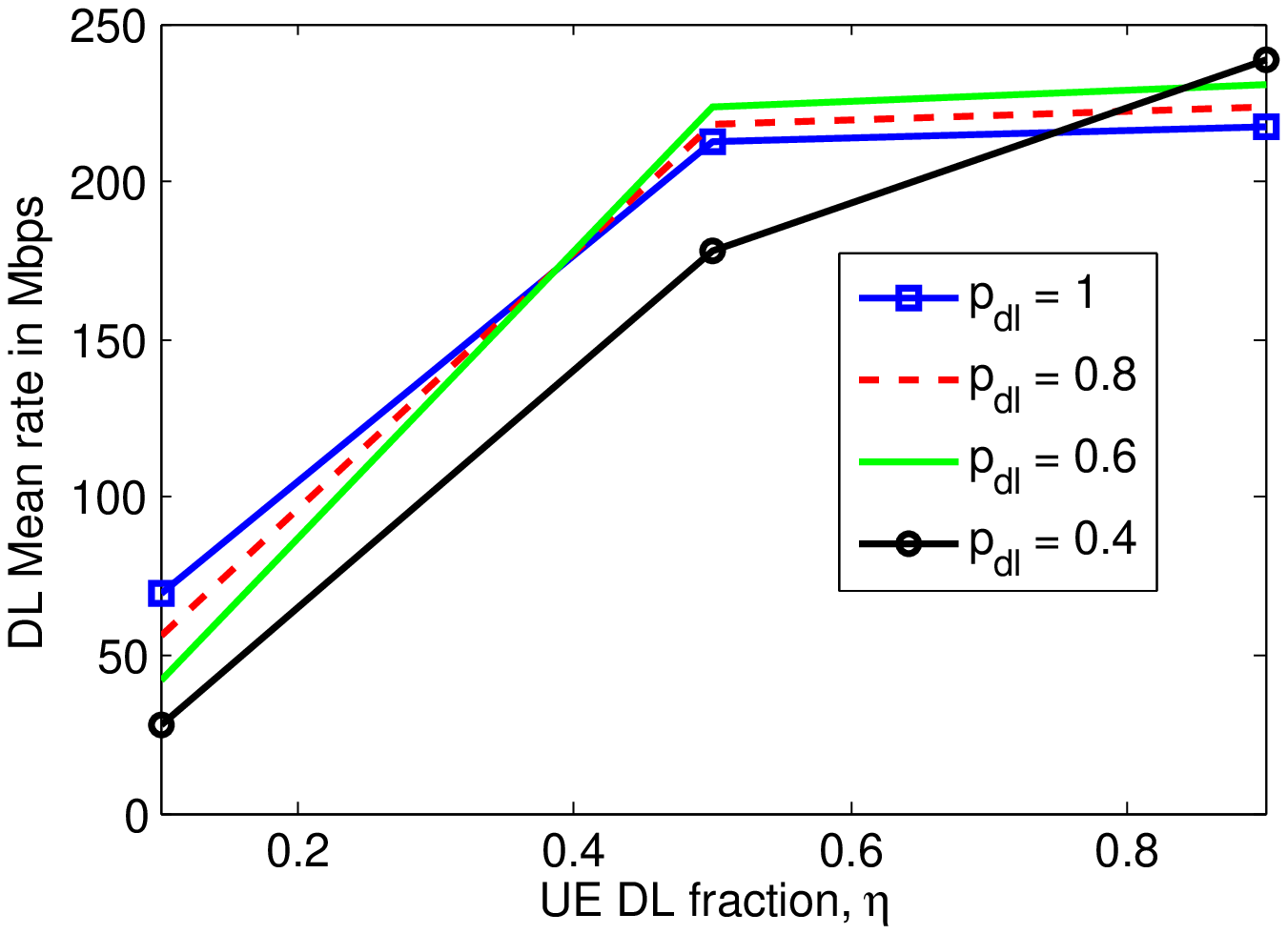}}}
\subfloat[`A' for access and `B' for backhaul. $p_{dl}=1$, $\eta = 0.5$.]
{\label{fig:plot17}{\includegraphics[width= \columnwidth]{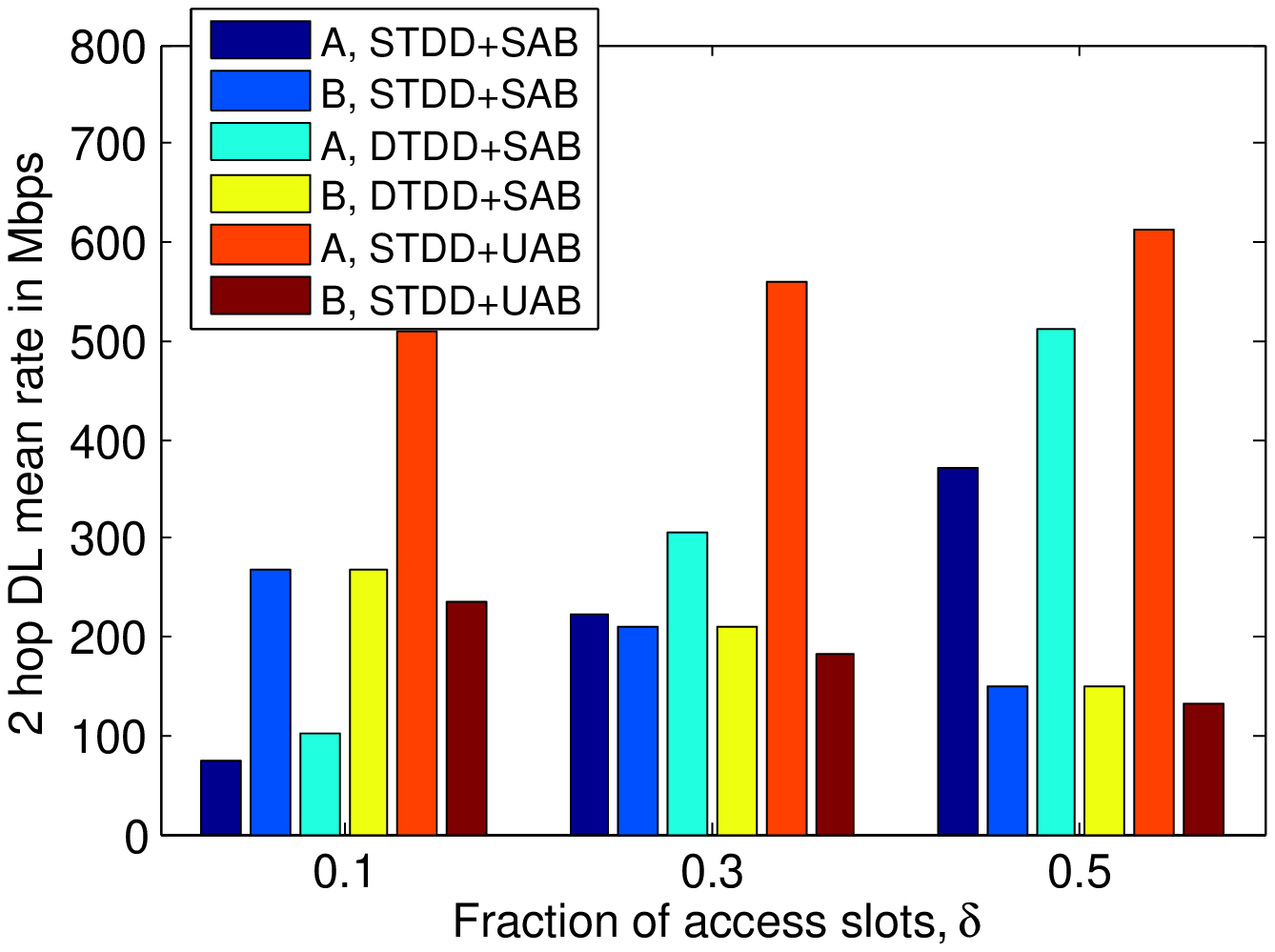}}}
\caption{DL mean rates conditioned that UE connects to SBS.}
 \label{fig:plot1017}
\end{figure*}
\begin{figure*}
  \centering
\subfloat[Downlink]
{\label{fig:plot_11}{\includegraphics[width=\columnwidth]{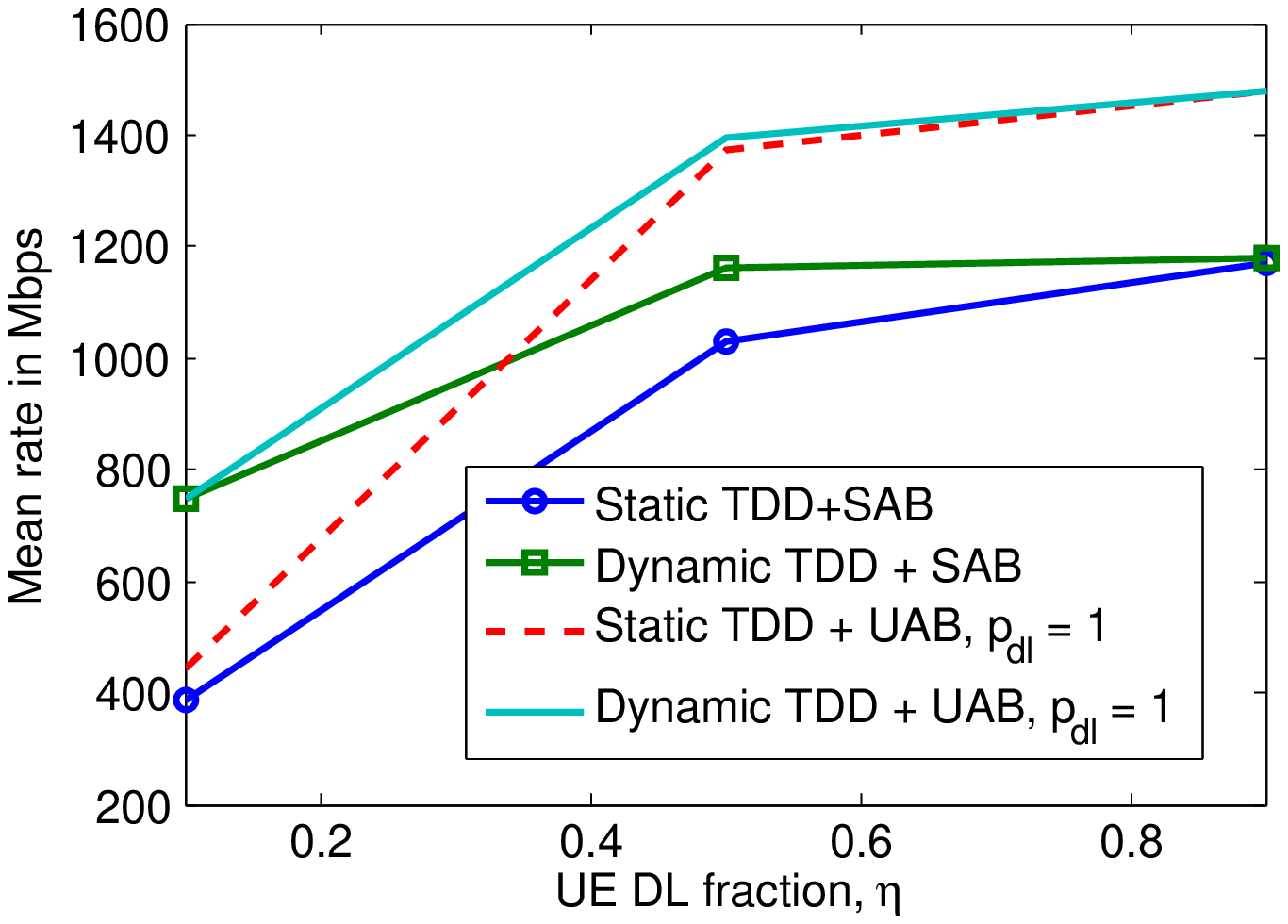}}}
\subfloat[Uplink]
{\label{fig:plot_12}{\includegraphics[width= \columnwidth]{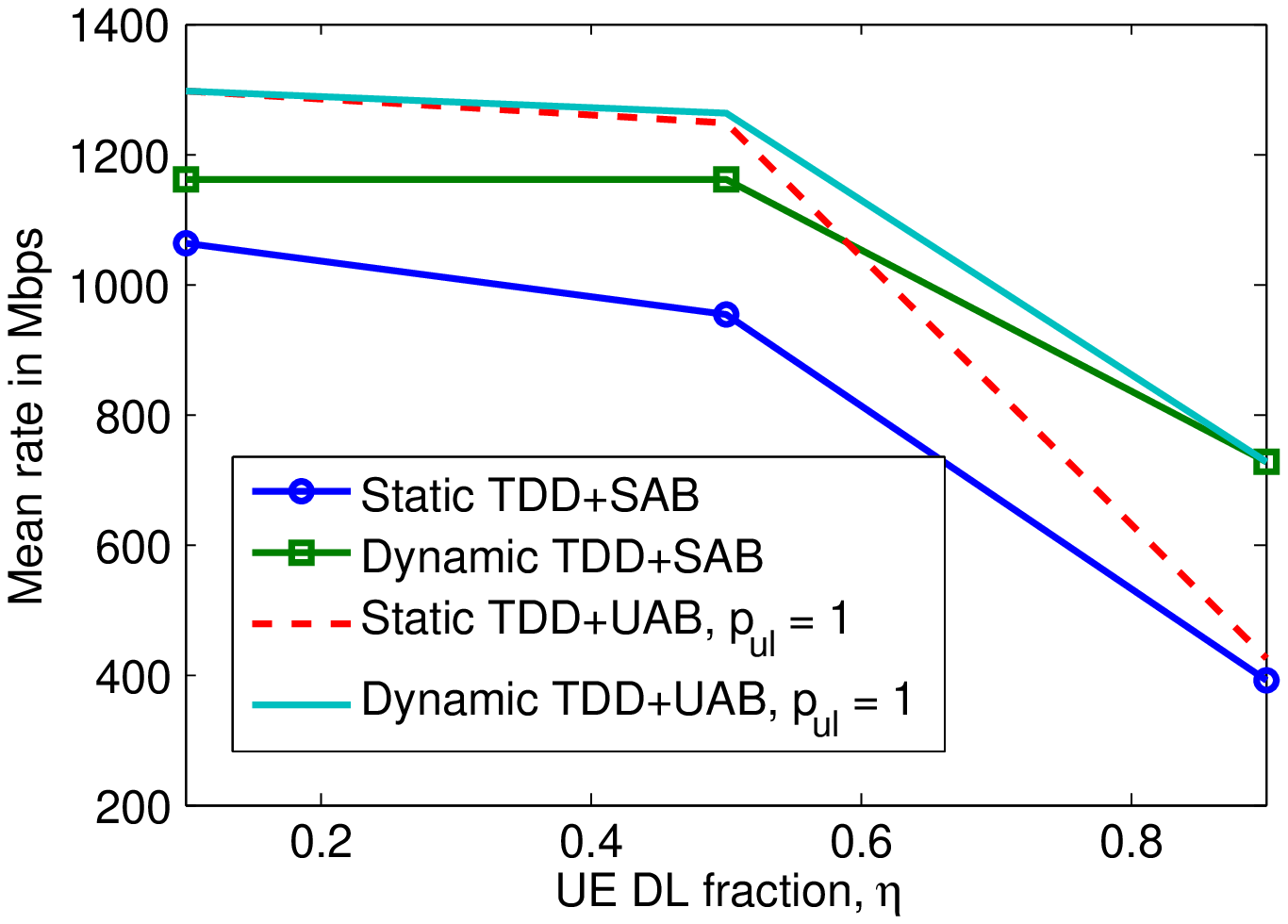}}}
\caption{Dynamic TDD with UAB gives $30\%$ gains over Static TDD with SAB in the noise-limited scenario at 73 GHz.}
 \label{fig:plot_1112}
\end{figure*}

\section{Conclusions}
\label{sec:conclusion}
This is the first comprehensive study of UL-DL SINR distribution and mean rates in dynamic TDD enabled mmWave cellular networks. A key analytical takeaway is how to explicitly incorporate TDD frame structures for resource allocation studies in self-backhauled cellular networks using stochastic geometry. Computing approximate yet fairly accurate Laplace transform of new types of interference that arise while studying dynamic TDD and UAB is another takeaway with variety of applications. It can be useful to study co-existence of device-to-device/Internet-of-Things applications with cellular networks, wherein unscheduled UEs operate on the same band but for non-cellular purposes.

From a system insights viewpoint, the key takeaways lie in the comparison of different TDD schemes as a function of different access-backhaul splits, UL/DL traffic asymmetry and the density of BSs. Dynamic TDD and UAB are intriguing as they address some key fallacies with conventional static TDD and SAB implementations, as highlighted in this work, and it is worth noting that these are in fact a class of scheduling policies. We expose the pros and cons of our heuristic implementations using the derived formulae under various network settings, and the observations arouse interest in their further investigation with more sophisticated traffic models, implementation of self-backhauling with much {\em stronger} backhaul links than the access links and more realistic deployment and propagation assumptions. In the future, several variations of the class of scheduling policies considered in this work can be studied. Extending the analysis to more than 2 hops is non trivial but desirable considering the recent interest to enable self-backhauling with as low an MBS density as possible. 

\begin{appendix}
\section{Appendices}
\subsection{Association probabilities}
\label{sec:App_assocprob}
From Lemma 1 in \cite{Shaer16}, for $t\in\{m,s\}$ the CCDF of $\min_{X\in\Phi_t} L(X,0)$ is given by $V_t(\tau) = \mathbb{P}\left(\min_{X\in\Phi_t} L(X,0)> \tau\right) =\exp\left(-\Lambda_t(\tau)\right)$, where 
\begin{multline}
\label{eq:fcap_s}
\Lambda_t(\tau)=\pi \lambda_t \Big( \left(\plos \tau^{\frac{2}{\alpha_l}} + (1-\plos)\tau^{\frac{2}{\alpha_n}}\right)\\\mathds{1}(\tau < \dlos^{\alpha_l})+ \tau^{\frac{2}{\alpha_n}} \mathds{1} (\tau > \dlos^{\alpha_n}) +  \\\left(\plos \dlos^2 + (1-\plos) \tau^{\frac{2}{\alpha_n}} \right)\mathds{1}( \dlos^{\alpha_l} \leq \tau \leq \dlos^{\alpha_n}) \Big).
\end{multline}
Here, $\Lambda_t(\tau)$ is the intensity of the propagation process $\{L(X,0): X\in\Phi_t\}$. The PDF of $\min_{X\in\Phi_t} L(X,0)$ is given by 
$v_t(\tau) = \frac{\mathrm{d}\Lambda_t(\tau)}{\mathrm{d}\tau} \exp\left(-\Lambda_t(\tau)\right),$
where $\frac{\mathrm{d}\Lambda_t(\tau)}{\mathrm{d}\tau} =$ 
\begin{multline}
\label{eq:LambdadT}
\frac{2 \pi \lambda_t \tau^{\frac{2}{\alpha_n}- 1}}{\alpha_n} \Bigg(\bigg(\frac{\alpha_n \plos \tau^{\frac{2}{\alpha_l} - \frac{2}{\alpha_n}}}{\alpha_l}  + 1-\plos\bigg)\mathds{1}(\tau < \dlos^{\alpha_l})\\
+(1-\plos)\mathds{1}( \dlos^{\alpha_l} \leq \tau \leq \dlos^{\alpha_n})+\mathds{1} (\tau > \dlos^{\alpha_n})\Bigg).
\end{multline} 
Define, $\Lambda_t(\mathrm{d}\tau)= \frac{\mathrm{d}\Lambda_t(T)}{\mathrm{d}T}\big|_{T=\tau} \mathrm{d}\tau$ which will be useful in the Appendix~\ref{sec:App_UL_Laplace}. The probability that a typical user at origin associates with a MBS is given by $\mathcal{A}_m  = $
\begin{align*}
&\mathbb{P}\left(\max_{X\in\MBSs} \power{m}
L(X,0)^{-1} G_{m} B_{m} > \max_{Y\in\SBSs} \power{s}
L(Y,0)^{-1} G_{s} B_{s}\right)\\
& = \int_{0}^{\infty} V_s\left(\frac{\power{s} G_{s} B_{s}\tau}{\power{m} G_{m} B_{m}}\right) v_m(\tau)\mathrm{d}\tau.
\end{align*}
If $\power{s} G_{s} B_{s}=\power{m} G_{m} B_{m}$,  $\mathcal{A}_m = \lambda_m/\lambda_b$. 
\subsection{Proof of Lemma~\ref{lem:Fad}}
\label{appendix:subframe}
The CDF of $\gamma_{a,D,X}\F_a$ is derived as follows.  $\mathbb{P}\left(\gamma_{a,D,X}\F_a>r\big| \F_a\right) = \mathbb{P}\left(\mathds{1}(N_{d,X}>0)\frac{N_{d,X}\F_a}{N_{d,X}+N_{u,X}}>r\big| \F_a\right)=p_1(r),$ which is computed using Assumption~\ref{assumption:2}. Similarly, $\mathbb{P}\left(\gamma_{a,D,X}\F_a\geq r\big| \F_a\right)=p_2(r)$ is derived. Now, let us denote $\tilde{\gamma}_{a,D,X} = \gamma_{a,D,X} \F_a - \lfloor\gamma_{a,D,X} \F_a\rfloor$. Thus,  
\begin{align*}
&\mathbb{P}\left(\F_{ad,D,X} = n\big| \F_a\right)= \mathbb{E}\left[\tilde{\gamma}_{a,D,X}\mathds{1}(\lceil\gamma_{a,D,X} \F_a\rceil = n)\right. \\
&\left. + \left(1-\tilde{\gamma}_{a,D,X}\right)\mathds{1}(\lfloor\gamma_{a,D,X} \F_a\rfloor = n)\big| \F_a\right] \\
& = \mathbb{E}\left[\tilde{\gamma}_{a,D,X}\mathds{1}(n-1<\gamma_{a,D,X}\F_a\leq n) \right.\\
&\left.+\left(1-\tilde{\gamma}_{a,D,X}\right) \mathds{1}(n\leq \gamma_{a,D,X}\F_a<n+1)\big| \F_a\right] \triangleq \mathbb{E}\left[\Xi\big| \F_a\right].
\end{align*} 
Since, $1\geq \Xi\geq 0$ the expectation can be computed as 
$\mathbb{E}\left[\Xi\big|\F_a\right] = \int_{0}^{1}\mathbb{P}\left(\Xi>r\big|\F_a\right)\mathrm{d}r.$ For $r = 1$, the probability inside the integral is zero and for $r<1$,
\begin{align*}
&\mathbb{P}\left(\Xi>r\big|\F_a\right) = \mathbb{P}\left(n+r-1<\gamma_{a,D,X}\F_a<n+1-r\big|\F_a\right)\\& = p_1(n+r-1)-p_2(n+1-r).
\end{align*}

\subsection{Laplace functional of interference for computing access UL SINR}
\label{sec:App_UL_Laplace}
{\bf Approximation 1:} Interference from MBS, SBS and UE is assumed independent of each other. Thus,  $L^{ul,a,t,\mu}_{i,w}(\s,R)\approx$
\begin{align*}
&\prod_{\nu\in\{m,s,u\}} \mathbb{E}\left[\exp\left(-\s I_{i,\nu,w}(X^*)\right)\Big| X^* \in \Phi_{t,\mu}, ||X^*|| = R, \mathcal{F}\right] \\&= L_m L_s L_u. 
\end{align*}
\subsubsection{$i\leq\F_a$}
\paragraph{Interference from MBSs and SBSs} This is non-zero only with dynamic TDD for access subframe. For $\nu\in\{m,s\}$ the Laplace transform can be simplified as follows. By superposition of PPPs, $\Phi_\nu = \Phi_{\nu, l}+ \Phi_{\nu, n}$, wherein both the child processes are independent non-homogeneous PPPs with intensities $\lambda_\nu \plos\mathds{1}(x\leq\dlos) $ and $\lambda_\nu (1-\plos\mathds{1}(x\leq\dlos)) $.  Further, by strong Markov property of PPPs, replacing the shot noise of interference by that from independent copies of the PPPs, \eqref{eq:longequationappendix1} is derived, 
\begin{figure*}[!t]
\begin{align}
\label{eq:longequationappendix1}
\nonumber L_{\nu}
&=\mathbb{E}\left[\exp\left(-\s \sum_{\mu_1\in\{l,n\}}\sum_{Y\in\Phi_{\nu,\mu_1}}  \mathds{1}(i\leq \F_{ad,w,Y}, N_{d,Y}>0) \mathds{1}\left(||Y||^{\alpha_{\mu_1}}>R^{\alpha_\mu}\frac{\power{\nu} G_{\nu} B_{\nu}}{\power{t} G_t B_t}\right) \mathrm{C}_0 \power{\nu} h_{i,X^*,Y} G_{i,X^*,Y} \right.\right.\\
\nonumber&\left.\Bigg.\hspace{1cm}L(X^*,Y)^{-1}\Bigg)\Bigg| X^*\in\Phi_{t,\mu}, ||X^*|| = R, \mathcal{F}\right]\\
\nonumber&=\mathbb{E}\left[\exp\left(-\s \sum_{\mu_1\in\{l,n\}}\sum_{\mu_2\in\{l,n\}}\sum_{Y\in {\Phi}_{\nu,\mu_1,\mu_2}} \mathds{1}(i\leq \F_{ad,w,Y}) \mathds{1}\left(||Y||^{\alpha_{\mu_1}}>R^{\alpha_\mu}\frac{\power{\nu} G_{\nu} B_{\nu}}{\power{t} G_t B_t}\right) \mathds{1}(N_{d,Y}>0) \mathrm{C}_0 \power{\nu} h_{i,X^*,Y}\right.\right.\\
&\hspace{1cm}\left.\Bigg. G_{i,X^*,Y} ||X^*-Y||^{-\alpha_{\mu_2}}\Bigg)\Bigg| X^*\in\Phi_{t,\mu}, ||X^*|| = R, \mathcal{F}\right].
\end{align}
\hrulefill
\end{figure*}
where $\Phi_{\nu,\mu_1,\mu_2}$ are BSs of tier $\nu$ which have type $\mu_1\in\{l,n\}$ links to the origin and type $\mu_2\in\{l,n\}$ links to $X^*$. Given, $||X^*||=R$, $\Phi_{\nu,\mu_1,\mu_2}$ is a PPP with density  
$\hat{\lambda}_{\nu,\mu_1,\mu_2}(r,\theta)=\lambda_\nu p_{\mu_1}(r)p_{\mu_2}\left(\sqrt{r^2+R^2-2rR\cos(\theta)}\right)$. Further simplifying, the above expression is equal to 
\begin{align*}
& \prod_{\mu_1,\mu_2} \exp\left(- \int^{\infty}_{\left(\frac{R^{\alpha_\mu}\power{\nu} G_{\nu} B_{\nu}}{\power{t} G_t B_t}\right)^{1/\alpha_{\mu_1}}}\right.\\&\left.
\int^{2\pi}_{0}\mathbb{E}\left[\frac{ p_{i,w,\nu} \hat{\lambda}_{\nu,\mu_1,\mu_2}(r,\theta) r}{1+\frac{\left(r^2+R^2-2rR\cos(\theta)\right)^{\alpha_{\mu_2}/2}}{\s \mathrm{C}_0\power{\nu} \Psi_{t, \nu}}}\right] \mathrm{d}r\mathrm{d}\theta\right),
\end{align*}
where 
\begin{equation}
\label{eq:piwd}
p_{i,w,\nu} = \mathbb{P}\left(N_{d}>0, i\leq \F_{ad,w}\big| \mathcal{F}\right) = \sum_{n=i}^{\F_a}\mathbb{P}\left(\F_{ad,D}=n\big| \mathcal{F}\right),
\end{equation}
which can be computed using Lemma~\ref{lem:Fad}. 

Note that the lower limit of integral on $r$ is exactly the value of $\s$ from \eqref{eq:ULSINRdist_access}. Thus, rewriting the equation with change of variables  $\rho~=~r\left(\frac{R^{\alpha_\mu}\power{\nu} G_{\nu} B_{\nu}}{\power{t} G_t B_t}\right)^{-1/\alpha_{\mu_1}} $ 
is easier to implement on MATLAB. An even easier implementation, which is in fact a lower bound to the Laplace functional, can be obtained by neglecting the $\mathds{1}(||Y||^{\alpha_{\mu_1}}>(.))$ term in the above derivation, which gives lower bound in Lemma~\ref{lem:LaplaceULaccess}. 

\paragraph{Interference from UEs} $\mathbb{E}[\exp\left(-\s I_{i,u,w}\right)\big| X^*\in\Phi_{t,\mu}, ||X^*|| = R,  \mathcal{F}]$ can be computed using a non-homogeneous PPP approximation inspired from \cite{SinZhaAnd15}.  \setcounter{approximation}{1}
\begin{approximation} 
Laplace transform of interference from scheduled device process ($\Phi_1$) connected to a PPP BS process ($\Phi_2$) to a receiver under consideration is approximated by that generated from an independent PPP device process $\Phi_3$ with same intensity as $\Phi_2$. Further thinning is done $\Phi_3$ to approximate the pair correlation function by taking into consideration the association of points in $\Phi_3$ to those in $\Phi_2$\cite{SinZhaAnd15}. 
\end{approximation}
Thus, conditioned on the event that the tagged BS $X^*$ is of tier $t$, the propagation process of interfering UEs is approximately equal in distribution to an independent non-homogeneous PPP on $\mathbb{R}^+$ with intensity \begin{multline}
\label{eq:piwu}
\Lambda(t,\mathrm{d}r) = \sum_{k\in\{m,s\}} \left(1-\exp\left(-\Lambda_k\left(r\frac{\power{k}B_k G_k}{\power{t}B_t G_t}\right)\right)\right)\\\times p_{i,w,k} \Lambda_k(\mathrm{d}r),
\end{multline}
with $\Lambda_k(\mathrm{d}r) = \frac{\mathrm{d}\Lambda_k(x)}{\mathrm{d}x}\Big|_r \mathrm{d}r$, and  $p_{i,w,k} = \mathbb{P}\left(N_u>0, \F_{ad,w}<i\leq \F_a\big| \mathcal{F}\right)$. 
Note that $p_{i,w,k}$ is captures the active probability of interferer in the $i^\text{th}$ slot and the non-idle probability of parent BS process. The $1-\exp\left(.\right)$ term ensures that the biased received power from at least one of the points in $\Phi_k$ is better than that from the BS at $X^*$\cite{SinZhaAnd15}.
Thus,
\begin{equation}
\label{eq:ulintue}
L_u \approx\exp\left(-\int_{0}^{\infty}\mathbb{E}\left[\frac{1}{1+ \frac{r}{\s\mathrm{C}_0\power{u}\Psi_{t,u}}}\right]\Lambda(t,\mathrm{d}r)\right).
\end{equation}
Here, $p_{i,S,k} = \mathbb{P}\left(N_{u,X}>0, \F_{ad,S,X}<i\leq \F_a\big| \mathcal{F}\right) = \left(1-\left(1+\frac{\lambda_u \mathcal{A}_k(1-\eta)}{3.5\lambda_k}\right)^{-3.5} \right)\mathds{1}\left(\F_{ad}<i\leq \F_a \right).$ Since, an UL UE is only scheduled in access subframe for $\F_{ad}<i\leq \F_a$ with static TDD, the indicator in previous expression will always be 1 for feasible UL access SINR distributions. Similarly, $p_{i,D,k}$
\begin{align*}
& =\mathbb{P}\left(\F_{ad,w}<i\leq \F_a\big| \mathcal{F}\right) - \mathbb{P}(N_u = 0, \F_{ad,w}<i\leq \F_a| \mathcal{F}) \\&=  \mathbb{P}\left(\F_{ad,w}<i\leq \F_a\big| \mathcal{F}\right) - \mathbb{P}(N_u = 0\big| \mathcal{F} )
\end{align*}
The first term can be found by substituting $t=k$ in Lemma~\ref{lem:Fad} and the second term is $\left(1+\frac{\lambda_u\mathcal{A}_k (1-\eta)}{3.5\lambda_k}\right)^{-3.5}$.
\subsubsection{$i>\F_a$ and $w = \text{UAB}$}
Note that if we are computing Laplace functional of interference at an UL receiver of an access link for $i>\F_a$, by definition we are operating in $w=\text{UAB}$ mode with $X^*\in\Phi_s$. 
In this case there is no interference from MBSs. 
\paragraph{Interference from SBSs}
The interference from SBSs can be computed similar to the previous case on interfering UEs with $i<\F_a$. However, we need to incorporate the fact that the MBS serving $X^*$ has an interfering SBS scheduled with probability 1 but other MBSs may not have a scheduled SBS with probability $p_{i,w,s} = \left(1+\frac{\lambda_{s,u}}{3.5\lambda_m}\right)^{-3.5}$ with $\lambda_{s,u} = \lambda_s\left(1-\left(1+\frac{\mathcal{A}_s \lambda_u (1-\eta)}{3.5\lambda_s}\right)^{-3.5} \right)$. Thus, the following version of approx. 2 is employed. The point closest to $X^*$ in the new interfering PPP is active with probability 1 and rest of the points are active with probability $p_{i,w,s}$. This gives the corresponding expression in Lemma~\ref{lem:LaplaceULaccess2}.

\paragraph{Interference from UEs}
By approximation 2, the interfering PPP process has intensity equal to $\lambda_s$. A further thinning by $\frac{\hat{\lambda}}{\lambda_s}$ is done, where $\hat{\lambda} =$
\begin{multline*}
 \mathds{1}\left(\F_a+\F_{bd}<i\leq \F\right)\left(1-\left(1+\frac{\lambda_u (1-\eta)\mathcal{A}_s
}{3.5\lambda_s}\right)^{-3.5} \right)\\ \times p_{ul}\left(\lambda_s-\left(1-\left(1+\frac{\lambda_s }{3.5\lambda_m}\right)^{-3.5} \right)\lambda_m\right)^+ , 
\end{multline*}
where $a^+ = a$ if $a>0$ and zero otherwise. This captures that there will be at most 1 scheduled UE from every SBS with poaching probability $p_{ul}$ 
except those SBSs which are scheduled by their serving MBS. Thus, the Laplace functional is same as \eqref{eq:ulintue} but with $\Lambda(t,\mathrm{d}r)$ replaced by $\frac{\hat{\lambda}}{\lambda_s}\left(1-\exp\left(-\Lambda_s(r)\right)\right)\Lambda_s(\mathrm{d}r)$, where 1-exp(.) accounts for the probability that the interfering UEs don't associate with the  SBS at $X^*$.

\subsection{Uplink mean rate}
\label{App_ULrate}
In the following derivation of UL mean rate, $w_a\in\{S,D\}$ and $w_b\in\{\mathrm{SAB},\mathrm{UAB}\}$. 
\begin{align*}
&\mathrm{R}_{ul,m,w_a} =\frac{\mathbb{E}\left[\mathrm{D}_{ul,m,w_a}\big| \mathcal{E}_m\right]}{\mathrm{T}\F}\\& = \frac{\BW}{\F}  \mathbb{E}\left[\sum_{i=1+\F_{ad,w_a,X^*}}^{\F_{a}}\mathds{1}\left(\text{UE scheduled in $i^\text{th}$ slot}\right)\right.\\&\times\left.\log_2\left(1+\mathsf{SINR}^{ul}_{i,a,w_a}\right)\Bigg| \mathcal{E}_m\right]=\frac{\BW}{\F}  \mathbb{E}\left[\sum_{i=1+\F_{ad,w_a,X^*}}^{\F_{a}}\right.\\
&\left. \frac{\mathbb{E}\left[\log_2\left(1+\mathsf{SINR}^{ul}_{i,a,w_a}\right)\Big| \F_a, N_{u,X^*},N_{d,X^*},  \mathcal{E}_m\right]}{N_{u,X^*}}\Bigg| \mathcal{E}_m\right]\\
&= \frac{\BW}{\F}  \mathbb{E}\left[\sum_{i=1+\F_{ad,w_a,X^*}}^{\F_{a}}\frac{1}{N_{u,X^*}}\int_{0}^{\infty}\frac{\mathsf{S}^{ul,m}_{i,a,w_a}(\tau)}{1+\tau}\mathrm{d}\tau\Bigg| \mathcal{E}_m\right], %\\
\end{align*}
where distribution of $\F_{ad,D,X^*}$ given $\gamma_{a,D,X} = \frac{n_2}{n_1+n_2+1}$ is given by \eqref{eq:FadS}. Similarly, given the constant $\gamma_a$ the distribution of $\F_{ad,S,X^*}$ can also be found from \eqref{eq:FadS}. To compute $\mathrm{R}_{ul,s,w_a,w_b}$, let us look at each of the expectations inside the minimum one by one. 
\begin{align*}
&\mathbb{E}\left[\mathrm{D}_{ul,s,a,w_a,w_b}\big| \mathcal{E}_s\right] \\
&= \bwidth \mathrm{T} \mathbb{E}\left[\sum_{i=1+\F_{ad,w_a,X ^*}}^{\F_a}\frac{1}{N_{u,X^*}}\int_{0}^{\infty}\frac{\mathsf{S}^{ul,s}_{i,a,w_a}(\tau)}{1+\tau}\mathrm{d}\tau \Bigg| \mathcal{E}_s\right]\\
&+\mathds{1}(w_b= \mathrm{UAB})\bwidth \mathrm{T}\\ &\mathbb{E}\left[\sum_{i=1+\F_a+\F_{bd}}^{\F}\left(1-\frac{1}{N_{s,X^{**}}}\right)
\frac{p_{ul}}{N_{u,X^*}}\int_{0}^{\infty}\frac{\mathsf{S}^{ul,s}_{i,a,w_b}(\tau)}{1+\tau}\mathrm{d}\tau \Bigg| \mathcal{E}_s\right], 
\end{align*}
where $N_{s,X^{**}}$ is the number of SBSs associated with $X^{**}$ with at least one UL UE. Similarly, 
\begin{align*}
&\mathbb{E}\left[\mathrm{D}_{ul,s,b,w_b}\big| \mathcal{E}_s\right]= \bwidth \mathrm{T} \mathbb{E}\left[1/N_{s,X^{**}}\right] \sum_{n=1}^{\infty}\frac{\kappa^*_{s,ul}(n)}{n} \\&\hspace{1cm}\times\mathbb{E}_{\mathcal{F}} \int_{0}^{\infty}\frac{\sum_{i=1+\F_{a}+\F_{bd}}^{\F}\mathsf{S}^{ul,s}_{i,b,w_b}(\tau)}{1+\tau}\mathrm{d}\tau.
\end{align*}

\subsection{Laplace functional of interference for access DL SINR}
\label{sec:App_DL_Laplace}
The main difference with UL case is that now the receiver is at origin instead of at $X^*$. Thus, different exclusion regions need to be considered while computing the shot noise. By approximation 1, 
\begin{align*}
&L^{dl,a,t,\mu}_{i,w}(\s,R) \\&\approx \prod_{\nu\in\{m,s,u\}} \mathbb{E}\left[\exp\left(-\s I_{i,\nu,w}(0)\right)\Big| X^* \in \Phi_{t,\mu}, ||X^*|| = R, \mathcal{F}\right].
\end{align*}
\paragraph{$i\leq \F_a$}
For $\nu\in\{m,s\}$,
\begin{align*}
&\mathbb{E}\left[\exp\left(-\s I_{i,\nu,w}(0)\right)\Big| X^* \in \Phi_{t,\mu}, ||X^*|| = R, \mathcal{F}\right]\\
&=\exp\left(-\int_{R^{\alpha_\mu}}^{\infty}\mathbb{E}\left[\frac{1}{1+\frac{r}{\s \mathrm{C}_0 \power{\nu}\Psi_{t,u}}}\right]p_{i,w,d}\Lambda_{\nu}(\mathrm{d}r)\right),
\end{align*}
where $p_{i,w,\nu}$ is given \eqref{eq:piwd} and $\Lambda_\nu(\mathrm{d}r)$ was defined in Appendix~\ref{sec:App_assocprob}.  Note that this is exact expression.

For $\nu=u$, there will non-zero interference only with dynamic TDD. By approximation 2, we compute the Laplace functional of interference from UEs is generated from two independent PPPs -- for SBS/MBS connection -- as follows. 
\begin{align*}
&\mathbb{E}\left[\exp\left(-\s I_{i,u,w}(0)\right)\Big| X^* \in \Phi_{t,\mu}, ||X^*|| = R, \mathcal{F}\right]\\
&\approx\exp\left(-\int_{0}^{\infty}\mathbb{E}\left[\frac{1}{1+\frac{r}{\s \mathrm{C}_0 \power{u}\Psi_{u,u}}}\right]\sum_{k\in\{m,s\}}p_{i,w,k}\Lambda_{k}(\mathrm{d}r)\right),
\end{align*}
where $p_{i,w,k}$ can be found just after \eqref{eq:ulintue}. 
\paragraph{$i>\F_a$}
In backhaul subframe, a DL UE is scheduled for access only if $\F_a<i\leq \F_a+\F_{bd}$, $w=$UAB and the UE connects to a SBS.  Thus, there is interference only from MBSs and SBSs. 
\begin{align*}
&\mathbb{E}\left[\exp\left(-\s I_{i,m,w}(0)\right)\Big| X^* \in \Phi_{s,\mu}, ||X^*|| = R, \mathcal{F}\right]\\
&=\exp\left(-\int_{R^{\alpha_\mu}}^{\infty}\mathbb{E}\left[\frac{1}{1+\frac{r}{\s \mathrm{C}_0 \power{m}\Psi_{m,u}}}\right]p_{i,w,m}\Lambda_{m}(\mathrm{d}r)\right),
\end{align*}
where $$p_{i,w,m} = \mathds{1}(\F_a<i\leq \F_a+\F_{bd})\left(1-\left(1+\frac{\lambda_{s,d}}{3.5\lambda_m}\right)^{-3.5}\right)$$ with $\lambda_{s,d} = \lambda_s\left(1-\left(1+\frac{\lambda_u \eta\mathcal{A}_s}{3.5\lambda_s}\right)^{-3.5}\right)$. 

To compute $\mathbb{E}\left[e^{-\s I_{i,s,w}(0)}\Big| X^* \in \Phi_{s,\mu}, ||X^*|| = R, \mathcal{F}\right]$, we make the following approximation similar to the corresponding UL case for poaching. The SBS interferers form an independent homogeneous \PPP $ $ with density given by 
\begin{multline*}
\hat{\lambda}_{d} =\left(\lambda_s-\left(1-\left(1+\frac{\lambda_s}{3.5\lambda_m}\right)^{-3.5}\right)\lambda_m\right)^+\times \\
p_{dl}\mathds{1}\left(\F_a<i\leq \F_a+\F_{bd}\right)\left(1-\left(1+\frac{\lambda_u \eta\mathcal{A}_s}{3.5\lambda_s}\right)^{-3.5}\right).
\end{multline*}
Thus, we get 
\begin{align*}
&\mathbb{E}\left[\exp\left(-\s I_{i,s,w}(0)\right)\Big| X^* \in \Phi_{s,\mu}, ||X^*|| = R, \mathcal{F}\right]\\
&\approx\exp\left(-\int_{0}^{\infty}\mathbb{E}\left[\frac{1}{1+\frac{r}{\s \mathrm{C}_0 \power{s}\Psi_{s,u}}}\right]\frac{\hat{\lambda}_{d}}{\lambda_s} \Lambda_{s}(\mathrm{d}r)\right).
\end{align*}
\end{appendix}
\section*{Acknowledgment}
The authors wish to thank R. Heath, E. Visotsky, F. Vook, M. Cudak and A. Gupta for valuable advice, encouragement, and discussions.
\bibliographystyle{IEEEtran}
\bibliography{IEEEabrv,Kulkarni}
\begin{IEEEbiography} [{\includegraphics[width=1in,height=1.25in,clip,keepaspectratio]{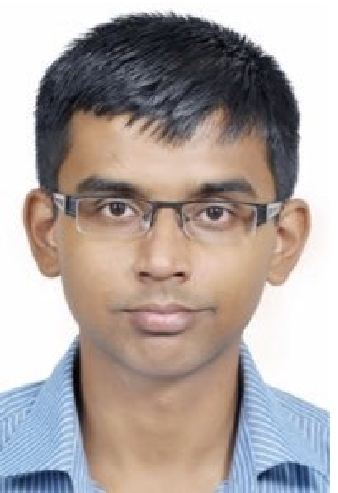}}]
{Mandar~N.~Kulkarni} (S'13) received M.S. in Electrical Engineering from the University of Texas at Austin in 2015 and B.Tech in Electronics and Communications Engineering from the Indian Institute of Technology (IIT) Guwahati in 2013. He is currently working towards Ph.D. in Electrical Engineering at the University of Texas at Austin. His research interests are broadly in the field of wireless communication, with current focus on investigating system design issues in dense urban millimeter wave cellular networks. He has held internship positions at Nokia Bell Labs, Murray Hill, NJ, U.S.A. (2016); Nokia Networks, Arlington Heights, IL, U.S.A.(2014, 2015, 2017); Technical University, Berlin, Germany (2012); and Indian Institute of Science, Bangalore, India (2011). He received the President of India Gold Medal in 2013. 
\end{IEEEbiography}

\begin{IEEEbiography} [{\includegraphics[width=1in,height=1.25in,clip,keepaspectratio]{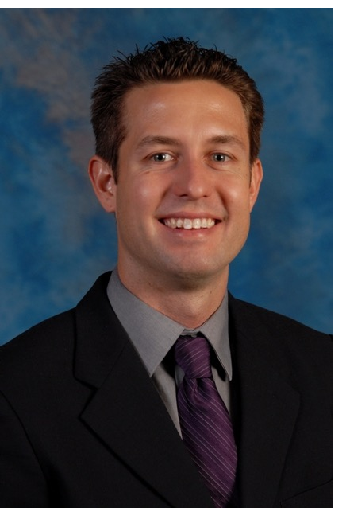}}]
{Jeffrey~G.~Andrews} (S'98$-–$M'02$–-$SM'06$–-$F'13) received the B.S. in Engineering with High Distinction from Harvey Mudd College, and the M.S. and Ph.D. in Electrical Engineering from Stanford University.  He is the Cullen Trust Endowed Professor ($\#$1) of ECE at the University of Texas at Austin.  He was the Editor-in-Chief of the IEEE Transactions on Wireless Communications from 2014-2016. He developed Code Division Multiple Access systems at Qualcomm from 1995-97, and has consulted for entities including Apple, Samsung, Verizon, AT$\&$T, the WiMAX Forum, Intel, Microsoft, Clearwire, Sprint, and NASA.  He is a member of the Technical Advisory Board of Artemis Networks, GenXComm, and Fastback Networks, and co-author of the books Fundamentals of WiMAX (Prentice-Hall, 2007) and Fundamentals of LTE (Prentice-Hall, 2010).  

Dr. Andrews is an ISI Highly Cited Researcher, received the National Science Foundation CAREER award in 2007 and has been co-author of fourteen best paper award recipients including the 2016 IEEE Communications Society $\&$ Information Theory Society Joint Paper Award, the 2011 and 2016 IEEE Heinrich Hertz Prize, the 2014 IEEE Stephen O. Rice Prize, and the 2014 IEEE Leonard G. Abraham Prize.  He received the 2015 Terman Award, is an IEEE Fellow, and is an elected member of the Board of Governors of the IEEE Information Theory Society. 
\end{IEEEbiography}

\begin{IEEEbiography} [{\includegraphics[width=1in,height=1.25in,clip,keepaspectratio]{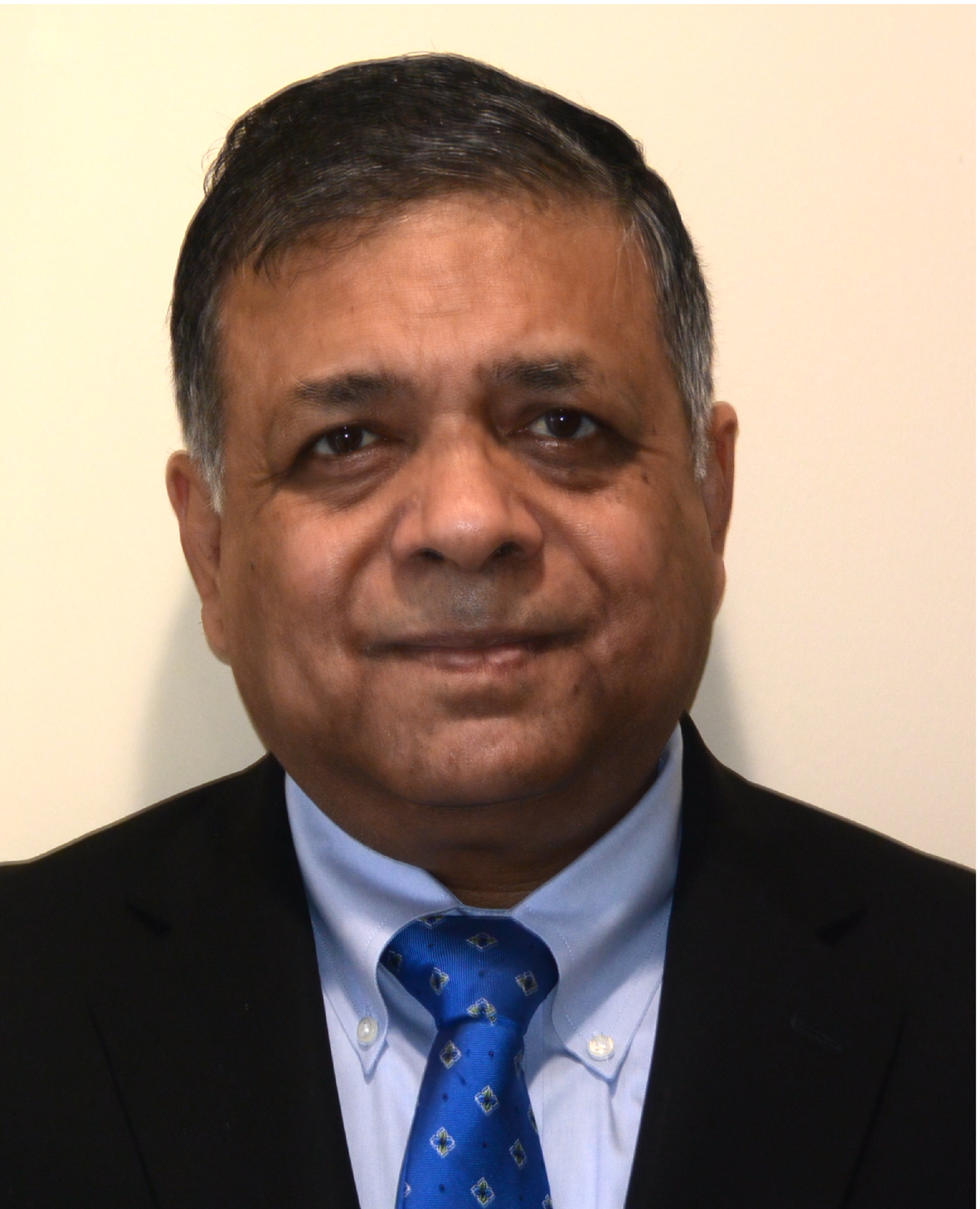}}]
{Amitabha (Amitava) Ghosh} (M'86$–-$SM'06$–-$F'15) is a Nokia Fellow and Head, Small Cell Research at Nokia Bell Labs. He joined Motorola in 1990 after receiving his Ph.D in Electrical Engineering from Southern Methodist University, Dallas.  Since joining Motorola he worked on multiple wireless technologies starting from IS-95, cdma-2000, 1xEV-DV/1XTREME, 1xEV-DO, UMTS, HSPA, 802.16e/WiMAX and 3GPP LTE. Dr. Ghosh has 60 issued patents, has written multiple book chapters and has authored numerous external and internal technical papers. He is currently working on 3GPP LTE-Advanced and 5G technologies. His research interests are in the area of digital communications, signal processing and wireless communications. He is a Fellow of IEEE, recipient of 2016 IEEE Stephen O. Rice prize and co-author of the book titled “Essentials of LTE and LTE-A”. 
\end{IEEEbiography}

\end{document}